\newtheorem{theorem}{Theorem}[section]
\crefname{theorem}{Theorem}{Theorems}
\newtheorem*{theorem*}{Theorem}
\newtheorem*{proposition*}{Proposition}
\newtheorem*{question*}{Main Question}
\newtheorem{claim}[theorem]{Claim}
\crefname{claim}{Claim}{Claims}
\newtheorem{proposition}[theorem]{Proposition}
\crefname{proposition}{Proposition}{Propositions}
\newtheorem{corollary}[theorem]{Corollary}
\crefname{corollary}{Corollary}{Cropositions}
\newtheorem{lemma}[theorem]{Lemma}
\crefname{lemma}{Lemma}{Lemmas}
\newtheorem{observation}[theorem]{Observation}
\crefname{observation}{Observation}{Observations}
\theoremstyle{plain}
\crefname{question}{Main Question}{Main Questions}
\newtheorem{example}[theorem]{Example}
\crefname{example}{Example}{Examples}
\newtheorem{definition}[theorem]{Definition}
\crefname{definition}{Definition}{Definitions}
\newcommand{\bs}{\mathbf s}
\newcommand{\bv}{\mathbf v}
\newcommand{\bb}{\mathbf b}
\newcommand{\snoi}{{\mathbf s}_{-i}}
\newcommand{\SW}{\textsc{SW}}
\newcommand{\opt}{\normalfont\textsc{Opt}}
\newcommand{\tildeopt}{\widetilde{\opt}}
\newcommand{\tildem}{\widetilde{m}}
\newcommand{\eq}{\normalfont\textsc{Eq}}
\newcommand{\epoa}{\normalfont\textsc{EP-PoA}}
\newcommand{\npoa}{\normalfont\textsc{NE-PoA}}
\newcommand{\bpoa}{\normalfont\textsc{B-PoA}}
\newcommand{\epe}{\textit{EPE}}
\newcommand{\pne}{\textit{PNE}}
\newcommand{\bne}{\textit{BNE}}
\newcommand{\self}{\mathsf{SELF}}
\newcommand{\other}{\mathsf{OTHER}}
\newcommand{\s}{\mathbf{s}}
\newcommand{\bt}{\mathbf{t}}
\newcommand{\avec}{\mathbf{a}}
\newcommand{\bids}{\mathbf{b}}
\newcommand{\E}{\mathbb{E}}
\newcommand{\Exp}{\operatorname{\E}}
\newcommand{\I}{\mathbb{I}}
\newcommand{\F}{\mathcal{F}}
\newcommand{\argmax}{\operatorname*{argmax}}
\newcommand{\Var}{\operatorname{Var}}
\newcommand{\Cov}{\operatorname{Cov}}
\newcommand{\MyFrame}[1]{\noindent \framebox[\textwidth]{ \begin{minipage}{0.97\textwidth} #1 \end{minipage}}}%
\begin{document}
\title{Price of Anarchy of Simple Auctions with Interdependent Values%
\thanks{
	The work of A. Eden, M. Feldman and O. Zviran  was partially supported by the European Research Council (ERC) under the European Union's Horizon 2020 research and innovation program (grant agreement No. 866132), and by the Israel Science Foundation (grant number 317/17).
	The work of I. Talgam-Cohen was supported by the ISRAEL SCIENCE FOUNDATION (grant No. 336/18) and by the Taub Family Foundation.
}
}
\author{Alon Eden%
\thanks{%
    {Harvard University(\url{aloneden@seas.harvard.edu})}}
\and Michal Feldman%
\thanks{%
    {Tel Aviv University (\url{michal.feldman@cs.tau.ac.il})}}
\and Inbal Talgam-Cohen%
\thanks{%
    {Technion, Israel Institute of Technology (\url{italgam@cs.technion.ac.il})}}
\and Ori Zviran%
\thanks{%
    {Tel Aviv University(\url{orizviran@mail.tau.ac.il})}}
}

\maketitle
\begin{abstract}
	

We expand the literature on the price of anarchy (PoA) of simultaneous item auctions by considering settings with correlated values; we do this via the fundamental economic model of {\em interdependent values (IDV)}.
It is well-known that in multi-item settings with private values, correlated values can lead to bad PoA, which can be polynomially large in the number of agents~$n$.
In the more general model of IDV, we show that the PoA can be polynomially large even in single-item settings.
On the positive side, we identify a natural condition on information dispersion in the market, termed \emph{$\gamma$-heterogeneity}, which enables good PoA guarantees. 
Under this condition, we show that for single-item settings, the PoA of standard mechanisms degrades gracefully with~$\gamma$.
For settings with $m>1$ items we show a separation between two domains: 
If $n \geq m$, we devise a new simultaneous item auction with good PoA (with respect to $\gamma$), under limited information asymmetry.
To the best of our knowledge, this is the first positive PoA result for correlated values in multi-item settings.
The main technical difficulty in establishing this result is that the standard tool for establishing PoA results --- the smoothness framework --- is unsuitable for IDV settings, and so we must introduce new techniques to address the unique challenges imposed by such settings.
In the domain of $n \ll m$, we establish impossibility results even for surprisingly simple scenarios.

%
%
%
%
%
%

\end{abstract}

\section{Introduction}
\label{sec:intro}
We study simple and practical mechanisms for selling heterogeneous items to unit-demand agents,%
\footnote{Where an agent has a value for each item, and her value for a bundle is the maximal value for a bundle's item.}
with the goal of maximizing social welfare. 
While much of the literature focuses on truthful mechanisms (for which it is in the agents' best interest to report their true values for the items), 
such mechanisms are often complicated, computationally and cognitively demanding, and must be run in a centralized manner~\citep{Dobzinski11,li2017obviously,ausubel2006lovely}. In many real-life settings like e-commerce, 
the mechanisms used in practice are simple and run in a distributed manner, but are non-truthful. A prime example is the auctions run by e-Bay, where each item is sold separately in a second-price auction, so that an agent interested in winning at most one auction may be better off reporting lower than her true values, to avoid multiple wins.

Motivated by such settings,~\cite{CKS16} pioneered the study of \textit{simultaneous item auctions}, where a single-item auction is run for each item separately. Since such auctions are non-truthful, their performance is measured by their \textit{Bayesian Price of Anarchy} (\bpoa) ---  	 the ratio between the optimal welfare and the welfare guarantee of the worst equilibrium.
Christodoulou et al.~and many follow-ups show that this simple format achieves near-optimal welfare in combinatorial settings when the auction in use is the first- or second-price auction (see~\cite{RoughgardenST17} for an overview of results).
Importantly, all prior works assume independence of different buyers' valuations, since otherwise the PoA might be polynomial in the number of agents~\citep{BhawalkarR11,FeldmanFGL13,Roughgarden14}. 

There are many settings in which the independence assumption is unrealistic. For instance, if an item being auctioned has the potential of being resold in the future, this potential factors into agents' values for the item, creating dependence on how others value it. Such dependence is formally captured by the \textit{interdependent values (IDV)} model~\citep{MilgromW82}, which recently received a surge of interest following the recent awarding of the Nobel prize in economics to Milgrom and Wilson for their work on auction theory.
In this model, the correlation of values stems directly from one of the most fundamental aspects of a market --- the way in which information is dispersed among agents. In more detail, each agent has a privately-known \emph{signal}, which captures her partial knowledge about the items for sale. Her value for each item is a function of all the information on the market related to this item; that is, of her own signal as well as the signals of all other agents (which are unknown to her). Since the valuations of different agents depend on the same signals, values are correlated.

IDV settings have been studied extensively in the economic literature since \citet{MilgromW82}, and in the computer science literature \citep{ItoP06,ConstantinIP07,RobuPIJ13,CFK,RoughgardenT16,EdenFFGK19}; see \citet{krishna2009auction} for an overview and Section~\ref{sec:related-work} for additional related work.
Realistic scenarios captured by this model include common value auctions~\citep{klemperer1998auctions}, mineral rights auctions~\citep{wilson1969communications} and resale~\citep{myerson1981optimal,RoughgardenT16}. We use the latter as a running example throughout the paper:

\begin{example}[Resale model] \label{example:running}
	Let $s_i$ be the private signal of agent $i\in [n]$, distributed uniformly between $0$ and $1$. This signal captures the appreciation of agent $i$ for the item for sale. Agent $i$'s  \textit{valuation function} is $v_i(\s)= s_i+\beta\sum_{j\ne i} s_j$ for some parameter $\beta\in(0,1)$. That is, the agent also takes into account the others' assessment of the item
	(possibly since she plans to resell the item eventually). 
\end{example}

Our main question is: 

\begin{question*}
	What is the B-PoA of simultaneous item auctions with IDV?
\end{question*}

The bounds we establish partially answer an open question of \cite{RoughgardenST17} regarding natural, economically-meaningful forms of correlation for which the B-PoA of simultaneous item auctions is bounded.

\subsection{Challenges}
\label{sec:challenges}

Several challenges arise when approaching the task of bounding the B-PoA in IDV.
As intuition suggests, the domain of problems that can be modeled by interdependent values is very large, and
includes the mostly-studied setting of private values, correlated or uncorrelated.
Due to the wide scope, it is unsurprising that in full generality there is no hope to achieve a good bound for the B-PoA. Our lower bounds formalize the need for assumptions in order to get a bound independent of the number of agents.

A standard assumption in the interdependent model is {\em single-crossing (SC)}. Intuitively it means that the information possessed by an agent has more influence on her own value than on others' values. SC enables many positive results in settings with interdependence \citep{Ausubel00,Maskin96,CFK}.
For example, in a single item setting, the truthful (generalized) Vickrey auction achieves optimal welfare only with SC.
However, SC alone is insufficient for ensuring good B-PoA results of simple auctions, even with a single item. 
The following proposition shows that any mechanism that allocates the item to the agent with the highest value is prone to a large degree of social inefficiency even with SC.
\begin{proposition}
	\label{clm:GVA:POA}
	There exists a single-item, $n$-bidder setting satisfying SC such that the PoA of every auction that allocates the item to the highest-valued bidder is $\Omega(n)$.%
	\footnote{This holds even under the standard no-overbidding assumption (see Section~\ref{sec:prelim}) and under additional conditions such as submodularity over signals \cite[]{EdenFFGK19}.}
\end{proposition}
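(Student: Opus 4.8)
The plan is to exhibit an explicit single-item instance with single-crossing valuations in which every ``highest-value-wins'' mechanism has a reasonable equilibrium whose welfare is an $\Omega(n)$ factor below the optimum. The construction I would use has one ``interdependent'' agent whose value aggregates everyone's signal, and one ``independent'' high-value agent; concretely, take signal spaces $S_1 = S_2 = \{1\}$ (degenerate), $S_i = [0,1]$ for $i \ge 3$, and valuations
\begin{equation*}
v_1(\s) = \sum_{i \in [n]} s_i + \vareps, \qquad v_2(\s) = 2(s_2 + \vareps), \qquad v_i(\s) = s_i \ \ (i \ge 3),
\end{equation*}
for an arbitrarily small $\vareps > 0$. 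First I would verify that this valuation profile satisfies single-crossing: for each agent $i$, the marginal effect of $s_i$ on $v_i$ is at least as large as its effect on any $v_j$ ($j \ne i$) --- agent $2$ has derivative $2$ in her own signal and $0$ in everyone else's value, and each agent $i \ge 3$ has derivative $1$ on $v_i$ and $1$ on $v_1$, so a tie-appropriate convention (or a tiny perturbation of the coefficient on $s_i$ inside $v_1$, say $(1-\vareps')s_i$) gives strict single-crossing without changing the asymptotics.

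Next I would pin down the optimal welfare and the bad equilibrium. On the signal profile where $s_i = 1$ for all $i$, agent $1$'s value is $n + \vareps$, which is $\Omega(n)$, so $\opt = \Omega(n)$; any welfare-maximizing allocation gives the item to agent $1$ here. Now consider the bid profile in which agents $3, \dots, n$ each report signal $0$ (equivalently, bid as if their signal were $0$), while agents $1$ and $2$ report their only possible signal $1$. I would argue this is a (Bayes-)Nash / ex-post equilibrium: agents $3,\dots,n$ have no profitable deviation because raising their reported signal cannot make them win the single item at a favorable price --- their own value $v_i = s_i \le 1$ is dominated by $v_2 = 2(1 + \vareps) > 1$ regardless, and in a highest-value-wins mechanism increasing their report only risks winning at a price exceeding their value or helps a competitor, never strictly helps them; agents $1$ and $2$ have no deviation since their signal is fixed. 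Under this profile the realized values are $v_1 = 2 + \vareps$ (only $s_1 = s_2 = 1$ contribute) and $v_2 = 2 + 2\vareps$, so agent $2$ is the highest-valued bidder and wins, yielding welfare $v_1(\s_{\mathrm{true}})$... wait --- I must be careful to distinguish \emph{reported} signals from \emph{true} signals here: the welfare of an allocation is measured with respect to the \emph{true} valuations $v_i(\s)$ at the true signal profile, while the allocation itself is driven by the reported profile. In the equilibrium the true signals of agents $\ge 3$ are also low (the adversary sets the instance so that the bad equilibrium is ex-post, i.e. the reports coincide with low true signals), so $\opt$ on \emph{that} signal profile is only $O(1)$ and there is no gap. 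The correct framing, which I would adopt, is the one standard for PoA lower bounds with correlated/interdependent values: fix the true signal profile to be $s_i = 1$ for all $i$ (so $\opt = n + \vareps$), and exhibit an equilibrium of the \emph{Bayesian} game --- a strategy profile, i.e. functions from signals to bids --- under which, on this particular realization, agents $\ge 3$ nevertheless submit low bids because their strategies are calibrated to a prior under which signals are typically informative-but-self-serving-to-suppress. So the cleanest route is to choose the prior so that $s_i = 1$ has positive probability, let agents $\ge 3$ use a strategy that bids (roughly) $0$ for a positive-measure set of signals including $s_i = 1$, show this is a best response (their bid never affects their own utility favorably in a highest-value-wins format, by the single-item unit-demand structure), and conclude that on the realization $\s = \mathbf{1}$ the mechanism allocates to agent $2$ for welfare $2 + 2\vareps$ while $\opt = n + \vareps$.

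Assembling these pieces gives $\npoa \ge (n+\vareps)/(2+2\vareps) = \Omega(n)$, and since the bids in the bad equilibrium are below the true values, the no-overbidding refinement is satisfied, as is submodularity over signals (all $v_i$ are additive, hence submodular). The main obstacle, and the step I would spend the most care on, is the equilibrium verification: I must show that no agent $i \ge 3$ can profitably deviate \emph{for any realization of their signal}, against \emph{every} highest-value-wins payment rule simultaneously. The key structural fact making this uniform over payment rules is that in a single-item, highest-value-wins mechanism, an agent who is not the highest-valued bidder gains nothing by changing her report unless doing so makes her the winner --- and here agent $2$'s value $2(s_2+\vareps) = 2(1+\vareps)$ always exceeds the maximum possible value $v_i = s_i \le 1$ of any agent $i \ge 3$, so such an agent can \emph{never} become the winner and hence is always indifferent among her reports; this kills all deviations in one stroke and is robust to the (unspecified) payment rule. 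I would state this as a small lemma (``a bidder who cannot win is indifferent among all bids in any highest-value-wins mechanism'') and then the rest is bookkeeping.
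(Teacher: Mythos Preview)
Your construction and overall argument match the paper's exactly: same signal spaces, same valuations, same bad strategy profile where agents $i\ge 3$ always report $0$, and the same domination observation (agents $i\ge 3$ can never become the winner, hence are indifferent) to verify equilibrium. The conclusion $\npoa\ge (n+\vareps)/(2+2\vareps)=\Omega(n)$ is precisely what the paper obtains.

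The one place you wander is the equilibrium framing. Your mid-proof detour about choosing a prior so that $s_i=1$ has positive probability, and ``calibrating'' strategies to that prior, is unnecessary and is what creates the confusion you flag. The clean way---and the way the paper does it---is to observe that the profile $\sigma$ with $\sigma_i(s_i)=0$ for $i\ge 3$ is an \emph{ex-post} equilibrium: for \emph{every} realization $\s$, no agent has a profitable deviation (agents $1,2$ have singleton signal spaces and literally cannot deviate; agents $i\ge 3$ can never win since $v_1(\bids')>v_i(\bids')$ for all $\bids'$, so their utility is identically $0$). Once you have an EPE, you simply evaluate welfare at the particular realization $\s=\mathbf 1$: the mechanism allocates to bidder~$2$ with true value $2+2\vareps$, while $\opt(\s)=n+\vareps$. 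No prior, no Bayesian reasoning needed, and by the equilibrium hierarchy the EPE lower bound propagates to $\npoa$ and $\bpoa$ automatically. Your final paragraph effectively rediscovers this (``for any realization of their signal''), so your argument is ultimately sound---just prune the Bayesian scaffolding. A minor note: the paper uses $v_1>v_i$ rather than your $v_2>v_i$ for the domination; both are valid.
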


While the full proof is deferred to the Appendix~\ref{appx:challenges}, the intuition behind this negative result is instructive for identifying necessary assumptions for positive results: Consider $n$ bidders whose signals belong to subsets $S_1 = \{1\}$, $S_2 = \{1\}$, and $S_i = [0, 1]$ for $i \geq 3$, respectively. Consider the valuation profile
$
v_1 = \sum_{i\in [n]} s_i +\epsilon$,
$v_2 = 2(s_2+\epsilon)$, and $v_i = s_i\label{eq:bad-example1}
$ for $i\ge3$,
where $\epsilon>0$ is arbitrarily small.
In this setting, the signals of bidders $i \geq 3$ have a significant effect on bidder~1's value but have no effect on bidder 2's value. In a scenario where these bidders have high signals, bidder~1's value is significantly higher than bidder 2's value. 
However, bidders $i \geq 3$ never win the item, since bidder $1$ out-values them for every signal profile. Thus, they may as well report low signals in equilibrium, resulting in an outcome where bidder~2 wins the item. 
A similar result to \cref{clm:GVA:POA} holds for every deterministic truthful (ex-post incentive compatible) mechanism.

\paragraph{Incentive compatibility.}
In the context of interdependence, incentive compatibility (IC) and individual rationality (IR) are defined \emph{ex-post}, i.e., it is in every bidder's best interest to participate and report her true signal \emph{given that all other bids are also truthful} (bidding truthfully is an ex-post equilibrium of the mechanism -- see \cref{sec:equilibrium-notions}).

\begin{definition}[Ex-post IC-IR]
	A single-item mechanism is \emph{ex-post IC-IR} if for every signal profile $\bs$, bidder $i$ and bid~$b_i$, 
	\begin{eqnarray*}
	x_i(\bs)v_i(\bs)-p_i(\bs) \ge \max\{x_i(b_i,\snoi)v_i(\bs)-p_i(b_i,\snoi),0\}.
	\end{eqnarray*}
\end{definition}

Ex-post IC-IR mechanisms can be characterized as mechanisms with a monotone allocation rule \cite[e.g.][]{RoughgardenT16}: for every bidder $i$ and partial bid profile $\bb_{-i}$, the allocation rule $x_i(b_i,\bb_{-i})$ is (weakly) increasing in $b_i$. The monotone allocation rule is coupled with a payment rule uniquely determined by the allocation rule. For deterministic mechanisms the winner $i$ is charged by his critical bid. 
For this class of mechanisms we show the following (see Appendix~\ref{appx:challenges}).

\begin{proposition}
	\label{pro:POA-EPIC}
	There exists a single-item, $n$-bidder setting, satisfying single-crossing, such that the PoA of every deterministic ex-post IC-IR mechanism is $\Omega(\sqrt{n})$ (even under no-overbidding).
\end{proposition}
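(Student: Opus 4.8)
The plan is to exhibit a single-item instance satisfying single-crossing on which every deterministic ex-post IC-IR mechanism has some highly inefficient equilibrium, adapting the ``dummy bidders who shade down'' idea behind the intuition for \cref{clm:GVA:POA} but calibrated to a $\sqrt n$ (rather than $n$) gap. Concretely I would take $n$ bidders with signals in $[0,1]$ and valuations $v_1(\bs)=\sum_i s_i+\epsilon$ (an ``aggregator'' whose value ranges over $[\epsilon,\,n+\epsilon]$), $v_2(\bs)=\sqrt n\, s_2$ (an ``amplified'' bidder, whose value never exceeds $\sqrt n$), and $v_j(\bs)=s_j$ for $j\ge3$. The first, routine step is to verify single-crossing: for each $i$ the coefficient with which $s_i$ enters $v_i$ must be at least the coefficient with which $s_i$ enters every $v_k$, $k\ne i$; the only nontrivial case is $s_2$, where $\partial v_2/\partial s_2=\sqrt n\ge 1=\partial v_1/\partial s_2$, and for $j\ge3$ the own-coefficient $1$ dominates. (If a strict version of SC is wanted, replace the $s_j$-term inside $v_1$ by $(1-\delta)s_j$ for $j\ge3$; nothing changes asymptotically.) I would also record that $\mathrm{OPT}$ ranges between $\Theta(\sqrt n)$ (on profiles where only $s_2$ is large) and $\Theta(n)$ (on profiles where all signals are large), which already defeats trivial ``always allocate to bidder~$1$''-type mechanisms: on the profile $(0,1,0,\dots,0)$ one has $\mathrm{OPT}=\sqrt n$ while bidder~$1$'s value is only $1+\epsilon$.

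Next I would invoke the characterization stated in the excerpt: a deterministic mechanism is ex-post IC-IR iff it has a monotone allocation rule with critical-bid pricing. Fix any such $M$. The heart of the argument is that $M$ cannot ``see'' that the aggregator's value is large once bidders $3,\dots,n$ shade down: each of them has an essentially private, bounded value, so in an efficient outcome she never wins, and a bidder who cannot profitably win is — by the critical-bid structure — indifferent among all reports, hence has a best response of reporting~$0$. Along such an equilibrium $M$ only ever sees a report profile of the form $(b_1,b_2,0,\dots,0)$, under which bidder~$1$'s reported value is at most $b_1+b_2+\epsilon\le 2+\epsilon$ while bidder~$2$'s is $\sqrt n\,b_2$; since bidder~$1$'s report is at most $1$ (and is in any case bounded by no-overbidding), once $b_2$ is bounded away from $0$ the mechanism must — if it is to be anywhere near efficient on the ``twin'' instance where bidders $3,\dots,n$ truly have signal~$0$ — allocate the item to bidder~$2$ (or to an even lower-value bidder). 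Evaluating this along the realization where bidders $3,\dots,n$ actually have signal~$1$ and $s_2$ is a fixed constant yields equilibrium welfare $O(\sqrt n)$ against $\mathrm{OPT}=v_1=\Theta(n)$, i.e.\ PoA $=\Omega(\sqrt n)$.

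To make the argument uniform over all $M$ I would run a short case split driven by the monotone-allocation characterization. Fix $s_1=0$, let $s_2$ range over a constant-length subinterval, and compare profile~$A$ (bidders $3,\dots,n$ have signal~$1$, so $\mathrm{OPT}=v_1=\Theta(n)$) with profile~$B$ (bidders $3,\dots,n$ have signal~$0$, so $\mathrm{OPT}=v_2=\Theta(\sqrt n)$). If $M$ already fails to be within a $\sqrt n$ factor of optimal on profile~$B$'s truthful equilibrium we are done; otherwise $M$ allocates to bidder~$2$ on the report profile $(0,b_2,0,\dots,0)$ common to both, and hence on profile~$A$ together with the shading equilibrium of bidders $3,\dots,n$ it again allocates to bidder~$2$, giving welfare $\Theta(\sqrt n)$ against $\mathrm{OPT}=\Theta(n)$. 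The main obstacle is exactly this uniformity step: one must show that shading to~$0$ is genuinely an (ex-post) equilibrium for bidders $3,\dots,n$ against an \emph{arbitrary} monotone $M$, not just VCG — which requires arguing from the critical-bid structure that whenever $M$ would let one of them win profitably against the others' shaded reports, $M$ is already forced to misallocate on a nearby profile — and one must check that this equilibrium is consistent with no-overbidding, since bidder~$1$'s \emph{true} value is enormous and unconstrained overbidding would let her reclaim the item. I expect essentially all of the work to be in this bookkeeping; the single-crossing check and the welfare arithmetic are straightforward.
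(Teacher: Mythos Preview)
Your construction and two-profile case split are exactly the paper's: the same valuations ($v_1=\sum_i s_i+\epsilon$, $v_2=\sqrt n\,s_2$, $v_j=s_j$ for $j\ge3$), bidders $1,2$ truthful while $3,\dots,n$ shade to~$0$, and a dichotomy on whom $M$ allocates to at the report profile $(1,1,0,\dots,0)$ (the paper uses $s_1=1$ rather than your $s_1=0$, an immaterial difference). Two small remarks: NOB in this paper means $b_i\le s_i$, not $b_i\le v_i$, so your worry about bidder~$1$ ``reclaiming the item'' via her enormous value is a non-issue since her bid is capped at $s_1\le1$; and the paper dispatches your flagged main obstacle in essentially one line, asserting that no bidder $i\ge3$ can profitably deviate because $v_1(\bb')>v_i(\bb')$ at every bid profile, together with the monotone-allocation/critical-bid characterization---so the bookkeeping is lighter than you anticipate.
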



%

\subsection{Our Results}

We study simple mechanisms in which the agents report their signals to the auction(s). 
All our positive results hold with respect to Bayesian equilibrium, which is the strongest guarantee, and thus propagate to other equilibrium notions (see Observation~\ref{obs:equil-hierarchy}). Moreover, we show that pure equilibria need not exist even in a single-item setting (see \cref{pro:no-pure-equil}), which further motivates our choice of benchmark.%
\footnote{In contrast, for simultaneous auctions with private values, it is known that a PNE exists even for general classes of valuations (e.g.~\cite{CKS16}).}
Since we are tackling new uncharted territory, our starting point is a single item, where we already face new challenges that do not exist in private values settings. 
\paragraph{Single Item.}
We consider natural auction formats like first-price and second-price. As discussed above, the PoA of such auctions inherently depends on the number of agents. 
In the proofs of Propositions~\ref{clm:GVA:POA} and~\ref{pro:POA-EPIC}, a single bidder is highly influenced by the signals of others, while all other bidders have private values. Thus, the effect of bidder signals on different bidders may vary greatly. We parameterize the extent of this variation by $\gamma$ (\cref{def:influence}) and establish positive results that depend on $\gamma$. 

Our main positive result for a single item applies to generalized Vickrey auction (GVA) --- a natural generalization of the Vickrey auction to interdependent values~\cite[]{Maskin96,Ausubel00}. We show that the $\bpoa$ of GVA under SC is $\gamma$ (see Theorem~\ref{prop:upper-bound:gamma-bound+c-sc:GVA}), and this is tight (see Theorem~\ref{pro:gamma-bound-csc:POA-close-gamma-c}). When considering a relaxed notion of SC called $c$-SC, we get a bound of $\max\{\gamma,c\}+1$ for both GVA and second-price auctions (Theorems~\ref{prop:upper-bound:gamma-bound+c-sc:GVA} and~\ref{thm:upper-bound:gamma-bound+c-sc:2PA}), which is almost tight (see Theorems~\ref{pro:gamma-bound-csc:POA-close-gamma-c} and \ref{pro:gamma-bound-csc:POA-close-gamma-c:2PA}). 
A non-trivial implication of this result is that in every IDV setting where  $\gamma=c=1$ it holds that every equilibrium is fully efficient. For example, this is the case in our running example of the resale model (\cref{example:running}).

\paragraph{Multiple Items.}

We consider a combinatorial setting with heterogeneous items and unit-demand valuations. We allow for signals to be multi-dimensional, a notoriously hard setting in the IDV literature~\cite[]{dasgupta2000efficient,jehiel2001efficient}.  Our results for this case are more nuanced. First of all, we show that in general, one cannot approximate the optimal welfare for a very natural class of mechanisms which includes the ones we reason about in this paper. In the example showing this, the value of all bidders originates from a single bidder's signal; that is, this setting suffers from extreme information asymmetry. Therefore, we provide results for settings with limited-information asymmetry, like our running example (for a formal definition of the condition, see \cref{sub:sufficient-cond}).

We then show a separation between two domains: the case where there are more bidders than items ($n\geq m$), and the case where there are many more items than bidders. For the case $n\geq m$, we consider a mechanism that runs a second-price auction for each item considering every bidder's value to be her valuation valuated at her bid, zeroing out other bidders' bids. We refer to this mechanism as the \emph{simultaneous privatized} second-price auction.
One subtlety that arises in our setting is the need to enable agents to explicitly express willingness to participate in auctions for individual items. Our main positive result for multiple items is that for $\gamma$-heterogeneous unit-demand bidders, and under sufficiently limited  information asymmetry, the $\bpoa$ of simultaneous privatized second-price auctions is $O(\gamma^2)$ (under $c$-SC, we get a bound of $O(\max\{\gamma,c\}^2)$ --- see Theorem~\ref{thm:multi-pos} for details).

Our proof diverges from the smoothness framework, as the typical proof requires reasoning about a deviation of bidders who know their value for an item, which is not the case with interdependence. Our proof decomposes the welfare into two terms and bounds the performance of our auction for each one separately (for more details, see Section~\ref{sec:multiple-pos}).


For the case $m\gg n$, we provide a strong negative result. We consider the simplest setting one can imagine: there are $n^2$ items, and bidders have a common value for each item, which is based on the sum of signals of all bidders for the item ($v_{i\ell}(\s)=1+\sum_j s_{j\ell}$ for every item $\ell$). All signals (for any bidder/item pair) are sampled i.i.d. 
This example demonstrates the difficulty of coordination in our setting: For every item, the bidders' values are equal, and thus the difficulty is in identifying the valuable items, despite the fact that the information is distributed among bidders. A priori, all items seem the same, but the signals' distribution is set up in a way that gives rise to a ball-and-bins-type phenomenon --- the expected value of each item is a constant, but the maximal value of $n$ items is $\Theta(\ln n/\ln\ln n)$. It follows that $\opt=\Theta(n\ln n/\ln\ln n)$.  
On the other hand, by the standard no-overbidding assumption (which is crucial for our positive results in the $n\leq m$ regime), the number of items a bidder bids on must be bounded, which in turn upper-bounds the expected welfare of the bidder. As a result, we show that for the settings discussed, for every simultaneous item auction and every equilibrium $\sigma$, $\opt/\eq(\sigma)=\tilde{\Omega}(\log n)$ (see Section~\ref{sec:multiple-neg} for the full details).%
\footnote{$\tilde{\Omega}$ hides $o(\log n)$ terms.}
Note that this implies that the Price of Stability is $\tilde{\Omega}(\log n)$, which is a stronger negative result than previously known.

\subsection{Related Work}
\label{sec:related-work}

\paragraph{Smoothness Framework.}
Many of the papers analyzing PoA of auctions make use of the celebrated smoothness framework~\cite[]{Roughgarden15}, and its adaptation to auctions and games of incomplete information~\cite[]{Roughgarden15incomplete, SyrgkanisT13}. More on this can be found in the survey of~\cite{RoughgardenST17}.

\paragraph{Simultaneous Auctions.} Following~\cite{CFK}, simultaneous auctions were studied by many follow-ups, where the auction ran simultaneously is either the second-price  (Vickrey)~\cite[]{BhawalkarR11,FuKL12,Roughgarden15incomplete}, or the first-price auction~\cite[]{HassidimKMN11,SyrgkanisT13,ChristodoulouKST16,FeigeFIILS15}.
These efforts culminated in showing that simultaneous second-price (under a no-overbidding condition) and first-price auctions achieve PoA of $4$ and $2$, respectively, for subadditive valuations, the most general class of complement-free valuations~\cite[]{FeldmanFGL13}. These results require values to be \emph{uncorrelated}. 	

\paragraph{Simple Auctions with Correlated Values.}
\citet{SyrgkanisT13} show that for first-price auction of a single item, the price of anarchy is $1-1/e$, even for correlated value distributions, and this is tight~\cite[]{Syrgkanis14}. When the values are uncorrelated, \citet{HoyTW18} show that this bound can be improved. 
The work of \citet{LucierL11} applies to a single item with correlated values sold through a second-price auction (and more generally, to multiple units sold through the \emph{GSP} auction\footnote{An auction format used in practice to allocate sponsored-search ad slots. GSP stands for \emph{Generalized Second Price}, not to be confused with the (quite different) generalized Vickrey auctions we study in this paper!}%
). Under a no-overbidding assumption, every BNE of the auction achieves at least $1/4$ of the maximum expected welfare. This no-overbidding assumption or variants thereof are required for all known positive PoA results related to second-price auctions \cite[see][Section 8.1]{RoughgardenST17}. 
The result of \citet{LucierL11} is by a strengthened variant of smoothness called \emph{semi-smoothness} or \emph{smoothness with private deviations}. The semi-smoothness framework was further developed by \citet{CaragiannisKKKLLT15} where the bound of $1/4$ was improved to $1/2.927$.

\paragraph{Other Objectives and Auction Formats.}
\citet{HartlineHT14} and \citet{AzarFGR17} consider the simultaneous item auctions in the context of maximizing revenue and liquid welfare (a welfare notion suitable for budgeted settings) respectively.
Consider now multi-unit (homogeneous item) settings and downward-sloping valuations. The \emph{uniform price} auction gives the $m$ available units to the bidders with the $m$ highest marginals, charging the $(m+1)$st highest marginal. Even with correlation, every BNE of this auction that satisfies a no-overbidding condition guarantees $1/4$ of the optimal expected welfare~\cite[]{KeijzerMST13,BabaioffLNL14}. \citet{DevanurMSW15} consider a different simple bidding format, where the bidders only submits one real number as their bid. They analyze a mechanism they name `Single-Bid', and show this mechanism gives $O(\log m)$ approximation for sub-additive valuations, which is essentially tight for the class of mechanisms they consider~\cite[]{BravermanMW16}. \citet{FeldmanFMR16} analyze this mechanism in settings with restricted complements.

\paragraph{Interdependent Values in AMD}
Interdependent values model has recently gained attention in the algorithmic mechanism design community. This interest was initiated by~\citet{RoughgardenT16} and~\citet{Li17} which successfully apply the simple vs. optimal paradigm to the interdependent setting. \citet{RoughgardenT16} show that a prior independent, single sample mechanism can approximate the optimal revenue under sufficient conditions, and very similar conditions are used by~\citet{Li17} to show that running VCG with monopoly reserves can approximate the optimal welfare as well. \citet{CFK} focused on removing assumptions. They showed a mechanism that approximates the optimal revenue for Matroid environments that does not make any distributional assumptions. The assumptions they make on the valuation functions are single-crossing, and the submodular condition we use for  our positive result in the multi-item setting. \citet{EdenFFG18} inspected what happens if we remove the single-crossing assumption in the context of welfare maximization. Without any assumptions on the valuations, they notice things can turn out arbitrarily bad. Therefore, they inspected the $c$-SC condition we inspect in our paper, and showed improved approximation bounds that depend on the parameter $c$, and sometimes depend on $n$ as well. \citet{EdenFFGK19} do not make any single-crossing-type assumption, and focus instead on a submodularity assumption over signals (which they call SoS). They notice that this condition alone suffices to give constant factor approximation for welfare and revenue in single-parameter domains, and also in general combinatorial settings when the affect of one's signal on her value is independent from the affect of other bidders' signals on one's value. The latter result holds even if signals are multidimensional.



\section{Preliminaries}
\label{sec:prelim}
\paragraph{Notation.}
Let ${\bf x}, {\bf y}$ be vectors, $i$ an index and $A$ a subset of indices. Then ${\bf x}_A$ is the vector obtained by taking indices $A$ of $x$. Let ${\bf x}_{-A}={\bf x}_{\overline{A}}$ 
(in particular ${\bf{x}}_{-i}$ is vector $\bf x$ with the $i$th entry removed). 
By ${\bf x} \le {\bf y}$ we mean $x_i \le y_i$ for all $i$.
Vectors $\bf{1}$ and $\bf{0}$ are the all-ones and all-zeros vectors.

\subsection{Basic Problem Setting}
The following is a standard interdependence setting for selling a single item \citep{MilgromW82}: There are $n$ bidders (agents), each bidder $i$ with a privately-known \emph{signal} $s_i$ from signal space $S_i$ (a continuous interval in $\mathbb{R}_{\ge 0}$). 
Let $\bs=(s_1,\dots,s_n)\in S\subseteq S_1\times\dots\times S_n$ denote the signal profile of all bidders. 
Every bidder $i$ also has a publicly-known \textit{valuation} $v_i:S\rightarrow\mathbb{R}_{\ge 0}$, which is a function of the signal profile~$\bs$. This dependence of the values on other bidders' signals is the defining property of interdependence. 
Function $v_i$ is weakly increasing in each coordinate and strictly increasing in $s_i$.
Let $\bv=(v_1,\dots,v_n)$ denote the valuation profile of all bidders.

We also consider combinatorial settings in which there are $m$ items for sale:
For multiple items, $\s_i$ is multi-dimensional and there is a signal $s_{i\ell}$ for every item~$\ell$. Value $v_{i\ell}$ of the $\ell$th item is a function of $\s_\ell=(s_{1\ell},\ldots,s_{n\ell})$. We focus on \emph{unit-demand} valuations, for which the value for a subset of items $T\subseteq [m]$ is $v_{i}(T;\s)=\max_{\ell\in T}\{v_{i\ell}(\s_\ell)\}$.


As part of the classic interdependence model, valuation profiles are assumed to be \emph{single-crossing (SC)} 
\cite[e.g.][]{MilgromW82,Maskin96,Ausubel00}.

\begin{definition}[SC]
	Valuation profile $\bv$ is \emph{SC} if for every bidder pair $i, i'$, every signal profile $\bs$ and every $\delta\ge 0$,
$$v_i(s_i+\delta,\snoi)-v_i(\bs)\ge v_{i'}(s_i+\delta,\snoi)-v_{i'}(\bs).
$$
\end{definition}
Intuitively SC means that a bidder's own value is influenced the most from a change in her signal.
\citet{EdenFFG18} introduced a natural relaxation called \emph{$c$-SC} for $c\ge 1$, 
which requires $c\left( v_i(s_i+\delta,\snoi)-v_i(\bs)\right) \ge v_{i'}(s_i+\delta,\snoi)-v_{i'}(\bs)$. Our results generalize to this relaxation. 

\paragraph{Who Knows What.}
We address both full-information and Bayesian settings. 
In either model the signal spaces $S_1,\dots,S_n$ and the valuation functions $v_1,\dots,v_n$ are assumed to be public knowledge.
In full information settings, all bidders know all signals (but the mechanism does not). 
In Bayesian settings (our main focus), bidder $i$ has private knowledge of $s_i$, which is drawn from a publicly-known distribution $F_i$ with density~$f_i$. We denote by $F$ the publicly-known joint distribution of signal profile~$\bs$.

\paragraph{Correlation.}
Even if the signals are sampled independently, the values are correlated since they depend on the same signals. 
This generalizes various types of correlation found in other papers~\cite[]{BateniDHS15,ChawlaMS15,ImmorlicaSW20}.
When dealing with a single item we further allow the signals themselves to be correlated, i.e., $F$ is not necessarily a product distribution. 
We denote by $F_{\mid s_i}$ the distribution of $\snoi$ given signal $s_i$. 

\paragraph{Submodularity.}
Submodularity of the valuation functions is assumed in Section \ref{sec:multiple-pos}; it means that as a bidder's own signal increases, an increase in the others' signals has diminishing marginal influence on her value:%
\footnote{\cite{EdenFFGK19} define a stronger version of valuation submodularity called submodularity over signals (SoS).} 

%


\begin{definition}[\cite{CFK}]\label{cond:weak_submod}
	A valuation $v_i$ is \emph{submodular}
	if for every signal $s_i$, every two profiles of the other bidders $\snoi\le {\bs'}_{-i}$ and every
	$\delta\ge 0$,
	$$v_i(s_i+\delta,\s_{-i}) -v_i(s_i,\s_{-i})\geq v_i(s_i+\delta,\s'_{-i}) -v_i(s_i,\s'_{-i}).$$
\end{definition}
For multiple items, the same definition applies to $v_{i\ell}$ (rather than $v_i$) with $s_i,\s_{-i}$ replaced by $s_{i\ell},(\s_{\ell})_{-i}$.

\subsection{Mechanisms}

\paragraph{Objective.}
Our goal is to maximize welfare, that is, for a single item to allocate it to the bidder with the highest value $v_i(\bs)$, and for multiple items and unit-demand bidders to find a matching of items to bidders with the maximum total value. We denote the optimal welfare for a given setting with signal profile $\bs$ by $\opt(\bs)$, and the expected optimal welfare by $\opt=\mathbb{E}_{s\sim F}[\opt(\s)]$. 

\paragraph{Simple Mechanisms.}
A mechanism consists of a pair $(x,p)$ of (deterministic) allocation rule~$x$ and payment rule $p$. 
The mechanism solicits a signal report (\emph{bid}) $b_i$ from each bidder $i$. Let $\bb=(b_1,\dots,b_n)$ denote the bid profile.
The mechanism outputs for every bidder $i$ an indicator $x_i(\bb)\in \{0,1\}$ of whether she wins the item, and her payment $p_i(\bb)$. The allocation rule $x$ guarantees feasibility, i.e., $\sum_i x_i(\bb)\le 1$ for every $\bb$. Bidder~$i$'s expected utility given bid profile $\bb$ and true signal profile $\bs$ is $u_i(\bb,\s) = x_i(\bb)v_i(\bs)-p_i(\bb)$. 

\paragraph{No-overbidding (NOB).}
As standard in the literature on PoA \cite[e.g.][]{BhawalkarR11} we often assume NOB, defined as follows: 

\begin{definition}[Single-item NOB for interdependence]
	Bid profile $\bb$ satisfies \emph{NOB} if $b_i \leq s_i$ for every bidder $i$. Strategy profile $\sigma$ satisfies \emph{NOB} if for all $\bb \sim \sigma(\s)$, $\bb$ satisfies NOB.
\end{definition}

NOB assumptions reflect bidders' reluctance to expose themselves to negative utility from overbidding and help explain the prevalence of certain auction formats like the second-price auction in practice.
It is well-known that without NOB the second-price auction has unbounded PoA.%
\footnote{Consider two bidders with (independent private) values $\epsilon,1$ where $\epsilon\ll 1$. Bidding $1,0$ is an equilibrium of the second-price auction and its welfare is $\epsilon$, but the optimal welfare is $1$. In this equilibrium, the first bidder is bidding much higher than her value.}

%
%
%

\paragraph{Standard Mechanisms.}
In mechanisms for IDV, the \textit{highest-valued bidder} is computed according to solicited bids $\bids$ and the public-knowledge valuation profile~$\bf v$.
 
\begin{definition}[Critical bid]
	Given a partial bid profile $\bb_{-i}$, bidder $i$'s critical bid $b_i^*$ is the lowest report for which she wins the item, i.e., $b_i^* = \min\{b_i \mid x_i(\bb)=1\}$.
\end{definition}
\begin{definition}
	\label{def:GVA}
	The \emph{generalized Vickrey auction} (GVA) solicits signal bids, allocates the item to the highest-valued bidder $i$, and charges her critical bid value $v_i(b_i^*,\bb_{-i})$.
\end{definition}
There are also natural generalization of the first- and second-price auctions to interdependence (throughout we refer to these and not the independent private values versions).

\begin{definition}[Second-price (2PA) with interdependence]  
	\label{def:second-price-auction}
	The second-price auction solicits signal bids, allocates the item to the highest-valued bidder, and charges the second-highest value as payment.
\end{definition} 

\begin{definition}[First-price with interdependence]  
	\label{def:first-price-auction}
	The first-price auction solicits signal bids, allocates the item to the highest-valued bidder, and charges her value as payment.
\end{definition}

\subsection{Price of Anarchy Background}
\label{sec:equilibrium-notions}
Let $\sigma_i(\cdot)$ be bidder $i$'s bidding \emph{strategy} as a function of her true signal.
A bidding strategy can be \emph{pure} (mapping to a single reported signal) or \emph{mixed} (mapping to a distribution over reported signals).
 
Fix a mechanism. 
Given a strategy profile $\sigma=(\sigma_1,\ldots,\sigma_n)$, 
let $\sigma(\s)$ be a 
mapping of signals $\s$ to bids, and $\sigma_{-i}(\s_{-i})$ be the mapping excluding agent $i$'s bid.
%
We focus on the following equilibrium concepts:
\begin{enumerate}
	\item Ex-post equilibrium (EPE): A bidding strategy profile $\sigma$ constitutes an EPE if $\sigma$ is deterministic,%
	\footnote{Ex-post equilibria are not necessarily deterministic, but as we use these equilibria in the context of lower bounds, the restriction to deterministic only strengthens our results.}
	and for every bidder $i$ with signal $s_i$, and every signal profile $\s_{-i}$, $u_i(\sigma(\s);\s)\geq u_i((b_i,\sigma(\s_{-i}));\s)$ for every $b_i$.
	\item Pure Nash equilibrium (PNE):
	A bidding strategy profile $\sigma$ constitutes a PNE of a full information setting with signal profile $\bs$ if $\sigma$ is deterministic, and for every bidder $i$,  $u_i(\sigma(\s);\s)\geq u_i((b_i,\sigma(\s_{-i}));\s)$ for every $b_i$.
	\item Bayes-Nash equilibrium (BNE):
	A (possibly randomized) bidding strategy profile $\sigma$ constitutes a BNE of a Bayesian setting with signal distribution $F$ if for every bidder $i$ and signal $s_i$, and every $b_i$, $$\E_{\snoi\sim F_{\mid s_i}}\left[ u_i(\sigma(\s);\s)\right] \geq \E_{\snoi\sim F_{\mid s_i}}\left[u_i((b_i,\sigma(\s_{-i}));\s)\right].$$
\end{enumerate}

\paragraph{Comparisons.} Note that the difference between PNE and EPE is that in a PNE, the bidder takes into account other bidders' signals, wheres, in an EPE, the bidder is not aware of other bidders' signals, and only knows they bid according to their bidding strategy. Moreover, although both BNE and EPE do not require the bidder to know other bidders' signals, EPE is a stronger notion since it requires the bidder to be best responding with respect to every realization of other bidders' signals, whereas BNE only requires the bidder to be best responding in expectation over other bidders' signals. 

The following observation summarizes the relation between the concepts.
\begin{observation}[Equilibrium hierarchy]
	\label{obs:equil-hierarchy}
	Fix a mechanism and a signal profile $\bs$. Every $\epe$ is also a $\pne$ w.r.t. $\s$, and every $\pne$ is a $\bne$ for an appropriate Bayesian setting. 
\end{observation}

The proof of the above observation is deferred to Appendix~\ref{appx:poa}.

\paragraph{Price of Anarchy (PoA).} 
The expected welfare achieved by a single-item mechanism $(x,p)$ for signal profile $\bs$ and bid profile $\bb$ is $\sum_{i} x_i(\bb)v_i(\bs)$. Where the mechanism (in particular, the allocation rule) is evident from the context, we denote this by $\SW(\bids,\bs)$. For a 
equilibrium strategy $\sigma$, let $\eq(\sigma,\s)=\E[\SW(\sigma(s),\s)]$ be the expected welfare of $\sigma$ 
at 
$\s$, and $\eq(\sigma)=\E_{s\sim F}[\eq(\sigma,\s)]$. The ex-post PoA, Nash PoA and Bayesian PoA of an auction, respectively, are:
$$
\epoa=\sup\limits_{\substack{\bs,\bv,\sigma:\\\sigma\text{ is EPE}}}\frac{\opt(\bs)}{\eq(\sigma,\bs)};~~~
\npoa=\sup_{\substack{\bs,\bv,\sigma:\\\sigma\text{ is PNE}}}\frac{\opt(\bs)}{\eq(\sigma,\bs)};~~~
\bpoa=\sup\limits_{\substack{F,\bv,\sigma:\\\sigma\text{ is BNE}}}\frac{\opt}{\eq(\sigma)}.
$$

In the above, the supremum is taken over \emph{all} 
Bayesian settings. It is also possible to restrict the class of settings, for example, to those with valuation profiles that satisfy the SC property, or bidding strategies that satisfy a NOB property. Such restriction will, in general, improve the PoA, and we leverage this method to achieve our results. 

Observation~\ref{obs:equil-hierarchy} implies that lower bounds for $\epoa$ and upper bounds for $\bpoa$ propagate to the other notions.

\section{Single Item: A Positive Result}
\label{sec:single}
%
In this section, we focus on single-item settings. Our main result is a (parameterized) property which, along with SC, leads to good PoA bounds.

Recall the setting described above, where the signal of, say, agent 3 affects only the value of herself and agent 1, and no other agent.
In the opposite extreme, a change in an agent's signal affects the values of all other agents equally. That is: 
$$
\forall i,j,j' : v_j(s_i+\delta, {\bf s}_{-i}) - v_j({\bf s}) = v_{j'}(s_i+\delta, {\bf s}_{-i}) - v_{j'}({\bf s}).
$$
We call this condition homogeneous influence. One can verify that homogeneous influence holds in our running example -- the resale model, as well as in other prominent settings like Klemperer's wallet-game \cite[]{klemperer1998auctions}, common-values \cite[]{wilson1969communications}, private values, and private/common value interpolation~\cite[]{bergemann2013robust}.
We find that homogeneous influence, together with SC, ensures full efficiency (this is a direct corollary of Corollary~\ref{prop:GVA:gamma-bound-sc:gamma-BPOA}).
\begin{proposition}
	In single item settings with SC and homogeneous influence, every equilibrium of the GVA is fully efficient.
\end{proposition}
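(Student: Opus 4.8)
The plan is to establish the stronger Bayesian statement: every BNE of the GVA under no-overbidding is fully efficient; by Observation~\ref{obs:equil-hierarchy} this subsumes ex-post and pure Nash equilibria. Since the GVA always sells the item to a bidder $w(\bb)$ of maximum value $v_j(\bb)$, we have $\SW(\bb,\s)=v_{w(\bb)}(\s)\le\max_j v_j(\s)=\opt(\s)$ for every $\bb$, so $\eq(\sigma)\le\opt$, with equality exactly when, for $F$-a.e.\ $\s$ and $\sigma(\s)$-a.e.\ $\bb$, the winner attains $\max_j v_j(\s)$. I argue by contradiction from a positive-probability ``bad'' event on which the winner $w$ has $v_w(\s)<\max_j v_j(\s)$.

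The engine is a monotonicity lemma combining SC with homogeneous influence: for any no-overbidding profile $\bb\le\s$ and any bidder $i$, writing $\bb':=(s_i,\bnoi)$ for the profile in which $i$ alone reports truthfully, one has $v_i(\bb')-v_q(\bb')\ge v_i(\s)-v_q(\s)$ for every $q\ne i$. This follows by telescoping from $\bb'$ up to $\s$ one coordinate at a time: coordinate $i$ never moves; raising a coordinate $j\notin\{i,q\}$ changes $v_i$ and $v_q$ by the same amount by homogeneous influence, leaving $v_i-v_q$ unchanged; and raising coordinate $q$ increases $v_q$ by at least as much as $v_i$ by SC, so it can only decrease $v_i-v_q$. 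In particular, if $i$ is an optimal winner at $\s$ then $v_i(\bb')\ge v_q(\bb')$ for all $q$, so by reporting truthfully $i$ wins the GVA against \emph{any} no-overbidding opponent bids --- strictly against any bidder she strictly out-values at $\s$.

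The second ingredient is that, against no-overbidding opponents, truthful reporting weakly dominates every no-overbidding bid of $i$: fixing opponent bids $\bnoi\le\snoi$, SC makes the set of $i$'s winning bids an up-set $\{b_i\ge\tau\}$ for a critical bid $\tau$, the winner's payment $v_i(\tau,\bnoi)$ does not depend on $i$'s own bid, and any winning no-overbidding bid forces $\tau\le s_i$, whence $v_i(\tau,\bnoi)\le v_i(s_i,\bnoi)\le v_i(\s)$; thus $i$'s utility is $0$ below $\tau$ and the nonnegative constant $v_i(\s)-v_i(\tau,\bnoi)$ at or above $\tau$, so it is maximized by reporting $s_i$. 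Feeding the deviation ``report $s_i$'' into the Bayes-Nash best-response inequality then forces, for every $i$ and $F$-a.e.\ $\s$, the equality $u_i((s_i,\sigma_{-i}(\snoi));\s)=u_i(\sigma(\s);\s)$. On the bad event pick an optimal winner $i^*\ne w$: in equilibrium $i^*$ does not win, so her equilibrium utility is $0$, whereas by the lemma reporting $s_{i^*}$ makes her win, and (generically) by a strict margin, so her critical bid falls strictly below $s_{i^*}$ and her deviation utility is strictly positive --- contradicting the equality.

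The step I expect to be the main obstacle is exactly this passage from a \emph{pointwise} ``$i^*$ would strictly gain here'' to a violation of the \emph{Bayesian}, in-expectation equilibrium condition: the weak-dominance observation is what upgrades the BNE inequality to an a.e.\ equality, so that a positive-measure pointwise gain already produces a contradiction. One must also clear away measure-zero degeneracies --- ties for the top true value (where $i^*$'s critical bid could equal $s_{i^*}$) and signals at an endpoint of a signal space --- which are routine to handle under atomlessness of $F$, and are vacuous in purely common-value instances where every allocation is efficient. For ex-post and pure Nash equilibria the argument collapses, with no interim step, to a single pointwise truthful deviation by the optimal winner at the realized $\s$.
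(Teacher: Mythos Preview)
Your proof is correct and takes essentially the same approach as the paper: your telescoping ``monotonicity lemma'' is precisely the $\gamma=c=1$ case of Lemma~\ref{lem:j-bound-i-diff}, and your weak-dominance-of-truthful argument combined with the BNE condition is exactly the content of Claim~\ref{clm:gamma-bound:GVA:single:BNE} (which for $\gamma=1$ yields $v_{i_\bids}(\s)\ge v_{i_\s}(\s)$ pointwise). The only cosmetic difference is that you frame the endgame as a contradiction from a positive-measure inefficient event, whereas the paper derives the pointwise value inequality directly and then integrates; your explicit ``a.e.\ equality from weak dominance plus BNE'' step is arguably a cleaner treatment of the Bayesian passage than the paper's own proof of the claim.
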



The following question arises: how robust is this result? In other words, how would the PoA deteriorates as we move away from homogeneous influence? To this end, we introduce the following parameterized property, which measures how far a valuation profile is from homogeneous influence.

\begin{definition}[$\gamma$-heterogeneity in signal-value impact] \label{def:influence}
	A valuation profile is $\gamma$-heterogeneous in signal-value impact, or in short \emph{$\gamma$-heterogeneous}, if for every agent $i$, two other agents $j,j'$, signal profile ${\bf s}$ and $\delta>0$, 
	$$\gamma (v_j(s_i+\delta, {\bf s}_{-i}) - v_j({\bf s})) \geq v_{j'}(s_i+\delta, {\bf s}_{-i}) - v_{j'}({\bf s}).
	$$
\end{definition}

By definition, $\gamma$ is always at least $1$; the special case of $\gamma=1$ is homogeneous influence. 
Our main result in this section is a tight bound for the $\bpoa$ for $\gamma$-heterogeneous, SC valuations. Our result extends beyond SC, to $c$-SC profiles \cite[]{EdenFFGK19}, where the PoA degrades gracefully with the parameter $c$ (SC is $c$-SC for $c=1$).

%

\begin{theorem}[Main positive result for single-item]
	\label{prop:upper-bound:gamma-bound+c-sc:GVA}
	Consider a single-item setting with $\gamma$-heterogeneous, $c$-SC, continuous valuations.	
	The $\bpoa$ of GVA under NOB is bounded by $1+\max\{\gamma,c\}$. 
	For SC valuations the $\bpoa$ is tightly $\gamma$.
	These results hold even for correlated signals.
\end{theorem}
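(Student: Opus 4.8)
The plan is to fix a Bayes--Nash equilibrium $\sigma$ satisfying NOB and, for every bidder $i$, use the deviation to \emph{truthful reporting} $b_i=s_i$, which is feasible precisely because of NOB. The equilibrium inequality gives $\E_{\s_{-i}\sim F_{\mid s_i}}[u_i(\sigma(\s);\s)]\ge\E_{\s_{-i}\sim F_{\mid s_i}}[u_i((s_i,\sigma_{-i}(\s_{-i}));\s)]$; integrating over $s_i\sim F_i$ (legitimate even for correlated signals since $\E_{s_i}\E_{\s_{-i}\mid s_i}=\E_{\s\sim F}$) and summing over $i$ yields $\eq(\sigma)-\E_\s[\mathrm{Rev}(\sigma(\s))]=\sum_i\E_\s[u_i(\sigma(\s);\s)]\ge\sum_i\E_\s[u_i((s_i,\sigma_{-i}(\s_{-i}));\s)]$. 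Each of these single-deviation utilities is nonnegative (under NOB, a deviator who wins by reporting $s_i$ pays her critical-bid value, which is at most $v_i(\s)$), so the right-hand side is a sum of nonnegative terms. The core of the argument is then a profile-by-profile comparison of the optimal winner's value with the equilibrium welfare $\eq(\sigma,\s)=v_{w(\s)}(\s)$.

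The main technical tool is a family of \emph{transfer inequalities}. Fix $\s$, write $\bb=\sigma(\s)$ (so $\bb\le\s$ by NOB), let $w=w(\s)$ be the equilibrium winner and $o=o(\s)$ the optimal winner. For a single coordinate $k$ and any bidder $j\neq k$, the increment in $v_j$ caused by raising $s_k$ is at most $\max\{\gamma,c\}$ times the increment in $v_w$ caused by the same raise: when $w\neq k$ this is $\gamma$-heterogeneity applied to the triple $(k;j,w)$, and when $w=k$ it is $c$-SC. Chaining such inequalities along a coordinatewise-monotone path from $\bb$ to any profile $\le\s$ and using monotonicity of $v_w$, the total increment in $v_w$ telescopes to at most $v_w(\s)=\eq(\sigma,\s)$. (The one case this does not cover is bounding the increment of the \emph{owner's} valuation $v_k$ in terms of $v_w$; handling that is precisely what forces the case split below.)

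Now fix a profile $\s$ with $o\neq w$ (if $o=w$ the outcome is already efficient at $\s$), and consider bidder $o$ reporting $s_o$ against $\bb_{-o}$. \textbf{Case 1 (she would win):} her deviation utility is $v_o(\s)$ minus her critical-bid value, which equals $\max_{j\neq o}v_j(b_o^*,\bb_{-o})\le\max_{j\neq o}v_j(s_o,\bb_{-o})$; applying the transfer inequality to the single raise of coordinate $o$ from $b_o$ to $s_o$ (transferring from the runner-up to $w$, using $w\neq o$), together with $v_j(\bb)\le v_w(\bb)$ and monotonicity, bounds this runner-up value by $\max\{\gamma,c\}\,v_w(\s)$. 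Hence $v_o(\s)\le u_o((s_o,\sigma_{-o}(\s_{-o}));\s)+\max\{\gamma,c\}\,v_w(\s)$. \textbf{Case 2 (she would not win):} then $v_o(s_o,\bb_{-o})<v_{j_1}(s_o,\bb_{-o})$ for $j_1\in\argmax_{j\neq o}v_j(s_o,\bb_{-o})$, and by continuity of the valuations there is an intermediate profile $\bt_{-o}$ on the segment from $\bb_{-o}$ up to $\s_{-o}$ with $v_o(s_o,\bt_{-o})=v_{j_1}(s_o,\bt_{-o})$. Bounding the remaining raise from $\bt_{-o}$ to $\s_{-o}$ by the transfer inequality, and controlling $v_{j_1}$ at the crossing point --- here one must additionally use that $j_1$ is best-responding in equilibrium, which caps $v_{j_1}(\bb)$ by $v_w(\bb)$ --- gives the pointwise bound $v_o(\s)\le\max\{\gamma,c\}\,v_w(\s)$. \textbf{I expect Case 2 to be the main obstacle:} the would-be winner's valuation at the crossing profile can be inflated by raising \emph{her own} coordinate, and ruling out such a blow-up without paying a price quadratic in $\gamma$ is exactly where the private-values smoothness intuition breaks and a more hands-on argument is required.

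To conclude, partition the profiles into $C_1$ (Case 1) and $C_2$ (Case 2 together with the efficient profiles $o=w$), and split $\opt=\opt_1+\opt_2$ and $\eq(\sigma)=\eq_1+\eq_2$ accordingly. On $C_2$ the pointwise bound gives $\opt_2\le\max\{\gamma,c\}\,\eq_2$. On $C_1$, integrate the Case 1 bound; since the optimal winner is unique at every profile and all deviation utilities are nonnegative, $\E_\s[\mathbb{1}_{C_1}\,u_{o(\s)}((s_{o(\s)},\sigma_{-o(\s)}(\s_{-o(\s)}));\s)]\le\sum_i\E_\s[u_i((s_i,\sigma_{-i}(\s_{-i}));\s)]\le\eq(\sigma)-\E_\s[\mathrm{Rev}(\sigma(\s))]\le\eq(\sigma)$, so $\opt_1\le\eq(\sigma)+\max\{\gamma,c\}\,\eq_1$. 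Adding the two bounds, $\opt\le\eq(\sigma)+\max\{\gamma,c\}(\eq_1+\eq_2)=(1+\max\{\gamma,c\})\,\eq(\sigma)$, and by the equilibrium hierarchy (\cref{obs:equil-hierarchy}) this also covers the ex-post and pure cases. For the sharper statement under SC ($c=1$, so $\max\{\gamma,c\}=\gamma$) one runs the same argument with the transfer inequalities sharpened (the step through the owner's coordinate now loses only the factor $\gamma$, via $\gamma$-heterogeneity) and with a more careful accounting of the GVA critical-bid payments, which removes the additive $\eq(\sigma)$ term (the term that arose only from the Case 1 deviation) and yields $\opt\le\gamma\,\eq(\sigma)$; tightness is witnessed by an explicit $\gamma$-heterogeneous SC instance (matching \cref{pro:gamma-bound-csc:POA-close-gamma-c}).
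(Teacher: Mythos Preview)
Your overall strategy coincides with the paper's: prove, for every signal profile $\s$ and every bid profile $\bb\le\s$, the smoothness-type inequality
\[
u_o((s_o,\bb_{-o});\s)\;\ge\;v_o(\s)-\max\{\gamma,c\}\,v_{w(\bb)}(\s),
\]
then sum over bidders (using the optimal-winner indicator) and invoke the BNE inequality. Your Case~1 (deviator wins) is correct and matches the paper's Case~2 essentially verbatim.

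Your Case~2, however, has a genuine gap. The detour through a crossing point $\bt_{-o}$ on the segment from $\bb_{-o}$ to $\s_{-o}$ leaves you needing $v_{j_1}(s_o,\bt_{-o})\le\max\{\gamma,c\}\,v_w(s_o,\bt_{-o})$, and your transfer inequality does \emph{not} give this: the move from $\bb$ to $(s_o,\bt_{-o})$ can raise coordinate $j_1$ itself, which is exactly the ``owner'' case you flagged as uncovered. The fix is simply to drop the crossing point and argue at $(s_o,\bb_{-o})$ directly, as the paper does. From $v_w(\bb)\ge v_{j_1}(\bb)$ and a single raise of coordinate $o$ (so $\delta_{j_1}=0$, since $j_1\neq o$) one gets $\max\{\gamma,c\}\,v_w(s_o,\bb_{-o})\ge v_{j_1}(s_o,\bb_{-o})\ge v_o(s_o,\bb_{-o})$; a second application, now raising $\bb_{-o}$ to $\s_{-o}$ (so $\delta_o=0$), yields $\max\{\gamma,c\}\,v_w(\s)\ge v_o(\s)$. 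No intermediate profile is needed.

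Your sketch for the sharp bound $\gamma$ under SC is too vague and misses the key idea. It is not that the transfer inequalities sharpen; rather, the paper (Claim~C.2) observes that under SC, NOB and GVA, raising one's own bid to $s_i$ is \emph{always} weakly beneficial (a winner stays a winner by SC, with unchanged critical payment; a loser either stays a loser or wins with nonnegative utility). Hence in any BNE the truthful deviation utility of the optimal bidder must be \emph{exactly} zero at every realization, and this zero-utility fact is what eliminates the additive $\eq(\sigma)$ term and yields the pointwise bound $\gamma\,v_{w(\bb)}(\s)\ge v_o(\s)$ directly.
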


Theorem~\ref{prop:upper-bound:gamma-bound+c-sc:GVA} is essentially tight, as illustrated in the following theorem.
\begin{restatable}{theorem}{singlelb}
	\label{pro:gamma-bound-csc:POA-close-gamma-c}
	There exist single-item settings satisfying $\gamma$-heterogeneity and $c$-SC  such that the $\npoa$ of the generalized Vickrey auction is arbitrarily close to $\max\{c,\gamma\}$, even under no-overbidding.\footnote{The theorem holds even under the additional assumption of Submodularity over Signals, for more information refer to \cite{EdenFFGK19}}
\end{restatable}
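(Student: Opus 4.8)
The plan is to exhibit, for each pair $\gamma,c\ge 1$, an explicit three-bidder single-item instance with linear valuation functions together with a pure Nash equilibrium of GVA under NOB whose welfare ratio $\opt/\eq$ tends to $\max\{c,\gamma\}$; since $\npoa$ is a supremum over all pure Nash equilibria this is enough. I would split according to whether $\max\{c,\gamma\}$ equals $\gamma$ or $c$ (either construction works when they coincide), noting in passing that both constructions are \emph{modular}, hence satisfy the SoS condition referenced in the footnote, and that by \cref{obs:equil-hierarchy} the bound also lower-bounds $\bpoa$.

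\textbf{Case $\gamma\ge c$.} Take $S_1=S_2=S_3=[0,1]$ and $v_1(\bs)=s_1+\gamma s_3$, $v_2(\bs)=s_2+s_3$, $v_3(\bs)=\gamma s_3$. A direct check shows this profile is $\gamma$-heterogeneous — the only non-degenerate comparison is for agent $3$, whose signal moves $v_1$ by $\gamma$ and $v_2$ by $1$ — and is $c$-SC for every $c\ge1$ (indeed $1$-SC). Since $v_1(\bs)-v_3(\bs)=s_1\ge0$ for every $\bs$, bidder $3$ is out-valued by bidder $1$ at every bid profile, so she never wins GVA and is indifferent among all her bids. I would then fix the true profile $s_1=\varepsilon,\ s_2=2\varepsilon,\ s_3=1$ and the strategy profile in which bidder $3$ bids $0$ and bidders $1,2$ bid truthfully. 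At the reported profile the mechanism sees values $\varepsilon,\,2\varepsilon,\,0$, so bidder $2$ wins with true value $v_2(\bs)=1+2\varepsilon$, while the efficient allocation gives the item to bidder $1$ with value $v_1(\bs)=\gamma+\varepsilon$. This is a PNE: bidder $1$ cannot out-value bidder $2$ at any NOB bid (her bid is capped at $\varepsilon<2\varepsilon$, and $v_1$ does not see $s_3$); bidder $2$ is already winning and pays her critical-bid value $\approx\varepsilon$; bidder $3$ is indifferent. Letting $\varepsilon\to0$ drives $\opt(\bs)/\eq(\sigma,\bs)$ to $\gamma$.

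\textbf{Case $c\ge\gamma$.} Here I would exploit the phenomenon that $c$-SC permits but SC forbids: a bidder whose signal raises everyone else's value by more than her own, who therefore \emph{strictly} prefers to underbid. Take $S_i=[0,1]$ and $v_j(\bs)=s_j+c\sum_{i\ne j}s_i$ for $j=1,2,3$ (equivalently $v_j(\bs)=c\sum_i s_i-(c-1)s_j$). Each signal's influence on every bystander equals $c$, so the profile is $\gamma$-heterogeneous for every $\gamma\ge1$; the own-influence is $1$ and the cross-influence is $c$, so the profile is $c$-SC with equality (and is not SC when $c>1$). Crucially, the bidder with the \emph{largest} signal now has the \emph{smallest} value. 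Fix the true profile $s_1=1,\ s_2=s_3=\varepsilon$ and the strategy profile in which bidder $1$ bids $0$ and bidders $2,3$ bid truthfully. At the reported profile $v_1=2c\varepsilon>(1+c)\varepsilon=v_2=v_3$, so bidder $1$ wins with true value $v_1(\bs)=1+2c\varepsilon$, whereas the efficient allocation gives the item to bidder $2$ (or $3$) with value $c+(1+c)\varepsilon$. This is a PNE: bidder $1$ wins for every bid $\le\varepsilon$ and pays her critical-bid value $v_1(0,\varepsilon,\varepsilon)=2c\varepsilon$, for positive utility $1$ that she cannot improve (bidding above $\varepsilon$ makes her lose); and bidders $2,3$ cannot strictly win against $b_1=0$ (their value can at best tie bidder $1$'s, and then they lose the tie-break), so truthful bidding is a best response for them. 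Letting $\varepsilon\to0$ drives $\opt(\bs)/\eq(\sigma,\bs)$ to $c$.

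Together the two cases give $\npoa\ge\max\{c,\gamma\}-o(1)$, matching (up to the additive constant) the upper bound of \cref{prop:upper-bound:gamma-bound+c-sc:GVA}. The step I expect to be the main obstacle is the second construction and the verification that the stated profile is an equilibrium: unlike the first case — a dominated, indifferent bidder underbidding — the inefficiency here comes from a genuinely bad equilibrium coexisting with an efficient one, and confirming it requires careful use of GVA's critical-bid payment rule and the tie-breaking convention. Checking $\gamma$-heterogeneity and $c$-SC for each instance is routine.
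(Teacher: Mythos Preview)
Your approach is essentially the same as the paper's: a two-case split according to whether $\gamma$ or $c$ dominates, each handled by an explicit three-bidder instance with linear valuations and a no-overbidding PNE in which one bidder underbids. The paper's constructions are different in detail (it scales a coefficient $\beta\to\infty$ with all true signals at $1$, rather than sending signals to $0$), but the mechanics are identical.

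One point worth tightening in your $c$-case: your equilibrium verification for bidders~$2,3$ hinges on the tie-break (``their value can at best tie bidder~$1$'s, and then they lose the tie-break''). Concretely, at $(0,0,\varepsilon)$ you get $v_1=v_2=c\varepsilon$, and if GVA breaks this tie for bidder~$2$ she strictly prefers the deviation $b_2=0$ (her resulting utility would be $c+\varepsilon>0$), so the profile fails to be a PNE. The paper's constructions avoid ties altogether, which is cleaner; you can do the same by adding a small constant to $v_1$ (e.g.\ $v_1(\bs)=s_1+c(s_2+s_3)+\eta$ for tiny $\eta>0$), which preserves $\gamma$-heterogeneity and $c$-SC and makes bidder~$1$ strictly win against any NOB deviation by $2$ or $3$.
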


The proof of this lower bound is deferred to the end of this section.
We remark that the upper bound of $1+\max\{\gamma,c\}$ and the lower bound apply also with respect to 2PA with interdependence, see Appendix~\ref{appx:2PA} for more details.

Before proving Therom~\ref{prop:upper-bound:gamma-bound+c-sc:GVA}, we introduce following technical lemma and corollary which will be useful throughout the paper.

\begin{restatable}{lemma}{techlemma}
	\label{lem:j-bound-i-diff}
	Let $\mathbf{v}$ be a valuation profile satisfying $\gamma$-heterogeneity and $c$-SC.
	Then for every two agents $i,j$, signal profile ${\bf s}$ and coordinate-wise non-negative vector $\delta = (\delta_1,\ldots,\delta_{i-1},0,\delta_{i+1},\ldots,\delta_n)$, 
	$$
	v_j(\s+\delta)-v_j(\s) \geq \Big(v_i(\s+\delta)-v_i(\s)\Big)/\max\{\gamma,c\}.
	$$
\end{restatable}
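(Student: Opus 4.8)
The plan is to reduce the multi-coordinate perturbation $\delta$ to a sequence of single-coordinate perturbations and apply the two hypotheses ($\gamma$-heterogeneity and $c$-SC) one coordinate at a time. First I would dispose of the trivial case $i=j$: there $\max\{\gamma,c\}\ge 1$ and $v_i(\s+\delta)-v_i(\s)\ge 0$ (by monotonicity of $v_i$, since $\delta$ is coordinate-wise non-negative), so the claimed inequality is immediate. Hence assume $i\neq j$.

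Next, enumerate the coordinates different from $i$ as $k_1,\dots,k_{n-1}$, set $\s^{(0)}=\s$, and let $\s^{(t)}$ be obtained from $\s^{(t-1)}$ by adding $\delta_{k_t}$ to coordinate $k_t$; thus $\s^{(n-1)}=\s+\delta$. Telescoping gives $v_j(\s+\delta)-v_j(\s)=\sum_{t=1}^{n-1}\bigl(v_j(\s^{(t)})-v_j(\s^{(t-1)})\bigr)$, and likewise for $v_i$. The step from $\s^{(t-1)}$ to $\s^{(t)}$ changes only the signal of agent $k_t$, by the non-negative amount $\delta_{k_t}$, and $k_t\neq i$ by construction. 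I would then split into two cases. If $k_t\neq j$, apply $\gamma$-heterogeneity with $k_t$ as the agent whose signal changes and $j,i$ as the two observed agents, obtaining $\gamma\bigl(v_j(\s^{(t)})-v_j(\s^{(t-1)})\bigr)\ge v_i(\s^{(t)})-v_i(\s^{(t-1)})$. If $k_t=j$, apply $c$-SC with $j$ as the agent whose signal changes and $i$ as the observed agent, obtaining $c\bigl(v_j(\s^{(t)})-v_j(\s^{(t-1)})\bigr)\ge v_i(\s^{(t)})-v_i(\s^{(t-1)})$. (When $\delta_{k_t}=0$ both sides vanish, so the step is vacuous, which also sidesteps the fact that $\gamma$-heterogeneity is stated for $\delta>0$.) In either case, monotonicity of $v_i$ together with $\delta_{k_t}\ge 0$ gives $v_i(\s^{(t)})-v_i(\s^{(t-1)})\ge 0$, so dividing by $\gamma$ (resp.\ $c$) and weakening the denominator to $\max\{\gamma,c\}$ yields $v_j(\s^{(t)})-v_j(\s^{(t-1)})\ge \bigl(v_i(\s^{(t)})-v_i(\s^{(t-1)})\bigr)/\max\{\gamma,c\}$.

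Summing these per-step inequalities over $t=1,\dots,n-1$ and collapsing the two telescoping sums yields $v_j(\s+\delta)-v_j(\s)\ge \bigl(v_i(\s+\delta)-v_i(\s)\bigr)/\max\{\gamma,c\}$, which is the statement of the lemma.

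I do not expect a serious obstacle here; the proof is essentially a hybrid/telescoping argument. The only points needing care are (i) invoking monotonicity of $v_i$ to guarantee that each per-step right-hand-side difference is non-negative before replacing $\gamma$ or $c$ by $\max\{\gamma,c\}$, and (ii) observing that the two hypotheses are precisely tailored to the only two possibilities for the identity of the incremented coordinate relative to $\{i,j\}$ — it is never $i$, and it is either equal to $j$ (use $c$-SC) or distinct from both (use $\gamma$-heterogeneity) — so no other agent configuration ever arises.
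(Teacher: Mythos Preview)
Your proof is correct and follows essentially the same telescoping argument as the paper: decompose the perturbation coordinate by coordinate, apply $c$-SC when the changed coordinate is $j$ and $\gamma$-heterogeneity otherwise, then weaken both constants to $\max\{\gamma,c\}$ and sum. You are in fact slightly more careful than the paper in explicitly handling the case $i=j$ and in noting that monotonicity is needed to justify replacing $\gamma$ and $c$ by $\max\{\gamma,c\}$.
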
 

\begin{proof}
	Assume by renaming that $i=1$ and $j=n$; therefore, $\delta_1=0$. Let $\s^\ell=(s_1+\delta_1, s_2+\delta_2,\ldots,s_\ell+\delta_\ell, s_{\ell+1},\ldots, s_n);$ that is, the vector $\s$ after increasing the first $\ell$ coordinates by their respective $\delta$'s. Note that $\s^1=\s$. We have that 
	\begin{eqnarray*}
	v_j({\s} + \delta) -v_j(\s)  & =& v_n({\s} + \delta) -v_n(\s)\\ 
	& = &v_n(\s^n) - v_n(\s^{n-1})+\sum_{k=2}^{n-1}\left(v_n(\s^k)-v_n(\s^{k-1})\right)\\
	& \geq &\left(v_1(\s^n) - v_1(\s^{n-1})\right)/c + \sum_{k=2}^{n-1}\left(v_1(\s^k)-v_1(\s^{k-1})\right)/\gamma\\
	& \geq &\left(v_1(\s^n) - v_1(\s^{n-1})\right)/\max\{\gamma,c\} + \sum_{k=2}^{n-1}\left(v_1(\s^k)-v_1(\s^{k-1})\right)/\max\{\gamma,c\} \\
	& = &(v_1({\s} + \delta) -v_1(\s))/\max\{\gamma,c\} \\ & = & (v_i({\s} + \delta) -v_i(\s))/\max\{\gamma,c\},
	\end{eqnarray*}
	where the first inequality follows the $c$-SC and $\gamma$-heterogeneous assumptions.
\end{proof}

The following is a direct corollary of Lemma~\ref{lem:j-bound-i-diff}.
\begin{corollary}
	\label{col:j-bound-i}
	Let $\mathbf{v}$ be a valuation profile satisfying $\gamma$-heterogeneity and $c$-SC.
	For every two agents $i,j$, signal profile ${\bf s}$ and coordinate-wise non-negative vector $\delta = (\delta_1,\ldots,\delta_{i-1},0,\delta_{i+1},\ldots,\delta_n)$, if $v_j(\s) \geq v_i(\s)/d$ for $d\geq \max\{\gamma,c\}$, then $v_j(\s+\delta) \geq v_i(\s+\delta)/d.$
\end{corollary}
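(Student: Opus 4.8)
The plan is to derive the corollary directly from Lemma~\ref{lem:j-bound-i-diff} by a short rearrangement, using the hypothesis $v_j(\s) \geq v_i(\s)/d$ as the base case and the lemma to control the increments. First I would apply Lemma~\ref{lem:j-bound-i-diff} to the fixed agents $i,j$, the signal profile $\s$, and the given vector $\delta$ (which has $i$th coordinate zero, as required), obtaining
\[
v_j(\s+\delta)-v_j(\s) \;\geq\; \frac{v_i(\s+\delta)-v_i(\s)}{\max\{\gamma,c\}}.
\]
Since $d \geq \max\{\gamma,c\}$ and the right-hand side is non-negative (each $v_i$ is weakly increasing in every coordinate and $\delta$ is coordinate-wise non-negative), I can weaken the denominator to $d$:
\[
v_j(\s+\delta) \;\geq\; v_j(\s) + \frac{v_i(\s+\delta)-v_i(\s)}{d}.
\]

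Now I plug in the hypothesis $v_j(\s) \geq v_i(\s)/d$ to get
\[
v_j(\s+\delta) \;\geq\; \frac{v_i(\s)}{d} + \frac{v_i(\s+\delta)-v_i(\s)}{d} \;=\; \frac{v_i(\s+\delta)}{d},
\]
which is exactly the claimed inequality. That completes the argument.

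The only point requiring a word of care is the step where I replace $\max\{\gamma,c\}$ by the larger quantity $d$ in the denominator: this is valid precisely because the numerator $v_i(\s+\delta)-v_i(\s)$ is non-negative, so enlarging the denominator only decreases the fraction, preserving the direction of the inequality. I do not anticipate any real obstacle here — the content is entirely in Lemma~\ref{lem:j-bound-i-diff}, and the corollary is a one-line bookkeeping consequence; the main thing to state explicitly is the monotonicity of $v_i$ that guarantees the sign of the increment.
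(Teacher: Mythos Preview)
Your proof is correct and follows exactly the same approach as the paper: apply Lemma~\ref{lem:j-bound-i-diff}, weaken the denominator from $\max\{\gamma,c\}$ to $d$, and add the hypothesis $v_j(\s)\ge v_i(\s)/d$. Your version is slightly more careful in justifying the denominator replacement via monotonicity of $v_i$, which the paper leaves implicit.
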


\begin{proof}
	By Lemma~\ref{lem:j-bound-i-diff},
	$v_j(\s+\delta)-v_j(\s) \geq \left(v_i(\s+\delta)-v_i(\s)\right)/\max\{\gamma,c\}\geq \left(v_i(\s+\delta)-v_i(\s)\right)/d.$
	Combining this with the fact that $v_j(\s) \geq v_i(\s)/d$ establishes the Lemma.
\end{proof}

We now proceed with the proof of \cref{prop:upper-bound:gamma-bound+c-sc:GVA}.

\begin{proof}[Proof of \cref{prop:upper-bound:gamma-bound+c-sc:GVA}]
	Let $\bs$ be a signal profile and $\bids$ be a bid profile such that $\bids \leq \s$. Let $i\in \argmax_j v_j(\s)$, and $w(\bids)$ be the winner under $\bids$. We prove that
	\begin{eqnarray}
	u_i((s_i,\bids_{-i});\s) \geq v_i(\s) - \max\{\gamma,c\}v_{w(\bids)}(\s).\label{eq:single_item_smoothness}
	\end{eqnarray}
	We distinguish between the following two cases: 
	
	 {\bf Case 1:} Bidder $i$ does not win the item under bidding profile $(s_i,\bids_{-i})$. Thus, her utility is $0$. Moreover, there exists a bidder $j\neq i$ such that $v_j(s_i,\bids_{-i})\geq v_i(s_i,\bids_{-i}).$ 
	Since bidder $w(\bids)$ wins at $\bids$, $v_{w(\bids)}(\bids)\geq v_j(\bids).$
	By Lemma~\ref{col:j-bound-i} (using $d=\max\{\gamma,c\}$),
	$$\max\{\gamma, c\} v_{w(\bids)}(s_i,\bids_{-i})\ \geq\  v_j(s_i,\bids_{-i})\ \geq\ v_i(s_i,\bids_{-i}).$$ Applying Lemma~\ref{col:j-bound-i} again, we get $$\max\{\gamma, c\} v_{w(\bids)}(\s) \geq v_i(\s).$$ 
	Therefore,
	$$u_i((s_i,\bids_{-i}); \s) \ =\ 0 \ \geq\ v_i(\s)-\max\{\gamma,c\}v_{w(\bids)}(\s).$$
	
	 {\bf Case 2:} Bidder $i$ wins the item under bidding profile $(s_i,\bids_{-i})$. Let $b^*_i\leq s_i$ be $i$'s critical bid given $\bids_{-i}$.  $u_i((s_i,\bids_{-i});\s)=v_i(\s)-v_i(b^*_i,\bids_{-i})$.  If $w(\bids)=i$, $$u_i((s_i,\bids_{-i});\s) \geq 0 \geq   v_i(\s)- \max\{\gamma, c\} v_{w(\bids)}(\s).$$ 
	Otherwise, $w(\bids) \neq i$.
	Let $b'_i \geq b_i$ be the lowest signal such that $v_i(b'_i, \bids_{-i}) \in \argmax_k{v_k(b'_i, \bids_{-i})}$. There exists such $b'_i \leq s_i$ since $i$ does not win under bidding profile $\bids$ but wins under bidding profile $(s_i, \bids_{-i})$. By the same argument and the continuity of the valuations, there exists $j\neq i$ such that $v_j(b'_i, \bids_{-i}) \in \argmax_k{v_k(b'_i, \bids_{-i})}$.
	By Corollary~\ref{col:j-bound-i},  $$\max\{\gamma, c\} v_{w(\bids)}(b'_i,\bids_{-i})\geq v_j(b'_i,\bids_{-i})=v_i(b'_i,\bids_{-i})\geq v_i(b^*_i,\bids_{-i}),$$ where the last inequality is due to the critical bid definition. Therefore,
	$$u_i((s_i,\bids_{-i});\s) \geq v_i(\s)-\max\{\gamma, c\} v_{w(\bids)}(b'_i,\bids_{-i})
	\geq v_i(\s) - \max\{\gamma, c\} v_{w(\bids)}(\s),$$
	where the last inequality follows by $\bids \leq \s$ and the monotonicity of the valuations. 
	This concludes the proof of \eqref{eq:single_item_smoothness}
	
	We can now establish the bound on the $\bpoa$:
	Let $\sigma=(\sigma_1,\ldots, \sigma_n)$ be a BNE satisfying NOB; that is, $\bids\leq \s$ for any $\bids$ in the support of $\sigma(\s)$. 
	Let $\mathbb{I}_i(\s)$ be the indicator variable of the event $i=\argmax_j v_j(\s)$ (breaking ties arbitrarily). Recall that $\eq(\sigma)$ denotes the welfare in equilibrium $\sigma$. 
	 We get: 
	\begin{eqnarray}
	\eq(\sigma) & \geq& \E_{\s}\Big[\sum_i u_i(\sigma(\s);\s)\Big] \label{eq:11}\\
	&\geq &\sum_i \E_{\s}[u_i((s_i,\sigma_{-i}(\s_{-i}));\s)] \label{eq:22}\\
	& \geq &\sum_i \E_{\s}[\mathbb{I}_i(\s)\cdot(v_i(\s)-\max\{\gamma,c\}v_{w(\sigma(\s))}(\s))]\label{eq:33}\\
	& = &\E_{\s}\Big[\sum_i \mathbb{I}_i(\s)v_i(\s)\Big] - \max\{\gamma,c\}\E_{\s}\Big[\sum_i \mathbb{I}_i(\s)v_{w(\sigma(\s))}(\s)\Big] \nonumber\\
	& = &\E_{\s}[\max_i v_i(\s)]-\max\{\gamma,c\}\E_{\s}[v_{w(\sigma(\s))}(\s)] \nonumber\\
	& = &\opt\ -\  \max\{\gamma,c\}\eq(\sigma)\nonumber,
	\end{eqnarray} 

	where \eqref{eq:11} holds since the sum of utilities is dominated by the welfare. 
	\eqref{eq:22} holds by the equilibrium hypothesis (and linearity of expectation).
	\eqref{eq:33} holds by \eqref{eq:single_item_smoothness}.
	With exact SC (i.e., $c=1$), the bound further improves to $\gamma$ (see \cref{prop:GVA:gamma-bound-sc:gamma-BPOA}).
\end{proof}

\begin{proof}[Proof of Theorem~\ref{pro:gamma-bound-csc:POA-close-gamma-c}]
		\par{\bf Case 1: $\npoa$ is $c$.}
		Consider the signal spaces $S_1 = S_2 = S_3 =[0,1]$ and the following valuations:
		$$
		\begin{array}{ll}
		v_1 \ =\ & \beta s_1 + 1\\
		v_2 \ =\ & c\beta s_1 + \epsilon s_2\\
		v_3 \ =\ & c\beta s_1 + \epsilon s_3
		\end{array}
		$$
		where $\epsilon$ is close to 0.

		Let the real signal be ${\bf s} ={\bf 1}$.
		We claim that bid vector $(0,1,1)$ is an equilibrium; agent $1$ wins, and pays either $1$ in generalized Vickrey (critical bid payment), and she can not increase her utility by increasing her bid (her payment is minimal with respect to $b_{-1}$).
		The other agents can not increase their bids by the no overbidding condition, and can not win by decreasing their bids, thus, both of the agents do not have a beneficial deviation.
		
		In this equilibrium, the achieved welfare is $\beta + 1$, but the maximum welfare is $c\beta + \epsilon$. As $\beta$ can be arbitrarily large, the $\npoa$ is arbitrarily close to $c$.
		
		\par{\bf Case 2: $\npoa$ is $\gamma$.}
		Consider the signal spaces $S_1 = S_2 = S_3 =[0,1]$ and the following valuations:
		$$
		\begin{array}{ll}
		v_1\ =\ & \gamma\beta s_1\\
		v_2\ =\ & \beta s_1 + s_2\\
		v_3\ =\ & \gamma\beta s_1 + \epsilon s_3
		\end{array}
		$$
		where $\epsilon$ is close to 0.
		
		Let the real signal be ${\bf s} ={\bf 1}$.
		We claim that bid vector $(0,1,1)$ is an equilibrium; agent $2$ wins and pays $\epsilon$, and in she can not increase her utility by increasing her bid (her payment is minimal with respect to $b_{-2}$).
		Agent $1$ can not win by increasing her bid, because for every bid she send, agent $3$ has a larger value than her, and she can not decrease her bid due to its minimality in her signal space; moreover, agent $3$ can not increase her bid due to its maximality in her signal space, and can not win the item by decreasing her bid.
		Thus, both agents do not have a beneficial deviation.
		
		In this equilibrium, the achieved welfare is $\beta + 1$, but the maximum welfare is $\gamma\beta + \epsilon$. As $\beta$ can be arbitrarily large, the $\npoa$ is arbitrarily close to $\gamma$.
\end{proof}

\section{Multiple Items: A Positive Result}
\label{sec:multiple-pos}
In this section, we study settings with multiple items and unit-demand agents. After some multi-item preliminaries, we identify a condition limiting knowledge asymmetry among agents, which can be used to accommodate positive PoA results. We then use it to prove a PoA guarantee for a simple simultaneous auction in the $n\ge m$ regime (\cref{thm:multi-pos}). (For the $n \ll m$ regime see Section~\ref{sec:multiple-neg}.)
\subsection{Multi-item Preliminaries}
\label{sub:multiple-prelim}




\paragraph{Participation.}
Similarly to \cite{SyrgkanisT13}, we assume agents can communicate to the item auctions whether or not they wish to participate and compete for the item being sold (in addition to their signal report).
The overall report of agent~$i$ is $(b_i,a_i)$, where $b_i$ is the agent's vector of reports (bids) of her $m$ signals, and ${a}_i$ is a participation vector with $a_{i\ell}=1$ if the agent participates in the auction for item $\ell$ and $a_{i\ell}=0$ otherwise. 
The agent's (mixed) strategy $\sigma_i(s_i)$ given her vector of signals is then a distribution over possible reports $(b_i,a_i)$.
In Appendix~\ref{appx:multiple-pos} we show that if agents are required to participate in {\em all} auctions, this leads to a bad PoA.

%
%

\paragraph{Notation and Assumptions.}
Let $\bids$ (respectively, $\avec$) be a profile of $n$ signal reports $b_i$ (respectively, $a_i$), and denote by
$\sigma(\s)$ a strategy profile given signal profile $\s$, that is, a distribution over $(\bids,\avec)$ pairs.
We use $X_i(\bids,\avec)$ to denote the item subset allocated in total by the separate auctions to agent~$i$, given the bid and participation profiles $\bids,\avec$. 

In this section, we state our results for valuation profiles satisfying c-SC and $\gamma$-heterogeneity.
Standard NOB assumptions for multiple items essentially restrict the sum of every agent's bids for items in a subset $X$ to at most the agent's value for $X$. 
We now formalize this notion under IDV:

\begin{definition}[Multi-item NOB for IDV]
	\label{def:multi-nob}
	With multiple items, a strategy profile $\sigma=(\sigma_1,\dots,\sigma_n)$ satisfies NOB if for every agent $i$ with signal $s_i$ it holds that 
	
	\begin{eqnarray}
	\E_{\substack{\s_{-i}\sim F_{-i},\\(\bids,\avec)\sim \sigma(\s)}} \Big[\sum_{\ell \in X_i(\bids,\avec)}v_{i\ell}(b_{i\ell},\s_{-i\ell})\Big] 
	\leq \E_{\substack{\s_{-i}\sim F_{-i},\\(\bids,\avec)\sim \sigma(\s)}} \left[v_{i}(X_i(\bids,\avec);\s)\right].
	\end{eqnarray}

\end{definition}
Note that this coincides with the standard NOB assumption if valuations are private.
\subsection{The Condition of Limited Knowledge Asymmetry}\; 
\label{sub:sufficient-cond}

We formulate a condition 
that enables our positive PoA result.
We use the following definition:

\begin{definition}[Truncated values and welfare] 
	\label{def:truncated}
	For every agent $i$ and item $\ell$, 
	the \emph{truncated value} $\tilde{v}_{i\ell}$ given signal profile $\s$ is 
	$\tilde{v}_{i\ell}(\s)=\tilde{v}_{i\ell}(\s_\ell)=\min_{j\neq i}v_{i\ell}(\s_{-j\ell},0_{j\ell}).$
 	The \emph{truncated optimal matching} $\widetilde{m}(\s)$ is a matching in $\argmax_{\text{matching }\mu}\{\sum_{(i,\ell)\in \mu}\tilde{v}_{i\ell}(\s)\}$, (breaking ties arbitrarily),
 	and the \emph{expected truncated welfare} is $\tildeopt=\E_{\s\sim F}\big[\sum_{(i,\ell)\in\widetilde{m}(\s)}\tilde{v}_{i\ell}(\s)\big]\nonumber\nonumber$.
\end{definition}
In words, truncated value $\tilde{v}_{i\ell}(\s)$ is agent $i$'s value for item $\ell$ after the most significant signal except for $i$'s own has been zeroed out; $\tildem(\s)$ is an allocation that maximizes the social welfare with respect to the truncated values, given~$\s$; and $\tildeopt$ is the optimal social welfare in expectation over $\s$ with respect to truncated values.
Observe that by the monotonicity of values in signals, $\tildeopt\le \opt$. 

Given an interdependent setting, the \emph{limited knowledge asymmetry} condition is the following: there exists a constant $d$ such that 

 \begin{equation}
d\cdot\tildeopt \geq \opt.\label{eq:condition}
\end{equation}

Intuitively, if a truncated value $\tilde{v}_{i\ell}(\s)$ is far from the true value $v_{i\ell}(\s)$, this means that agent $i$'s value for item $\ell$ is largely determined by the information that is held by a single agent who is not $i$ herself. 
This means there is information asymmetry among $i$ and the informed agent.
The condition in \eqref{eq:condition} rules out settings in which nearly all the welfare stems from such information asymmetry. In extreme such settings, all information shaping the values can be traced to a single agent -- as in the classic and well-studied ``drainage tract'' model of Wilson \cite[]{wilson1969communications,milgrom2004putting}. We now give two opposite examples: a natural setting in which the condition in \eqref{eq:condition} holds (\cref{ex:weighted-sum-revisited}), and an extreme setting in which one agent has all the information (\cref{ex:tildeopt-is-far}). For the second example we show a lower bound of $\Omega(m)$ on the PoA of a natural family of mechanisms. 
\begin{example}[Weighted-sum valuations revisited]
	\label{ex:weighted-sum-revisited}
	Recall the Resale model, introduced in \cref{example:running}. We extend this example to multiple items: Assume for simplicity that $n=m$. For every unit-demand agent $i$ and item $\ell$, let $v_{i\ell}(\s)=s_{i\ell} + \beta\sum_{j\ne i} s_{j\ell}$ for $\beta\le 1$ (where the restriction on $\beta$ maintains the SC property). 
	We assume all signals are drawn i.i.d.~from a monotone hazard rate (MHR) distribution (such as a uniform, normal, or exponential).
\end{example}
\begin{restatable}{proposition}{weightedsumrevisited}
	\label{pro:weighted-sum-revisited}
	In \cref{ex:weighted-sum-revisited}, for a sufficiently large $n$, $(1+e)\tildeopt\ge \opt$.
\end{restatable}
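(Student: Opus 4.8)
The plan is to show that the expected optimal welfare $\opt$ and the expected truncated welfare $\tildeopt$ are each $\Theta(n)$, with a ratio of at most $1+e$ in the limit. First I would upper-bound $\opt$. Since signals are i.i.d.\ from an MHR distribution with some finite mean $\mu$ and each agent's value for item $\ell$ is $v_{i\ell}(\s)=s_{i\ell}+\beta\sum_{j\neq i}s_{j\ell}$, the total value of \emph{any} matching $\mu$ is $\sum_{(i,\ell)\in\mu}\big(s_{i\ell}+\beta\sum_{j\neq i}s_{j\ell}\big)$. A clean way to bound this: for each item $\ell$, the term $\beta\sum_{j\neq i}s_{j\ell}$ is at most $\beta\sum_{j}s_{j\ell}$ regardless of who is matched, and this contributes (in expectation, over the $n=m$ items each matched once) at most $n\cdot\beta\cdot(n\mu)$… which is too lossy. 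Instead I would split $v_{i\ell}=s_{i\ell}+\beta(T_\ell-s_{i\ell})$ where $T_\ell=\sum_j s_{j\ell}$, so $v_{i\ell}=(1-\beta)s_{i\ell}+\beta T_\ell$. Summing over a matching, $\sum_{(i,\ell)\in\mu}v_{i\ell}=(1-\beta)\sum_{(i,\ell)\in\mu}s_{i\ell}+\beta\sum_\ell T_\ell$. The second term is matching-independent with expectation $\beta n\cdot n\mu=\beta n^2\mu$ — wait, that's $\Theta(n^2)$, not $\Theta(n)$. So $\opt=\Theta(n^2\mu)$ when $\beta>0$ is a constant, dominated by the common-value-like term $\beta\sum_\ell T_\ell$, and the matching only affects the lower-order $(1-\beta)$ term, which is at most $(1-\beta)\cdot n\cdot\E[\max_i s_{i\ell}]$ summed appropriately.

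Next I would compute $\tildeopt$. The truncated value is $\tilde v_{i\ell}(\s)=\min_{j\neq i}v_{i\ell}(\s_{-j\ell},0_{j\ell})=v_{i\ell}(\s_\ell)-\beta\max_{j\neq i}s_{j\ell}$, since zeroing out signal $s_{j\ell}$ removes exactly $\beta s_{j\ell}$ from $v_{i\ell}$ (as $i$'s own coefficient is $1$ and the others' is $\beta$), and the minimum over $j\neq i$ is achieved by removing the largest such signal. So $\tilde v_{i\ell}(\s)=(1-\beta)s_{i\ell}+\beta T_\ell-\beta\max_{j\neq i}s_{j\ell}$. For any matching, $\sum_{(i,\ell)\in\mu}\tilde v_{i\ell}=(1-\beta)\sum_{(i,\ell)\in\mu}s_{i\ell}+\beta\sum_\ell T_\ell-\beta\sum_{(i,\ell)\in\mu}\max_{j\neq i}s_{j\ell}$. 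The dominant term $\beta\sum_\ell T_\ell=\Theta(n^2\mu)$ survives intact; the correction $\beta\sum_{(i,\ell)\in\mu}\max_{j\neq i}s_{j\ell}$ is, per item, at most $\beta\cdot\E[\max_{j}s_{j\ell}]$, which for an MHR distribution on $n$ samples is $O(\mu\log n)$ (or $O(\mu)$ for bounded-support distributions like the uniform), hence the total correction is $O(n\mu\log n)=o(n^2\mu)$. Therefore both $\opt$ and $\tildeopt$ equal $\beta\,\E[\sum_\ell T_\ell](1+o(1))=\beta n^2\mu(1+o(1))$, and their ratio tends to $1$, which is certainly at most $1+e$ for $n$ large enough.

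The role of the constant $1+e$ and the MHR hypothesis suggests the intended proof is tighter/different from the crude bound above — most likely it handles the case where $\beta$ may itself be small (even $\beta\to 0$ as part of the setting), in which case the $\beta\sum_\ell T_\ell$ term no longer dominates and one must genuinely compare the matching-sensitive parts. In that regime the key quantities are $\E[\opt]$, governed by choosing, for each item, a near-maximal $s_{i\ell}$, and the correction in $\tildeopt$ which subtracts $\beta\max_{j\neq i}s_{j\ell}$. The main obstacle is bounding $\E\big[\sum_{(i,\ell)\in\tildem(\s)}\max_{j\neq i}s_{j\ell}\big]$ against $\tildeopt$ itself: one wants to argue that the subtracted ``second-max-type'' terms are a bounded fraction of the truncated welfare. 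I would handle this by noting that in the truncated optimal matching, each agent $i$ matched to $\ell$ has $s_{i\ell}$ reasonably large (else a swap improves things), and invoke MHR to control the ratio $\E[\max_{j\neq i}s_{j\ell}]/\E[\max_j s_{j\ell}]$ and the concentration of the $n$-fold maximum of MHR variables — this is where the explicit constant $1+e$ would come from, presumably via a union-bound / expectation-of-maximum estimate for MHR distributions (e.g.\ $\E[\max \text{ of } n\text{ MHR}] \le (1+\ln n)/\text{hazard rate at median}$ type bounds, combined with $n$ large). The remaining steps — linearity of expectation, the observation $\tildeopt\le\opt$ already noted in the text, and assembling the $(1+e)$ factor — are routine once that one estimate is in place.
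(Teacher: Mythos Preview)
Your crude argument is correct for the proposition as stated (with $\beta$ a fixed parameter of the example): both $\opt$ and $\tildeopt$ equal $\beta n^2\mu\,(1+o(1))$ and hence their ratio tends to $1$, certainly below $1+e$ for large $n$. This is a genuinely different route from the paper's. The paper instead writes $v_{i\ell}=\tilde v_{i\ell}+\beta\max_{j\ne i}s_{j\ell}$, so that $\opt\le\tildeopt+\beta\,\mathbb E\big[\sum_\ell s_{(1)\ell}\big]$; it then invokes the MHR fact (via welfare $\le e\cdot$revenue for i.i.d.\ MHR bidders, together with second-price revenue $=\mathbb E[s_{(2)\ell}]$) that $\mathbb E[s_{(1)\ell}]\le e\,\mathbb E[s_{(2)\ell}]$ asymptotically, and lower-bounds $\tildeopt\ge\beta\,\mathbb E\big[\sum_\ell s_{(2)\ell}\big]$ using the observation that every truncated value satisfies $\tilde v_{i\ell}\ge\beta s_{(2)\ell}$. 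This yields $\opt\le(1+e)\tildeopt$ \emph{uniformly in $\beta$}, which both explains the constant and is where the MHR hypothesis does real work. Your approach is more elementary and in fact needs almost nothing about MHR (only $\mathbb E[\max\text{ of }n]=o(n\mu)$, true for any finite-mean distribution), but the threshold ``sufficiently large $n$'' in your argument depends on $\beta$. Your speculation about small $\beta$ correctly senses that a uniform bound is what produces the $1+e$, but the mechanism you propose---controlling $\mathbb E[\max_{j\ne i}s_{j\ell}]/\mathbb E[\max_j s_{j\ell}]$---is not the relevant ratio (it is trivially at most $1$); the key comparison is $\mathbb E[s_{(1)\ell}]$ versus $\mathbb E[s_{(2)\ell}]$, which MHR bounds by $e$.
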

The proof of \cref{pro:weighted-sum-revisited} appears in \cref{appx:multiple-pos}.
\begin{example}[Agent 1 holds almost all information]
	\label{ex:tildeopt-is-far}
	Consider $n$ unit-demand agents with single-dimensional signals $s_1,\dots,s_n$, and signal spaces $S_1 = [0,1]$ for agent 1 and $S_i = \{0\}$ for every agent $i\neq 1$. The agents have the following values for every item $\ell\in[n]$: Agent 1's value is
	$v_{1\ell} =  s_{1}$, and for every agent $i \neq 1$, $v_{i\ell} = (1-\epsilon)s_{1}$. 
	Note these valuations are SC and $\gamma$-homogeneous.
\end{example}

In Example~\ref{ex:tildeopt-is-far}, the truncated values are $\tilde{v}_{1\ell} = s_1$ and $\tilde{v}_{i\ell}=0$ for $i \neq 1$. We get that $\tildeopt = s_1$ while $\opt > n(1-\epsilon)s_1$, so there is no constant $d$ for which \eqref{eq:condition} holds.



\begin{restatable}{proposition}{multioptfartildeopt}
	\label{prop:multi-neg-for-2nd-price}
	For each bidder let $h_{i\ell}(\s)$ be a monotone function such that $\forall \s, h_{i\ell}(\s) \leq v_{i\ell}(\s)$, and $h_{i\ell}(s_{i\ell}, {\bf 0}_{-i\ell}) = v_{i\ell}(s_{i\ell}, {\bf 0}_{-i\ell})$.
	Every mechanism that allocates each item separately to the highest-valued agent according to $h_{i\ell}(\s)$, and charges an agent zero payment if no-one else participates, has $\epoa$ of~$\Omega(m)$ even under NOB.
	In particular, by $h_{i\ell}(\s) = v_{i\ell}(\s)$ we get that this holds for simultaneous 2PA or GVA auctions. 
\end{restatable}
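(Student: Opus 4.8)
The plan is to realize the lower bound through \cref{ex:tildeopt-is-far} with $m=n$ items, exhibiting an ex-post equilibrium in which the single ``informed'' agent (agent $1$) wins \emph{every} item. I would fix the signal profile with $s_1=1$, so that $v_{1\ell}(\s)=1$ and $v_{i\ell}(\s)=1-\epsilon$ for every item $\ell$ and every $i\neq 1$, and note that the optimal matching assigns item $1$ to agent $1$ (value $1$) and item $\ell$ to agent $\ell$ for $\ell\ge 2$ (value $1-\epsilon$ each), so $\opt(\s)=1+(n-1)(1-\epsilon)=\Omega(m)$. The strategy profile $\sigma$ I would analyze is: agent $1$ participates in all $m$ auctions and bids $b_{1\ell}=\tfrac1m$ on every item, while each agent $i\neq 1$ bids $0$ (forced, since $S_i=\{0\}$) and declares non-participation everywhere; at a general realization $s_1$ one simply scales agent $1$'s bids to $s_1/m$ so that $\sigma$ is defined (and an equilibrium) for all signal profiles, as the ex-post notion requires.

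Given this, I would carry out three verifications. (i) \emph{Allocation, payments, welfare.} In each auction agent $1$ is the only participant, hence she wins every item and, by the ``zero payment if no-one else participates'' clause, pays $0$; so $X_1(\bids,\avec)=[m]$, the equilibrium welfare is $\eq(\sigma,\s)=v_1([m];\s)=\max_\ell v_{1\ell}(\s)=1$, and agent $1$'s utility is $1$. (ii) \emph{NOB.} Since $v_{1\ell}(b_{1\ell},{\bf 0}_{-1\ell})=b_{1\ell}$ in this setting, the left-hand side of \cref{def:multi-nob} for agent $1$ equals $\sum_{\ell\in[m]}b_{1\ell}=1=v_1([m];\s)$, and for every other agent both sides are $0$; so $\sigma$ satisfies multi-item NOB. (iii) \emph{No profitable deviation.} Agent $1$ already attains utility $1$, an upper bound on the utility of any unit-demand agent whose value for every bundle is at most $1$ and whose payments are non-negative, so she is best-responding. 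For an agent $i\neq 1$, the only admissible bid is $0$, and if she participates in any auction $\ell$ then $h_{1\ell}(b_{1\ell},0,{\bf 0})=v_{1\ell}(b_{1\ell},{\bf 0})=b_{1\ell}$ while $h_{i\ell}(b_{1\ell},0,{\bf 0})\le v_{i\ell}(b_{1\ell},{\bf 0})=(1-\epsilon)b_{1\ell}<b_{1\ell}$, so agent $1$ still strictly out-values her under $h$ and wins, leaving agent $i$ with utility $0$; hence agent $i$ cannot gain. Therefore $\sigma$ is an ex-post equilibrium under NOB, giving $\epoa\ge \opt(\s)/\eq(\sigma,\s)=1+(m-1)(1-\epsilon)$, and letting $\epsilon\to 0$ yields $\epoa=\Omega(m)$. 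The ``in particular'' statement follows immediately, since simultaneous 2PA and GVA are of the prescribed form with $h_{i\ell}=v_{i\ell}$ (which trivially obeys $h_{i\ell}\le v_{i\ell}$ and $h_{i\ell}(s_{i\ell},{\bf 0}_{-i\ell})=v_{i\ell}(s_{i\ell},{\bf 0}_{-i\ell})$) and both charge $0$ to a sole participant (no second price in 2PA; critical bid $0$ and payment $v_{i\ell}(0,{\bf 0})=0$ in GVA).

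The main obstacle --- and the reason agent $1$ bids $\tfrac1m$ rather than her true signal $1$ --- is the interaction with the multi-item NOB constraint of \cref{def:multi-nob}: bidding $1$ on all items would win them all but grossly overbid, since the winning bids would sum to $m$ while agent $1$'s unit-demand value for the whole bundle is only $1$. Distributing a total bid mass of $1$ uniformly across the auctions restores NOB while still letting agent $1$ \emph{strictly} beat every competitor under $h$ on every item --- which is exactly what makes the equilibrium robust to adversarial tie-breaking --- because the competitors' values are a uniform $(1-\epsilon)$-fraction of agent $1$'s, so the strict margin $b_{1\ell}>(1-\epsilon)b_{1\ell}\ge h_{i\ell}$ survives the rescaling. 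The only other point requiring care is keeping the competitors out of the (otherwise empty) auctions, so that the zero-payment clause applies; this is consistent with equilibrium precisely because, by the out-valuing fact, no competitor can profitably enter.
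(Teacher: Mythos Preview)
Your proposal is correct and essentially identical to the paper's proof: both use \cref{ex:tildeopt-is-far}, the same strategy profile (agent~$1$ bids $s_1/m$ on every item and participates everywhere, the others bid $0$ and opt out), and the same equilibrium verification via the chain $h_{1\ell}(b_{1\ell},{\bf 0})=v_{1\ell}(b_{1\ell},{\bf 0})>v_{i\ell}(b_{1\ell},{\bf 0})\ge h_{i\ell}(b_{1\ell},{\bf 0})$. Your write-up is in fact more explicit than the paper's about why the $1/m$ scaling is needed for NOB and about the NOB check itself.
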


%



The proof of \cref{prop:multi-neg-for-2nd-price} appears in \cref{appx:multiple-pos} and uses Example~\ref{ex:tildeopt-is-far}.  As shown in the next section, the above proposition holds for the mechanism we develop and use to get good PoA guarantees. 
We leave open the question of whether there exists a simple mechanism with good PoA for settings such as \cref{ex:tildeopt-is-far} that do not satisfy \eqref{eq:condition}.



\subsection{Positive Result: $n \geq m$ Regime}
We present a simple simultaneous item-bidding mechanism and show that every BNE of it achieves an $O(\max\{\gamma,c\}^2)$-approximation to $\widetilde{\opt}$. 
For valuations satisfying Equation~\eqref{eq:condition}, it implies an $O(d\max\{\gamma,c\}^2)$-approximation to $\opt$.
In the special case of SC (i.e. where $c=1$), the results suggests that every BNE achieves an $O(\gamma^2)$-approximation to $\widetilde{\opt}$ and for valuations satisfying Equation~\eqref{eq:condition}, it implies an $O(d\gamma^2)$-approximation to $\opt$.

\paragraph{Intuition.}
Before introducing the mechanism, we provide some intuition that helps shed light on the choice of our mechanism and its analysis.
Specifically, we emphasize how our analysis is different from the smoothness framework, which has become the standard technique for establishing PoA results for simple auctions.

A key step in the smoothness paradigm is to find an appropriate hypothetical deviation for each player such that for any bids of the other bidders, the utility achieved by the deviation is lower bounded by some fraction of the player's contribution to the optimal welfare, less some error term. 
For the case of independent private values and unit-demand bidders\footnote{For simultaneous second price auctions.},~\citet{ChristodoulouKST16} show that the hypothetical deviation of going {\em all-in} for item $\ell$ gives the following  guarantee:\footnote{Bidding her true value $v_{i\ell}$ for item $\ell$ and $0$ for all other items.}
\begin{equation}
	u_i((v_{i\ell},\mathbf{0}_{i-\ell}),\bids_{-i})\geq v_{i\ell} - \max_{j\neq i} b_{j\ell},
	\label{eq:smooth}
\end{equation}
Inequality \eqref{eq:smooth} suffices to give PoA bounds in full information settings. 
Moreover, this bound also applies to settings with incomplete information via an extension theorem, as described next. 

Even though the agent does not know the realization of other agents' valuations, she can sample from the value distribution of others, and go ``all-in'' for the item $\ell$ she receives in the optimal allocation in the sampled market, thus effectively simulating the contribution of this agent to the optimal assignment (the first term in the right-hand side of \eqref{eq:smooth}). This is termed the `doppelg{\"a}nger' technique. The error term is then bounded by tying it to an allocation for sampled equilibrium bids and using a NOB assumption.

Unfortunately, there are some major obstacles to generalizing this type of analysis to IDV. Whereas one can derive a smoothness-type inequality, its right-hand side expression would take the form of $\gamma\tilde{v}_{i\ell}(\s)-\max_{j\neq i}v_{j\ell}(b_{j\ell},s_{-j\ell})$. Using this inequality, one can derive PoA results for full information games\footnote{We do not include this derivation in the manuscript since it does not help us in proving a bound on the B-PoA.}, but these bounds would not generalize to incomplete information settings for the following reasons:
\begin{enumerate}[(a)]
	\item  The doppelg{\"a}nger technique cannot be directly applied. Agent $i$ is only aware of $s_i$ and is uninformed of 
	$\s_{-i}$, which affects her valuations. Therefore, sampling signals of other bidders $\bt_{-i}$, and going all-in accordingly, will simulate a different value distribution than the desired one, as the bid takes into account only $s_i$ and not $\s_{-i}$.
	\item The error term in \citet{ChristodoulouKST16} crucially depends only on the bids of other bidders, but in IDV the error term depends also on the agents' \textit{real} signals, which prevents from their analysis to go through. 
	\item  The benchmark in our setting should be the truncated optimal allocation and not the true optimal allocation.
\end{enumerate}

The following observation drives our analysis: 
since the agents have access only to their own signal, 
the doppelg{\"a}nger technique can be successfully applied when considering what we refer to as {\em privatized valuations}; namely, the valuation of agent $i$ under $i$'s private signal, while zeroing out the contribution of others' signals. We can then use the $\gamma$-heterogeneous condition to claim that the effect of other bidders' signals on the valuation of two different agents is roughly the same. 

The discussion above motivates the design of the mechanism we propose, which is a simultaneous second-price item auction with respect to the privatized valuations (see \cref{alg:multiple-pos}).
For the sake of analysis, we split the benchmark into two terms. 
The first term ($\mathsf{SELF}$) accounts for the privatized valuations, and the second term ($\mathsf{OTHER}$) accounts for the contribution of the signals of other bidders to the benchmark (see Equation~\eqref{eq:welfare-decomposition}). We proceed by providing smoothness-type inequalities for each of these terms separately (see Lemma~\ref{lem:multiple:bounded-utility}). Specifically, for each term, we identify an appropriate hypothetical deviation, such that the  expected welfare in equilibrium ``covers'' it. 
We proceed by formalizing the intuition given above.

{\it Privatized values.}\; Given a (reported) signal profile $\bb_\ell$ for item $\ell$, agent $j$'s \emph{privatized value} $\hat{v}_{j\ell}$ for $\ell$ is her value when other agents' signals are set to zero: $\hat{v}_{j\ell}(\bb_{\ell})=v_{j\ell}(b_{j\ell},\mathbf{0}_{-j\ell})$. We use $\hat{v}_{j\ell}(\bb_{\ell})$ and $v_{j\ell}(b_{j\ell},\mathbf{0}_{-j\ell})$ interchangeably.
Observe that the privatized value is upper-bounded by the truncated value $\hat{v}_{j\ell}(\bb_{\ell})\le \tilde{v}_{j\ell}(\bb_{\ell})$. 
The mechanism analyzed is this section, is essentially a second price auction on the privatized values of the agent.

\paragraph{Theorem Statement.} We focus on the simple mechanism described in \cref{alg:multiple-pos}, which consists of simultaneous \emph{privatized} second-price auctions. This mechanism allocates every item separately by running a second-price auction over the privatized values of agents participating for this item. Our main theorem in this section is the following:
\begin{figure}[H]
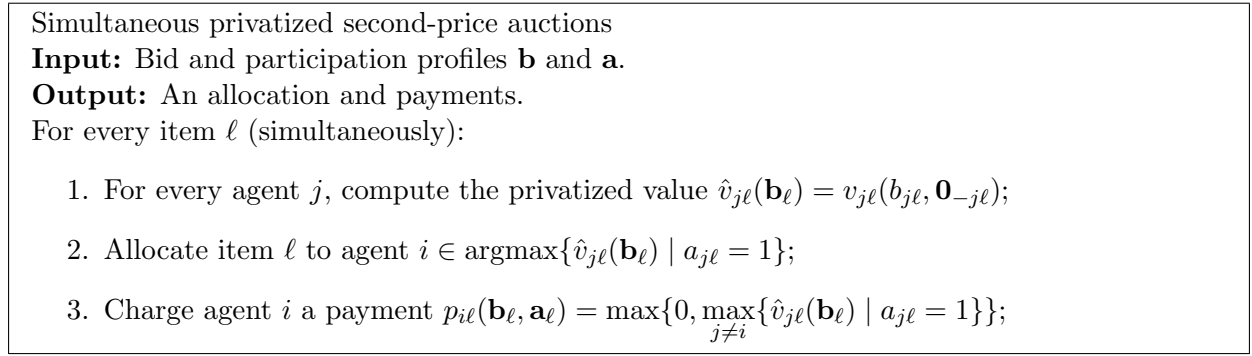

	\MyFrame{
		$\textrm{Simultaneous privatized second-price auctions}$\\
		\textbf{Input:} Bid and participation profiles $\bids$ and $\avec$.\\
		\textbf{Output:} An allocation and payments.
		
		 For every item $\ell$ (simultaneously):
		\begin{enumerate}
			\item For every agent $j$, compute the privatized value 
			$\hat{v}_{j\ell}(\bb_\ell)=v_{j\ell}(b_{j\ell},\mathbf{0}_{-j\ell})$;
			\item Allocate item $\ell$ to agent $i\in\argmax \{\hat{v}_{j\ell}(\bb_\ell)\mid a_{j\ell}=1\}$;
			\item Charge agent $i$ a payment $p_{i\ell}(\bb_\ell,\avec_\ell)=\max\{0,\max\limits_{j\neq i} \{\hat{v}_{j\ell}(\bb_\ell)\mid a_{j\ell}=1\}\}$;
		\end{enumerate}
	}
	\caption{Simultaneous privatized second-price auctions.}
	\label{alg:multiple-pos}
\end{figure}

	
	


		
		
		
		

\begin{restatable}{theorem}{multipleub}\label{thm:multi-pos}
	Consider a setting with $m$ items, $n \geq m$ unit-demand agents, and $\gamma$-heterogeneous, c-SC valuations. If $d\cdot\tildeopt\ge \opt$ for some constant $d$, then the $\bpoa$ of simultaneous privatized second-price auctions (See~\cref{alg:multiple-pos}) under no-overbidding is $O(d\max\{\gamma,c\}^2)$.
\end{restatable}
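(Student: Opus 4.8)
\emph{Setup and decomposition.} The plan is to fix an arbitrary BNE $\sigma$ of the simultaneous privatized second-price auction of \cref{alg:multiple-pos} that satisfies NOB, write $\eq=\eq(\sigma)$, and prove $\tildeopt \le O(\max\{\gamma,c\}^2)\cdot\eq$; since $\opt\le d\cdot\tildeopt$ by hypothesis, this yields the claimed bound on $\bpoa$. The starting point is a welfare decomposition: along the truncated optimal matching $\tildem(\s)$ write $\tildev_{i\ell}(\s)=\hat v_{i\ell}(\s_\ell)+\bigl(\tildev_{i\ell}(\s)-\hat v_{i\ell}(\s_\ell)\bigr)$, so that $\tildeopt=\self+\other$, where $\self=\E_{\s}\bigl[\sum_{(i,\ell)\in\tildem(\s)}\hat v_{i\ell}(\s_\ell)\bigr]$ is the agents' own (privatized) contribution and $\other=\E_{\s}\bigl[\sum_{(i,\ell)\in\tildem(\s)}(\tildev_{i\ell}(\s)-\hat v_{i\ell}(\s_\ell))\bigr]$ is the contribution of the \emph{other} agents' signals. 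I will bound $\self$ and $\other$ separately by $O(\max\{\gamma,c\}^2)\cdot\eq$, each by exhibiting a suitable hypothetical deviation (the role of the smoothness-type lemma promised in the preceding discussion). Throughout, $w_\ell(\bids)$ denotes the winner of item $\ell$ under bids $\bids$; note $\hat v_{w_\ell\ell}(b_{w_\ell\ell})=\max_j\{\hat v_{j\ell}(b_{j\ell})\mid a_{j\ell}=1\}$, so by monotonicity and NOB the realized welfare $\SW(\bids,\s)$ dominates a constant times $\sum_\ell\hat v_{w_\ell\ell}(b_{w_\ell\ell})$.

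\emph{Bounding $\other$.} For each $(i,\ell)\in\tildem(\s)$, monotonicity gives $\tildev_{i\ell}(\s)-\hat v_{i\ell}(\s_\ell)\le v_{i\ell}(\s_\ell)-v_{i\ell}(s_{i\ell},\mathbf{0}_{-i\ell})$, which is the impact of the signals $\s_{-i\ell}$ on agent $i$'s value for item $\ell$. Applying \cref{lem:j-bound-i-diff} with base point $(s_{i\ell},\mathbf{0}_{-i\ell})$, with the increment that raises the remaining coordinates to $\s_{-i\ell}$, and with ``other agent'' $w_\ell(\bids)$, this impact is at most $\max\{\gamma,c\}\cdot v_{w_\ell\ell}(\s_\ell)$. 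Since $\tildem(\s)$ uses each item at most once, summing over the matching and over the random equilibrium bids gives $\other\le\max\{\gamma,c\}\cdot\E\bigl[\sum_\ell v_{w_\ell\ell}(\s_\ell)\bigr]$, and $\sum_\ell v_{w_\ell\ell}(\s_\ell)$ is then bounded against $\eq$ up to a constant --- immediate when every agent wins at most one item, and otherwise obtained from a ``willingness-to-participate'' deviation of a multi-item winner together with the NOB inequality of \cref{def:multi-nob}.

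\emph{Bounding $\self$.} The doppelg\"anger technique applies despite interdependence precisely because $\hat v_{i\ell}(\cdot)$ depends only on agent $i$'s own signal. Agent $i$ samples a phantom profile $\bt_{-i}\sim F_{-i}$, forms $\tildem(s_i,\bt_{-i})$, and if she is matched there to item $\ell$ she deviates by participating only in the auction for $\ell$ and reporting her true $s_{i\ell}$; whether or not she wins, this yields utility at least $\hat v_{i\ell}(s_{i\ell})-\max_{j\ne i}\{\hat v_{j\ell}(b_{j\ell})\mid a_{j\ell}=1\}$. Since $\bt_{-i}$ is independent of the true signal profile, $(s_i,\bt_{-i})$ is a fresh sample from $F$, so summing the BNE inequalities over all agents the positive parts assemble exactly into $\self$, leaving an error term of the form $\sum_i\E\bigl[\I[i\text{ matched in }\tildem(s_i,\bt_{-i})]\,\max_{j\ne i}\hat v_{j\ell_i}(b_{j\ell_i})\bigr]$. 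As in the private-values unit-demand analysis of \citet{ChristodoulouKST16}, this error term is controlled by charging $\max_{j\ne i}\hat v_{j\ell_i}(b_{j\ell_i})$ to the equilibrium winner of item $\ell_i$ and invoking multi-item NOB; a wrinkle relative to private values is that the NOB inequality holds only in expectation, so the error estimate must be carried out in expectation, not pointwise. Summing the resulting bounds on $\self$ and $\other$ completes the proof.

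\emph{Main obstacle.} The step I expect to be hardest is the control of these two error terms, which is also the reason the smoothness framework cannot be invoked as a black box. A smoothness-type inequality in the interdependent setting would have a right-hand side like $\gamma\tildev_{i\ell}(\s)-\max_{j\ne i}v_{j\ell}(b_{j\ell},\s_{-j\ell})$, whose error term depends on the \emph{real} signals $\s$ rather than only on the bids, so the usual incomplete-information extension theorem does not apply. The split into $\self$ (where only own signals matter, so the doppelg\"anger deviation is well-defined) and $\other$ (where $\gamma$-heterogeneity, via \cref{lem:j-bound-i-diff}, transfers the signal impact from the optimal-matching agent to the auction's winner) is exactly what circumvents this; the remaining work is to make the two deviations consistent with the participation reports and with the expectation-only NOB condition, and to push the resulting error terms down to $O(\max\{\gamma,c\}^2)\cdot\eq$.
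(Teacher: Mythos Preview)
Your decomposition $\tildeopt=\self+\other$ and your treatment of $\self$ are essentially the paper's: the same doppelg\"anger deviation, the same pointwise inequality $u_i\ge\hat v_{i\ell}(s_{i\ell})-\max_{j\ne i}\{\hat v_{j\ell}(b_{j\ell})\mid a_{j\ell}=1\}$, and the same NOB-based error control (the paper obtains $2\eq\ge\self$ this way).

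The gap is in your $\other$ bound. You charge each $(i,\ell)\in\tildem(\s)$ to the equilibrium winner $w_\ell(\bids)$ of item $\ell$ and then assert that $\E\bigl[\sum_\ell v_{w_\ell\ell}(\s_\ell)\bigr]=O(\eq)$. This fails on two counts. First, if no agent participates for item $\ell$ there is no $w_\ell$ to charge to, and nothing in the BNE condition forces every item to be contested. Second, even when every item is allocated, $\sum_\ell v_{w_\ell\ell}(\s_\ell)=\sum_j\sum_{\ell\in X_j}v_{j\ell}(\s_\ell)$, and for a unit-demand agent who wins many items this can be arbitrarily larger than $v_j(X_j;\s)=\max_{\ell\in X_j}v_{j\ell}(\s_\ell)$. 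The NOB inequality of \cref{def:multi-nob} only controls $\E\bigl[\sum_{\ell\in X_j}v_{j\ell}(b_{j\ell},\s_{-j\ell})\bigr]$, i.e.\ with the winner's \emph{bid} in her own coordinate rather than her true signal; and a ``drop-out'' deviation only bounds $p_{j\ell}$ by the marginal value $v_j(X_j;\s)-v_j(X_j\setminus\{\ell\};\s)$, which is zero for all but one item and says nothing about $v_{j\ell}(\s_\ell)$.

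The paper's route for $\other$ is therefore also deviation-based. It first proves (this is where submodularity, \cref{cond:weak_submod}, is used) that for \emph{any} agent $j$ one has $\max\{\gamma,c\}\cdot\tildev_{j\ell}(\s)\ge\tildev_{i\ell}(\s)-\hat v_{i\ell}(\s_\ell)$; taking $j=\ell$, which is where $n\ge m$ enters, gives $\other\le\max\{\gamma,c\}\cdot\E\bigl[\sum_{\ell=1}^m\tildev_{\ell\ell}(\s)\bigr]$. Then each of the first $m$ agents deviates by going all-in for ``her own'' item $\ell$, and a second smoothness-type inequality yields
\[
u_\ell\ \ge\ \tildev_{\ell\ell}(\s_\ell)\ -\ \max\{\gamma,c+1\}\cdot v_{j_\ell\ell}(b_{j_\ell\ell},\s_{-j_\ell\ell})\cdot\I_{\ell\ne j_\ell}\ -\ v_{j_\ell\ell}(\s_\ell)\cdot\I_{\ell=j_\ell}.
\]
Crucially the error term carries $b_{j_\ell\ell}$, not $s_{j_\ell\ell}$, in the winner's own coordinate, so NOB applies and the errors sum to $O(\max\{\gamma,c\})\cdot\eq$, delivering $\other\le O(\max\{\gamma,c\}^2)\cdot\eq$. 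That your direct approach invokes neither $n\ge m$ nor submodularity is a symptom that the missing step is real.
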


%

The condition that $d\cdot\tildeopt\ge \opt$ in \cref{thm:multi-pos} is necessary, as by $h_{i\ell}(\s) = v_{i\ell}(\s_{i\ell}, {\bf 0_{-i\ell}})$, Proposition~\ref{prop:multi-neg-for-2nd-price} holds for the simultaneous privatized second-price auction as well.
%
The remainder of the section is dedicated to proving \cref{thm:multi-pos}.
We use the following notation: Let $u_i((\bids,\avec);\s)$ denote agent $i$'s utility under bidding profile $(\bids,\avec)$ and signal profile $\s$. 
Let $p_{i\ell}(\bb_\ell,\avec_\ell)$ be as defined in \cref{alg:multiple-pos}.

We first decompose the welfare as stated above. Recall from \cref{def:truncated} that $\widetilde{m}(\s)$ is the matching that maximizes the social welfare with respect to the truncated values at signal profile $\s$. We decompose $\tildeopt$ as follows: 
\begin{eqnarray}
\tildeopt &=&  \E_{\s}\left[\sum_{(i,\ell)\in\widetilde{m}(\s)}\tilde{v}_{i\ell}(\s)\right]\nonumber\\& =& \underbrace{\E_{\s}\left[\sum_{(i,\ell)\in\widetilde{m}(\s)}v_{i\ell}(s_{i\ell},\mathbf{0}_{-i\ell})\right]}_{\text{$\mathsf{SELF}$}}\ +\  \underbrace{\E_{\s}\left[\sum_{(i,\ell)\in\widetilde{m}(\s)}\tilde{v}_{i\ell}(\s)-v_{i\ell}(s_{i\ell},\mathbf{0}_{-i\ell})\right]}_{\text{$\mathsf{OTHER}$}}.\label{eq:welfare-decomposition}
\end{eqnarray}

In order to prove \cref{thm:multi-pos}, we show that the equilibrium `covers' the two terms in the decomposition above. 
This is shown in Lemmas~\ref{lem:bounding-self} and~\ref{lem:bounding-other}, which are proved in Sections~\ref{sub:multi-pos-step2} and \ref{sub:multi-pos-step3}, respectively.

\begin{lemma}
	\label{lem:bounding-self}
	For every BNE $\sigma$, $2\eq(\sigma)\geq\mathsf{SELF}$.
\end{lemma}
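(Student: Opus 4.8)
The plan is to use the ``doppelgänger'' deviation applied to the privatized values, which is exactly the setting in which it works cleanly. Fix a BNE $\sigma$ and an agent $i$ with signal $s_i$. Since the privatized value $\hat{v}_{i\ell}(\bb_\ell)=v_{i\ell}(b_{i\ell},\mathbf{0}_{-i\ell})$ depends only on agent $i$'s own report $b_{i\ell}$, agent $i$ \emph{can} compute it from her true signal alone. So consider the following hypothetical deviation: agent $i$ samples a fresh signal profile $\bt\sim F$ (a ``doppelgänger'' draw), computes the truncated optimal matching $\widetilde m(s_i,\bt_{-i})$ on the profile where her own true signal is plugged in, and if this matching assigns her some item $\ell$, she goes all-in on $\ell$ by reporting her true signal $s_{i\ell}$ for item $\ell$, participating only in auction $\ell$ ($a_{i\ell}=1$, $a_{ik}=0$ for $k\ne\ell$), and bidding $0$ everywhere else; if $i$ is unmatched she participates nowhere. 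First I would observe that under this deviation, since the mechanism of \cref{alg:multiple-pos} runs a second-price auction on privatized values, agent $i$ wins item $\ell$ unless some other participating agent $j$ has $\hat v_{j\ell}(\bb_\ell)\ge \hat v_{i\ell}=v_{i\ell}(s_{i\ell},\mathbf0_{-i\ell})$, and in the former case her payment is at most $\hat v_{i\ell}$ while in the latter someone's winning privatized value is at least $\hat v_{i\ell}$. Hence the standard all-in estimate gives, pointwise in the realized bids $\bb_{-i}$ of the others,
\[
u_i\big((\text{deviation}),\bb_{-i};\s\big)\ \ge\ v_{i\ell}(s_{i\ell},\mathbf0_{-i\ell})\ -\ \hat v_{w\ell}(\bb_\ell),
\]
where $w$ is the agent actually winning item $\ell$ under the deviation (possibly $i$ herself, in which case the second term is her payment $\le\hat v_{i\ell}$). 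Here I am using that the first term, being a privatized value, is a genuine lower bound on the utility-from-winning piece even though values are interdependent — this is precisely point that the privatized reformulation buys us.

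Next I would take expectations. Taking $\E_{\s_{-i},\bt}$ on both sides and summing over all agents $i$, the equilibrium condition (agent $i$ weakly prefers $\sigma_i$ to the deviation, in expectation over $\s_{-i}$, for each $s_i$) gives
\[
\sum_i \E\big[u_i(\sigma(\s);\s)\big]\ \ge\ \sum_i \E_{\s,\bt}\Big[\I\{(i,\ell)\in\widetilde m(s_i,\bt_{-i})\}\big(v_{i\ell}(s_{i\ell},\mathbf0_{-i\ell})-\hat v_{w\ell}(\bb_\ell)\big)\Big].
\]
The left side is at most $\eq(\sigma)$ since utilities sum to at most welfare. For the positive (SELF-like) part of the right side: by renaming the dummy integration variables, $\E_{\s,\bt}[\I\{(i,\ell)\in\widetilde m(s_i,\bt_{-i})\}\,v_{i\ell}(s_{i\ell},\mathbf0_{-i\ell})]=\E_{\s}[\I\{(i,\ell)\in\widetilde m(\s)\}\,v_{i\ell}(s_{i\ell},\mathbf0_{-i\ell})]$ (the privatized value of $(i,\ell)$ under the deviation depends only on $s_{i\ell}$, which has the same marginal whether we call it a ``true'' or a ``doppelgänger'' signal), so summing over $i$ this reconstructs exactly $\mathsf{SELF}$ from \eqref{eq:welfare-decomposition}. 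For the negative (error) part $\sum_i \E[\I\{(i,\ell)\in\widetilde m(s_i,\bt_{-i})\}\,\hat v_{w\ell}(\bb_\ell)]$: each item $\ell$ is assigned to at most one agent $i$ in the truncated matching, so for a fixed realization this sum is at most $\sum_\ell \hat v_{w(\ell)\ell}(\bb_\ell)$, the total winning privatized value under a profile of (independent) samples of the equilibrium — which in expectation is at most $\E[\SW\text{-type quantity}]\le \eq(\sigma)$, since in the privatized second-price auction the winner's payment is at most her winning privatized value and the sum of realized privatized winning values equals the realized welfare of the auction run on privatized values, which is dominated by $\eq(\sigma)$ (the mechanism's true welfare is at least its privatized welfare because true values dominate privatized values). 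Putting these together, $\eq(\sigma)\ge \mathsf{SELF}-\eq(\sigma)$, i.e. $2\eq(\sigma)\ge\mathsf{SELF}$.

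The main obstacle I anticipate is making the error-term bookkeeping fully rigorous: the ``other bidders' bids $\bb_{-i}$'' appearing in agent $i$'s deviation analysis are themselves random (drawn from $\sigma(\s_{-i})$ together with agent $i$'s sampled $\bt_{-i}$), and one must be careful that when summing over $i$ the error terms genuinely telescope to a single copy of a quantity bounded by $\eq(\sigma)$ rather than $n$ copies — this is the step where independence of the doppelgänger draw from the real signals, and the fact that each item contributes to at most one $(i,\ell)$ pair in $\widetilde m$, are both essential. A secondary subtlety is verifying the deviation is NOB-compatible (going all-in with the true signal $s_{i\ell}$ on a single item is within the multi-item NOB constraint of \cref{def:multi-nob}, since the left-hand side there involves only item $\ell$ and $v_{i\ell}(s_{i\ell},\s_{-i\ell})\ge v_{i\ell}(s_{i\ell},\mathbf0_{-i\ell})$), so that the equilibrium comparison is legitimate against a feasible deviation.
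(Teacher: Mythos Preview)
Your approach is exactly the paper's: sample $\bt\sim F$, compute $\widetilde m(s_i,\bt_{-i})$, go all-in on the matched item, and use the resulting second-price lower bound on utility. The positive term is handled correctly via the renaming $(\s_{-i},\bt_{-i})\leftrightarrow(\bt_{-i},\s_{-i})$ and yields $\mathsf{SELF}$ as you wrote.

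The error term, however, is where you flagged the obstacle, and it is genuinely unresolved in your write-up. Two concrete gaps:

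\textbf{(1) Decoupling is not executed.} After you sum over $i$, the $i$th error term involves the matching $\widetilde m(s_i,\bt_{-i})$ and the bids $(\bb_{-i},\avec_{-i})\sim\sigma_{-i}(\s_{-i})$, both of which depend on $i$. You cannot say ``each item is matched to at most one agent'' because there are $n$ different matchings, nor can you speak of a single ``fixed realization'' of $\bb$. The paper fixes this in two moves: first swap $\s_{-i}\leftrightarrow\bt_{-i}$ in the $i$th term so the matching becomes $\widetilde m(\s)$ (now the \emph{same} for every $i$) while the bids come from $\sigma(s_i,\bt_{-i})$; second, observe that the error term $\max_{j\ne i:\,a_{j\ell}=1}\hat v_{j\ell}(\bb_\ell)$ does not depend on $(\bb_i,\avec_i)$ at all, so one may replace $\sigma(s_i,\bt_{-i})$ by $\sigma(\bt)$. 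Only after this second move are both the matching and the bid profile common across $i$, and only then does ``one item per agent in $\widetilde m(\s)$'' legitimately collapse the sum.

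\textbf{(2) The final inequality requires NOB, not monotonicity.} After the collapse you have $\sum_j\sum_{\ell\in X_j(\bb,\avec)}v_{j\ell}(b_{j\ell},\mathbf 0_{-j\ell})$ with $(\bb,\avec)\sim\sigma(\bt)$. Your sentence ``true values dominate privatized values'' would give $v_{j\ell}(\bt_\ell)\ge v_{j\ell}(b_{j\ell},\mathbf 0_{-j\ell})$ only if $b_{j\ell}\le t_{j\ell}$, which the multi-item NOB of \cref{def:multi-nob} does \emph{not} assert pointwise; and even then you would get a \emph{sum} $\sum_{\ell\in X_j}v_{j\ell}(\bt_\ell)$ on the right, not the unit-demand value $\max_{\ell\in X_j}v_{j\ell}(\bt_\ell)$. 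The paper's route is: use monotonicity in the other coordinates to pass from $v_{j\ell}(b_{j\ell},\mathbf 0_{-j\ell})$ to $v_{j\ell}(b_{j\ell},\bt_{-j\ell})$, and then invoke NOB (\cref{def:multi-nob}) to bound $\E\big[\sum_{\ell\in X_j}v_{j\ell}(b_{j\ell},\bt_{-j\ell})\big]\le \E[v_j(X_j;\bt)]$. Summing over $j$ gives $\eq(\sigma)$.
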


\begin{lemma}
	\label{lem:bounding-other}
	For every BNE $\sigma$, 
	$O(\max\{\gamma,c\}^2)\cdot \eq(\sigma) \geq \mathsf{OTHER}.$ 
\end{lemma}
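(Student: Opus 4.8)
\textbf{Proof plan for Lemma~\ref{lem:bounding-other}.}

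The plan is to exhibit, for each agent $i$, a hypothetical single-item deviation that ``captures'' agent $i$'s contribution to the $\mathsf{OTHER}$ term, and then sum these up against the equilibrium welfare. The key structural observation is that $\mathsf{OTHER}$ measures, for the truncated optimal matching $\widetilde{m}(\s)$, the gap $\tilde v_{i\ell}(\s)-v_{i\ell}(s_{i\ell},\mathbf 0_{-i\ell})$ --- i.e.\ the part of the truncated value that comes from the \emph{other} agents' signals on item $\ell$. First I would fix an item $\ell$ with $(i,\ell)\in\widetilde m(\s)$ and recall that $\tilde v_{i\ell}(\s)=\min_{j\ne i}v_{i\ell}(\s_{-j\ell},0_{j\ell})$, so the difference $\tilde v_{i\ell}(\s)-v_{i\ell}(s_{i\ell},\mathbf0_{-i\ell})$ is a sum of increments in coordinates of $\s_{-i\ell}$; by Lemma~\ref{lem:j-bound-i-diff} (applied on item $\ell$, with the role of ``agent $i$'' played by $i$ and ``agent $j$'' played by the agent who wins $\ell$ in equilibrium), each such increment to $v_{i\ell}$ is at most $\max\{\gamma,c\}$ times the corresponding increment to the winner's value. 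This is the step where $\gamma$-heterogeneity and $c$-SC enter and where the $\max\{\gamma,c\}$ factor is generated; the square in the final bound will come from combining this with a second $\max\{\gamma,c\}$ loss incurred when relating privatized winning values to truncated values, exactly as in the proof of \cref{prop:upper-bound:gamma-bound+c-sc:GVA}.

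Next I would set up the deviation. For agent $i$ in the truncated matching on item $\ell$, consider the deviation in which $i$ participates only in the auction for $\ell$ and reports $b_{i\ell}=s_{i\ell}$ (her true signal on $\ell$), zeroing out all other items. Using the second-price/privatized structure of \cref{alg:multiple-pos}, agent $i$'s utility from this deviation is at least $\hat v_{i\ell}(s_{i\ell}) - \max_{j\ne i, a_{j\ell}=1}\hat v_{j\ell}(\bb_\ell)$ when she wins, and $0$ otherwise; in either case, arguing as in Case~1/Case~2 of the proof of Theorem~\ref{prop:upper-bound:gamma-bound+c-sc:GVA} (with the privatized values playing the role of $v_i(\s)$ and invoking Corollary~\ref{col:j-bound-i} on item $\ell$), one gets a lower bound of the form
\begin{equation}
u_i(\text{dev}_\ell;\s)\ \ge\ c'\cdot\big(\tilde v_{i\ell}(\s)-v_{i\ell}(s_{i\ell},\mathbf 0_{-i\ell})\big)\ -\ \max\{\gamma,c\}\cdot v_{w_\ell(\bids,\avec)\,\ell}(\bids_\ell,\avec_\ell),\nonumber
\end{equation}
for a suitable constant $c'=\Omega(1/\max\{\gamma,c\})$, where $w_\ell(\bids,\avec)$ is the equilibrium winner of $\ell$. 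Summing over all $i$ (one item per agent in the matching), taking expectations, and using the BNE property $\E_{\s_{-i}}[u_i(\sigma(\s);\s)]\ge \E_{\s_{-i}}[u_i(\text{dev};\s)]$ turns the left side into (a bound by) $\eq(\sigma)$ and the first term on the right into $\Omega(1/\max\{\gamma,c\})\cdot\mathsf{OTHER}$.

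The remaining work is to bound the aggregate error term $\sum_\ell \max\{\gamma,c\}\,\E_\s[v_{w_\ell\ell}(\bids_\ell,\avec_\ell)]$ by $O(\max\{\gamma,c\})\cdot\eq(\sigma)$ --- this is where the NOB assumption (\cref{def:multi-nob}) and the doppelg\"anger/privatized-values idea come in, since $v_{w_\ell\ell}$ evaluated at the equilibrium bids must be tied back to the privatized value $\hat v_{w_\ell\ell}$ that the mechanism actually charges/allocates on, losing another $\max\{\gamma,c\}$ factor via Lemma~\ref{lem:j-bound-i-diff}, and then summed against the equilibrium welfare using NOB on the winners. Rearranging $ \eq(\sigma) \ge \Omega(1/\max\{\gamma,c\})\mathsf{OTHER} - O(\max\{\gamma,c\})\eq(\sigma)$ gives $O(\max\{\gamma,c\}^2)\eq(\sigma)\ge\mathsf{OTHER}$. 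The main obstacle I anticipate is handling the incomplete-information subtleties in the deviation argument: agent $i$ does not know $\s_{-i}$, so the ``doppelg\"anger'' must be defined carefully (sampling $\s_{-i}$, computing $\widetilde m$ on the sampled profile, and going all-in on the sampled item) and one must check that the expectation bookkeeping --- in particular that the error term depends only on quantities the equilibrium welfare can absorb, which is precisely obstacle~(b) flagged in the intuition --- actually closes; this is exactly why passing to privatized values (rather than true values) in the mechanism is essential.
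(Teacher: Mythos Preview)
Your proposal has a genuine gap, and it is precisely the one you flag at the end without resolving. The doppelg\"anger deviation you describe --- agent $i$ samples $\bt_{-i}$, computes $\widetilde m(s_i,\bt_{-i})$, and goes all-in on her matched item --- cannot recover $\mathsf{OTHER}$. The utility lower bound you want has first term $\tilde v_{i\ell}(\s)$ (or the gap $\tilde v_{i\ell}(\s)-\hat v_{i\ell}(\s)$), which depends on the \emph{true} signals $\s_{-i}$. After the renaming swap $\s_{-i}\leftrightarrow \bt_{-i}$, this term becomes $\tilde v_{i\ell}(s_i,\bt_{-i})$ with $\ell$ determined by $\widetilde m(\s)$, which is not $\tilde v_{i\ell}(\s)$; you end up with $\E_\s\big[\sum_{(i,\ell)\in\widetilde m(\s)}\E_{\bt_{-i}}[\tilde v_{i\ell}(s_i,\bt_{-i})]\big]$ rather than $\mathsf{OTHER}$. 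This is exactly obstacle (a) the paper warns about, and privatized values in the mechanism do not fix it for the $\mathsf{OTHER}$ term (they fix it only for $\mathsf{SELF}$, where the target value $\hat v_{i\ell}(s_{i\ell},\mathbf 0)$ depends on $s_i$ alone).

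The paper sidesteps the doppelg\"anger entirely for $\mathsf{OTHER}$ by using a \emph{signal-independent} deviation: each agent $i\in[m]$ goes all-in for item $i$ (same index), which is well-defined because $n\ge m$. The bridge back to $\mathsf{OTHER}$ is a separate structural lemma (\cref{lem:arbitrary-covers-i}): for \emph{any} agent $j$ and item $\ell$, $\max\{\gamma,c\}\cdot\tilde v_{j\ell}(\s)\ge \tilde v_{i\ell}(\s)-\hat v_{i\ell}(\s)$, so $\sum_{i=1}^m \tilde v_{ii}(\s)$ already dominates $\mathsf{OTHER}/\max\{\gamma,c\}$ pointwise in $\s$. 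This lemma crucially uses the submodularity assumption (\cref{cond:weak_submod}) in the case where the zeroed-out coordinate for $\tilde v_{j\ell}$ is $i$'s own signal --- an ingredient your plan never invokes. The error term is then handled with the smoothness-type inequality (\cref{lem:multiple:bounded-utility}(ii)) and NOB, yielding $(\max\{\gamma,c+1\}+1)\eq(\sigma)$; combining gives the $O(\max\{\gamma,c\}^2)$ bound. Your outline is missing both the fixed-item deviation idea and the role of submodularity.
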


With these Lemmas in hand, we prove \cref{thm:multi-pos}.

\begin{proof}[Proof of \cref{thm:multi-pos}]
	Combining Lemmas~\ref{lem:bounding-self} and \ref{lem:bounding-other} with Equations~\eqref{eq:welfare-decomposition} and~\eqref{eq:condition}, for every BNE $\sigma$, we have
	$$
	\left(2 + \max\{\gamma,c\}\cdot(\max\{\gamma,c+1\}+2)\right) \eq(\sigma) \ge \self+\other=\tildeopt\ge \opt/d.
	$$
	concluding the proof.
\end{proof}

We continue by proving Lemmas~\ref{lem:bounding-self} and~\ref{lem:bounding-other}, using two deviations which make use of smoothness-type inequalities (\cref{lem:multiple:bounded-utility}), as explained next.

\subsubsection{A Smoothness-type Lemma for Interdependence}
\label{sub:multi-pos-smooth-lemma}

Recall that in the standard smoothness framework for bounding PoA (with independent private values), a leading proof technique utilizes the fact that in equilibrium, agent $i$ cannot improve her utility by going ``all-in'' for one of the items~$\ell$.  
In interdependent settings, agents do not fully know their own values, so going all-in means the following (under no overbidding): participating only for item $\ell$ (i.e., reporting $\overline{a}_{i\ell}=1$ and $\overline{a}_{ik}=0$ for every item $k\neq\ell$),
and bidding $\overline{b}_{i\ell}=s_{i\ell}$ for this item (the bids for the other items can be arbitrary). 
To formulate a smoothness-type argument in \cref{lem:multiple:bounded-utility} below, we denote by $(\overline{b}_i,\overline{a}_i)$ the all-in report for agent $i$, and
let
$
j_\ell=j_\ell(\bids,\avec)
$ 
be the agent to whom the privatized second-price auction (\cref{alg:multiple-pos}) allocates item $\ell$ under the bidding profile $(\bids,\avec)$. If the item is not allocated at $(\bids,\avec)$ (no agents participate in the item's auction), then let $j_\ell(\bids,\avec)$ be a dummy agent with a constant value $0$ for item $\ell$. We now state our lemma:

\begin{lemma}
	\label{lem:multiple:bounded-utility}
	Consider agent $i$, bidding profile $(\bids,\avec)$, all-in bid $(\overline{b}_i,\overline{a}_i)$ for item $\ell$, and true signal profile $\s$. Then agent $i$'s utility $u_{i}(((\overline{b}_i,\bids_{-i}),(\overline{a}_i,\avec_{-i}));\s)$
	from going all-in for $\ell$ is at least
	\begin{enumerate} [(i)]
		\item $\hat{v}_{i\ell}(\s_{\ell}) - \max\limits_{j\neq i\ :\  a_{j\ell}=1}\{v_{j\ell}(b_{j\ell},\mathbf{0}_{-j\ell})\},$ \label{lb:self}
		
		\item $\tilde{v}_{i\ell}(\s_\ell)-\max\{\gamma,c+1\}\cdot v_{j_\ell\ell}(b_{j_\ell\ell},\s_{-j_\ell\ell})\cdot \I_{i\neq j_\ell}-v_{j_\ell\ell}(\s_\ell)\cdot \I_{i= j_\ell}.$  \label{lb:other}
	\end{enumerate}
\end{lemma}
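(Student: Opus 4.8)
The plan is to fix agent~$i$, the bidding profile $(\bids,\avec)$, and the true signal profile $\s$, and to analyze the outcome of the privatized second-price auction for item~$\ell$ after $i$ deviates to the all-in report $(\overline{b}_i,\overline{a}_i)$. Since this deviation changes nothing for the other items and nothing about the other agents' reports for item~$\ell$, only the auction for item~$\ell$ is affected, and under the deviation $i$ either wins item~$\ell$ --- in which case, by her participation vector, her bundle is exactly $\{\ell\}$ --- or wins nothing and gets utility~$0$. Throughout I would repeatedly use four elementary facts: (a) valuations are monotone, so $v_{i\ell}(\s_\ell)\ge \hat v_{i\ell}(\s_\ell)=v_{i\ell}(s_{i\ell},\mathbf 0_{-i\ell})$ and $\tilde v_{i\ell}(\s_\ell)\le v_{i\ell}(\s_\ell)$; (b) NOB gives $\overline b_{i\ell}=s_{i\ell}\ge b_{i\ell}$, so $i$'s privatized bid value under the deviation is $v_{i\ell}(s_{i\ell},\mathbf 0_{-i\ell})$; (c) a winner's second-price payment is at most her own privatized value and equals the runner-up's privatized value; and (d) $\tilde v_{i\ell}(\s_\ell)\le v_{i\ell}(\s_{-j\ell},0_{j\ell})$ for every $j\neq i$, in particular for $j=j_\ell$.

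For bound \eqref{lb:self} I would split on whether $i$ wins item~$\ell$ under the deviation. If she wins, her utility is $v_{i\ell}(\s_\ell)$ minus the runner-up's privatized value, which is at most $\max_{j\neq i:\,a_{j\ell}=1}v_{j\ell}(b_{j\ell},\mathbf 0_{-j\ell})$; combined with $v_{i\ell}(\s_\ell)\ge\hat v_{i\ell}(\s_\ell)$ this gives exactly \eqref{lb:self}. If she loses, her utility is $0$, and losing while bidding $s_{i\ell}$ forces some participating $j\neq i$ to have privatized value $v_{j\ell}(b_{j\ell},\mathbf 0_{-j\ell})\ge \hat v_{i\ell}(\s_\ell)$, so the right-hand side of \eqref{lb:self} is already $\le 0$.

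For bound \eqref{lb:other} I would split on whether $i=j_\ell$. If $i=j_\ell$, then by (b) and monotonicity $i$ still wins item~$\ell$ under the deviation, so her utility is nonnegative, while $\tilde v_{i\ell}(\s_\ell)\le v_{i\ell}(\s_\ell)=v_{j_\ell\ell}(\s_\ell)$ makes the claimed bound nonpositive. If $i\neq j_\ell$ and $i$ wins item~$\ell$ under the deviation, the runner-up is the original winner~$j_\ell$, so her utility is $v_{i\ell}(\s_\ell)-v_{j_\ell\ell}(b_{j_\ell\ell},\mathbf 0_{-j_\ell\ell})\ge \tilde v_{i\ell}(\s_\ell)-v_{j_\ell\ell}(b_{j_\ell\ell},\s_{-j_\ell\ell})$, which dominates the claimed bound since $\max\{\gamma,c+1\}\ge 1$. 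The remaining case --- $i\neq j_\ell$ and $i$ loses item~$\ell$ under the deviation, so $u_i=0$ --- is the crux: here I must show $\max\{\gamma,c+1\}\,v_{j_\ell\ell}(b_{j_\ell\ell},\s_{-j_\ell\ell})\ge\tilde v_{i\ell}(\s_\ell)$. Starting from (d) with $j=j_\ell$, I would write $v_{i\ell}(\s_{-j_\ell\ell},0_{j_\ell\ell})=\hat v_{i\ell}(\s_\ell)+\bigl(v_{i\ell}(\s_{-j_\ell\ell},0_{j_\ell\ell})-v_{i\ell}(s_{i\ell},\mathbf 0_{-i\ell})\bigr)$, observe that the increment from $(s_{i\ell},\mathbf 0_{-i\ell})$ to $(\s_{-j_\ell\ell},0_{j_\ell\ell})$ raises only coordinates of agents other than $i$ and $j_\ell$, and invoke Lemma~\ref{lem:j-bound-i-diff} (with the two agents there taken to be $i$ and $j_\ell$) to bound this increment by $\max\{\gamma,c\}$ times the corresponding increment of $v_{j_\ell\ell}$, which is at most $v_{j_\ell\ell}(\s_{-j_\ell\ell},0_{j_\ell\ell})\le v_{j_\ell\ell}(b_{j_\ell\ell},\s_{-j_\ell\ell})$. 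For the leading summand, losing the deviation while $j_\ell$ was the original winner forces $\hat v_{i\ell}(\s_\ell)\le v_{j_\ell\ell}(b_{j_\ell\ell},\mathbf 0_{-j_\ell\ell})\le v_{j_\ell\ell}(b_{j_\ell\ell},\s_{-j_\ell\ell})$. Adding the two estimates yields the desired inequality.

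I expect the bookkeeping of the multiplicative constant in that crux case to be the delicate step: one must track which telescoped single-coordinate increments are controlled by $\gamma$-heterogeneity and which (those raising $j_\ell$'s own signal --- note these do not arise here, since the profile $(\s_{-j_\ell\ell},0_{j_\ell\ell})$ keeps $s_{j_\ell\ell}$ at $0$) would instead require $c$-SC, and then combine the ``privatized'' contribution $\hat v_{i\ell}(\s_\ell)$ with the ``others'' contribution without slack so as to land on the stated factor $\max\{\gamma,c+1\}$ (the crude combination above gives $1+\max\{\gamma,c\}$, and tightening it to the claimed constant is where the submodularity assumed in this section, together with a more careful choice of intermediate profiles, is likely needed). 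All the other cases reduce to monotonicity, the NOB inequality, and the second-price payment rule.
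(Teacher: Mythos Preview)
Your decomposition into cases matches the paper's, and your treatment of bound~\eqref{lb:self}, of the case $i=j_\ell$, and of the case $i\ne j_\ell$ with $i$ winning, is essentially identical to the paper's. The gap is exactly where you flag it: the crux case $i\ne j_\ell$ with $i$ losing under the deviation. Your additive decomposition
\[
\tilde v_{i\ell}(\s_\ell)\ \le\ \hat v_{i\ell}(\s_\ell)\ +\ \bigl(v_{i\ell}(\s_{-j_\ell\ell},0_{j_\ell\ell})-v_{i\ell}(s_{i\ell},\mathbf 0_{-i\ell})\bigr)
\]
followed by Lemma~\ref{lem:j-bound-i-diff} indeed yields only $1+\max\{\gamma,c\}$, and your guess that submodularity closes the gap is wrong. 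The paper's proof of this lemma uses no submodularity at all. What it does instead is choose a different intermediate profile: first raise $j_\ell$'s own signal from $0$ to $b_{j_\ell\ell}$, and use $c$-SC directly on that single coordinate to obtain
\[
v_{i\ell}(b_{j_\ell\ell},s_{i\ell},\mathbf 0_{-\{i,j_\ell\}\ell})\ \le\ (c+1)\,v_{j_\ell\ell}(b_{j_\ell\ell},s_{i\ell},\mathbf 0_{-\{i,j_\ell\}\ell}),
\]
combining $v_{j_\ell\ell}(b_{j_\ell\ell},\mathbf 0_{-j_\ell\ell})\ge v_{i\ell}(s_{i\ell},\mathbf 0_{-i\ell})$ (from $i$ losing) with the $c$-SC increment bound. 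This is a \emph{multiplicative} relation at a small profile, and Corollary~\ref{col:j-bound-i} then propagates it to $(b_{j_\ell\ell},\s_{-j_\ell\ell})$ with constant $d=\max\{\gamma,c+1\}$ (the corollary preserves the ratio $d$ as long as $d\ge\max\{\gamma,c\}$), giving
\[
\max\{\gamma,c+1\}\cdot v_{j_\ell\ell}(b_{j_\ell\ell},\s_{-j_\ell\ell})\ \ge\ v_{i\ell}(b_{j_\ell\ell},\s_{-j_\ell\ell})\ \ge\ \tilde v_{i\ell}(\s_\ell).
\]
The crucial move you miss is replacing your additive split (two separate upper bounds that are then summed) by an early multiplicative inequality that Corollary~\ref{col:j-bound-i} carries forward without further loss. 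Your route keeps $s_{j_\ell\ell}$ at zero throughout, so $c$-SC is never applied to $j_\ell$'s own coordinate; the paper applies it there first, which is precisely what yields $c+1$ rather than $c$ and then avoids the extra~$+1$ on the $\gamma$ side.

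A minor side point: the lemma is stated for an arbitrary bidding profile $(\bids,\avec)$ and does not assume NOB, so your fact~(b) claiming $s_{i\ell}\ge b_{i\ell}$ is unwarranted here (multi-item NOB, \cref{def:multi-nob}, is an aggregate expectation condition anyway). In the $i=j_\ell$ case you only need that the all-in utility is nonnegative, which holds because the second-price payment never exceeds $i$'s own privatized value $v_{i\ell}(s_{i\ell},\mathbf 0_{-i\ell})\le v_{i\ell}(\s_\ell)$; there is no need to argue that $i$ remains the winner.
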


\begin{proof}
	We begin by showing lower bound (\ref{lb:self}), and consider two cases:
	
	{\bf Case 1:} When going all-in agent $i$ is not allocated item $\ell$ (and thus is allocated no item). In this case we have that 
	$\max\limits_{j\neq i\ :\  a_{j\ell}=1} \{v_{j\ell}(b_{j\ell},\mathbf{0}_{-j\ell})\}\geq v_{i\ell}(s_{i\ell},\mathbf{0}_{-i\ell}) = \hat{v}_{i\ell}(\s_{\ell})$, 
	and since agent $i$'s utility is~$0$, lower bound (\ref{lb:self}) holds.
	
	{\bf Case 2:} When going all-in agent $i$ is allocated item $\ell$.  So her utility is
	$$v_{i\ell}(\s_\ell)- p_{i\ell}((\overline{b}_i,\bids_{-i}),(\overline{a}_i,\avec_{-i})) \geq  \hat{v}_{i\ell}(s_{\ell})-\max\limits_{j\neq i\ :\  a_{j\ell}=1}\{v_{j\ell}(b_{j\ell},\mathbf{0}_{-j\ell})\},$$ and (\ref{lb:self}) holds.
	
	We now proceed to lower bound (\ref{lb:other}). Observe that agent $i$'s utility from the all-in bid is nonnegative ($i$ pays at most her value for $\ell$ and does not compete for other items). Consider 
	agent $j_\ell=j_\ell(\bids,\avec)$ (the winner of item $\ell$ in the privatized second-price auction if agent $i$ reports $(b_i,a_i)$ rather than goes all-in). 
	If $j_\ell$ is in fact agent $i$ (so $\I_{i= j_\ell}=1$), then her utility from going all-in is
	\begin{eqnarray*}
		u_{i}(((\overline{b}_i,\bids_{-i}),(\overline{a}_i,\avec_{-i}));\s) \ \geq\ 0\ =\  \tilde{v}_{i\ell}(\s)-\tilde{v}_{j_\ell\ell}(\s)\ \geq\  \tilde{v}_{i\ell}(\s)-{v}_{j_\ell\ell}(\s).
	\end{eqnarray*} 
	
	It remains to show (\ref{lb:other}) if $j_\ell\ne i$. 
	We again consider the two cases from before:
	
	{\bf Case 1:} When going all-in agent $i$ is not allocated item $\ell$. We have that 
	\begin{flalign*}
		v_{j_\ell\ell}(b_{j_\ell\ell},\mathbf{0}_{-j_\ell\ell})\geq v_{i\ell}(s_{i\ell}, \mathbf{0}_{-i\ell}).
	\end{flalign*}
	Moreover, by $c$-single-crossing, we have that 
	$$c\left(v_{j_\ell\ell}(b_{j_\ell\ell},s_{i\ell},\mathbf{0}_{-\{ij_\ell\}\ell})-v_{j_\ell\ell}(s_{i\ell},\mathbf{0}_{-i\ell})\right)\ \geq\  v_{i\ell}(b_{j_\ell\ell},s_{i\ell},\mathbf{0}_{-\{ij_\ell\}\ell})-v_{i\ell}(s_{i\ell},\mathbf{0}_{-i\ell}).$$
	Combining the above two inequalities, we get that 
	\begin{eqnarray*}
		v_{i\ell}(b_{j_\ell\ell},s_{i\ell},\mathbf{0}_{-\{ij_\ell\}\ell}) &\leq & v_{j_\ell\ell}(b_{j_\ell\ell},\mathbf{0}_{-j_\ell\ell}) + c\left(v_{j_\ell\ell}(b_{j_\ell\ell},s_{i\ell},\mathbf{0}_{-\{ij_\ell\}\ell})-v_{j_\ell\ell}(s_{i\ell},\mathbf{0}_{-i\ell})\right)\\
		&\leq & (c+1)v_{j_\ell\ell}(b_{j_\ell\ell},s_{i\ell},\mathbf{0}_{-\{ij_\ell\}\ell}),
	\end{eqnarray*}
	where the last inequality is due to the monotonicity of valuation functions.
	By applying Corollary~\ref{col:j-bound-i}, we get that $$\max\{\gamma,c+1\}\cdot v_{j_\ell\ell}(b_{j_\ell\ell},\s_{-j_\ell\ell})\ \geq\  v_{i\ell}(b_{j_\ell\ell},\s_{-j_\ell\ell})\ \geq\  \tilde{v}_{i\ell}(\s).$$ 
	Since we are in the case where $i$ does not get the item, her utility is~$0$, and (\ref{lb:other}) holds. 
	
	{\bf Case 2:} When going all-in agent $i$ is allocated item $\ell$.  Notice that in this case, $\hat{v}_{j_\ell\ell}(\bb_\ell)=p_{i\ell}((\overline{b}_i,\bids_{-i}),(\overline{a}_i,\avec_{-i}))$ (by definition of $p_{i\ell}$ and since $j_\ell$ has the highest privatized value). We have that 
	\begin{eqnarray*}
		u_{i}(((\overline{b}_i,\bids_{-i}),(\overline{a}_i,\avec_{-i}));\s) & =  &v_{i\ell}(\s) - p_{i\ell}((\overline{b}_i,\bids_{-i}),(\overline{a}_i,\avec_{-i}))\\
		& =  &v_{i\ell}(\s) - \hat{v}_{j_\ell\ell}(\bb_\ell)
		\\ 
		&\geq& \tilde{v}_{i\ell}(\s)-v_{j_\ell\ell}(b_{j_\ell\ell},\s_{-j_\ell\ell}). 
	\end{eqnarray*}
	Thus, (\ref{lb:other}) holds.
\end{proof}

With these inequalities in hand, we now proceed to prove Lemmas~\ref{lem:bounding-self} and \ref{lem:bounding-other}.

\subsubsection{Bounding $\mathsf{SELF}$}
\label{sub:multi-pos-step2}

\begin{proof}[Proof of \cref{lem:bounding-self}]
	Consider the following deviation from agent $i$'s equilibrium strategy. Agent~$i$ samples an alternative signal profile $\bt\sim F,$ computes a matching $m^i=\widetilde{m}(s_i,\bt_{-i}),$ and goes all-in for the item $m^i_i$ she is matched to in $m^i$. As in \cref{sub:multi-pos-smooth-lemma}, the all-in report is $\overline{a}_{i\ell}=1$ if $\ell=m^i_i$ and $0$ otherwise, and $\overline{b}_{im^i_i}=s_{im^i_i}$.
	
	Our goal is to show $2\eq(\sigma)\geq\mathsf{SELF}$. We first notice that
	\begin{eqnarray}
	\eq &\geq& \sum_i\E_{\substack{\s\sim F\\(\bids,\avec)\sim \sigma(\s)}}\left[u_i((\bids,\avec);\s)\right]\nonumber\\
	&\geq & \sum_i\E_{\substack{\s,\bt\sim F\\(\bids,\avec)\sim\sigma(\s)\\m^i=\widetilde{m}(\s_i,\bt_{-i})}}\left[u_i\left((\overline{b}_i,\overline{a}_i,\bids_{-i},\avec_{-i}); \s\right)\right]\nonumber\\
	&\geq&\sum_i\E_{\substack{\s,\bt\sim F\\(\bids,\avec)\sim\sigma(\s)\\m^i=\widetilde{m}(\s_i,\bt_{-i})}}\left[v_{im_i^i}(s_{im_i^i},\mathbf{0}_{-im_i^i})-\max\limits_{j\neq i\ :\  a_{jm^i_i}=1}\{v_{jm_i^i}(b_{jm_i^i},\mathbf{0}_{-jm_i^i})\}\right]. \label{eq:self-eq-ineq}
	\end{eqnarray}
	In the above, the first inequality holds because the welfare is greater than the utility, the second holds because of the equilibrium hypothesis, and the third follows from lower bound~(\ref{lb:self}) in \cref{lem:multiple:bounded-utility}. 
	
	Consider the first term in Eq.~\eqref{eq:self-eq-ineq}:
	\begin{eqnarray}
	\sum_i\E_{\substack{\s,\bt\sim F\\m^i=\widetilde{m}(\s_{i},\bt_{-i})}} \left[v_{im_i^i}(s_{im_i^i},\mathbf{0}_{-im_i^i})\right] & =  &\sum_i\E_{\substack{\s\sim F\\m^i=\widetilde{m}(\s)}}\left[v_{im_i^i}(s_{im_i^i},\mathbf{0}_{-im_i^i})\right]\nonumber\\ &=& \E_{\s}\left[\sum_{(i,\ell)\in\widetilde{m}(\s)}v_{i\ell}(s_{i\ell},\mathbf{0}_{-i\ell})\right] \nonumber\\
	& = &\mathsf{SELF}\label{eq:self-welfare},
	\end{eqnarray}
	where the first equality is by renaming.
	
	As for the second term in Eq.~\eqref{eq:self-eq-ineq}, fixing an agent $i$,
	\begin{eqnarray}
	\E_{\substack{\s,\bt\sim F\\(\bids,\avec)\sim\sigma(\s)\\m^i=\widetilde{m}(\s_i,\bt_{-i})}}\left[\max\limits_{j\neq i\ :\  a_{jm_i^i}=1}\{v_{jm_i^i}(b_{jm_i^i},\mathbf{0}_{-jm_i^i})\}\right] & = & \E_{\substack{\s,\bt\sim F\\(\bids,\avec)\sim\sigma(\s_i,\bt_{-i})\\m^i=\widetilde{m}(\s)}}\left[\max\limits_{j\neq i\ :\  a_{jm_i^i}=1}\{v_{jm_i^i}(b_{jm_i^i},\mathbf{0}_{-jm_i^i})\}\right]\nonumber\\
	&= &\E_{\substack{\s,\bt\sim F\\(\bids,\avec)\sim\sigma(\bt)\\m^i=\widetilde{m}(\s)}}\left[\max\limits_{j\neq i\ :\  a_{jm_i^i}=1}\{v_{jm_i^i}(b_{jm_i^i},\mathbf{0}_{-jm_i^i})\}\right]\nonumber\\
	&\leq &\E_{\substack{\s,\bt\sim F\\(\bids,\avec)\sim\sigma(\bt)\\m^i=\widetilde{m}(\s)}}\left[\max\limits_{j\ :\  a_{jm_i^i}=1}\{v_{jm_i^i}(b_{jm_i^i},\mathbf{0}_{-jm_i^i})\}\right],\nonumber
	\end{eqnarray}
	where the first equality is by renaming and the second is because the term does not use $(\bids_i,\avec_i)$.
	Recall that $X_i(\bids,\avec)$ denotes the item subset agent $i$ receives under the bidding profile $(\bids,\avec)$. We have that 
	\begin{eqnarray}
	\sum_i\E_{\substack{\s,\bt\sim F\\(\bids,\avec)\sim\sigma(\s)\\m^i=\widetilde{m}(\s_i,\bt_{-i})}}\left[\max\limits_{j\neq i\ :\  a_{jm_i^i}=1}\{v_{jm_i^i}(b_{jm_i^i},\mathbf{0}_{-jm_i^i})\}\right]
	& \leq  &\E_{\substack{\s,\bt\sim F\\(\bids,\avec)\sim\sigma(\bt)\\m^i=\widetilde{m}(\s)}}\left[\sum_i\max\limits_{j\ :\  a_{jm_i^i}=1}\{v_{jm_i^i}(b_{jm_i^i},\mathbf{0}_{-jm_i^i})\}\right]\nonumber\\
	& = &\E_{\substack{\bt\sim F\\(\bids,\avec)\sim\sigma(\bt)}}\left[\sum_i\sum_{\ell\in X_i(\bids,\avec)} v_{i\ell}(b_{i\ell},\mathbf{0}_{-i\ell})\right]\nonumber\\
	& \leq & \E_{\substack{\bt\sim F\\(\bids,\avec)\sim\sigma(\bt)}}\left[\sum_i v_{i}(X_i(\bids,\avec);\bt)\right]\nonumber\\
	& = & \eq(\sigma)\label{eq:error-term-self},
	\end{eqnarray}
	where the first equality is derived from the allocation rule of simultaneous privatized second-price auctions, and the second inequality follows from the no-overbidding assumption.
	Combining Equations~\eqref{eq:self-eq-ineq}-\eqref{eq:error-term-self} yields the desired bound on $\mathsf{SELF}$.
\end{proof}

\subsubsection{Bounding $\mathsf{OTHER}$}
\label{sub:multi-pos-step3}

In this section, we prove Lemma~\ref{lem:bounding-other}. For this end, we use the following lemma:

\begin{lemma}
	\label{lem:arbitrary-covers-i}
	For every two agents $i$ and $j$, item $\ell$, and signal profile $\s$, $$\max\{\gamma,c\}\cdot\tilde{v}_{j\ell}(\s)\geq \tilde{v}_{i\ell}(\s)-v_{i\ell}(s_{i\ell},\mathbf{0}_{-i\ell}).$$
\end{lemma}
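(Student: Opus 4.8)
The claim to prove is $\max\{\gamma,c\}\,\tilde v_{j\ell}(\s)\ge\tilde v_{i\ell}(\s)-v_{i\ell}(s_{i\ell},\mathbf{0}_{-i\ell})$. Write $\hat v_{i\ell}:=v_{i\ell}(s_{i\ell},\mathbf{0}_{-i\ell})$, and let $k^\star\ne i$ and $k^{\star\star}\ne j$ denote agents attaining the minima in Definition~\ref{def:truncated}, so $\tilde v_{i\ell}(\s)=v_{i\ell}(\s_{-k^\star\ell},0_{k^\star\ell})$ and $\tilde v_{j\ell}(\s)=v_{j\ell}(\s_{-k^{\star\star}\ell},0_{k^{\star\star}\ell})$. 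The plan is to rewrite the left-hand quantity as a difference of $v_{i\ell}$-values to which Lemma~\ref{lem:j-bound-i-diff} applies, and then collapse the resulting $v_{j\ell}$-expression onto $\tilde v_{j\ell}(\s)$.

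First I would record a reduction valid for \emph{any} agent $a\ne i$: since $\tilde v_{i\ell}(\s)$ is a minimum over candidates $k\ne i$, we have $\tilde v_{i\ell}(\s)\le v_{i\ell}(\s_{-a\ell},0_{a\ell})$, hence $\tilde v_{i\ell}(\s)-\hat v_{i\ell}\le v_{i\ell}(\s_{-a\ell},0_{a\ell})-v_{i\ell}(s_{i\ell},\mathbf{0}_{-i\ell})$. The right-hand side is precisely the increase of $v_{i\ell}$ when the signals of all agents outside $\{i,a\}$ are raised from $0$ to their true values; this increment has zero $i$-th coordinate, so Lemma~\ref{lem:j-bound-i-diff} (for the pair $i,j$) yields
\[
\tilde v_{i\ell}(\s)-\hat v_{i\ell}\ \le\ \max\{\gamma,c\}\bigl(v_{j\ell}(\s_{-a\ell},0_{a\ell})-v_{j\ell}(s_{i\ell},\mathbf{0}_{-i\ell})\bigr)\ \le\ \max\{\gamma,c\}\,v_{j\ell}(\s_{-a\ell},0_{a\ell}).
\]
I would then split on $k^{\star\star}$. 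If $k^{\star\star}\ne i$, take $a:=k^{\star\star}$ in the last bound; then $v_{j\ell}(\s_{-k^{\star\star}\ell},0_{k^{\star\star}\ell})=\tilde v_{j\ell}(\s)$ by definition of the minimizer, and the claim is immediate. If $k^{\star\star}=i$ (so $\tilde v_{j\ell}(\s)=v_{j\ell}(\s_{-i\ell},0_{i\ell})$), I would instead use the middle bound with $a:=j$, after which it remains to prove the purely $v_{j\ell}$-statement
\[
v_{j\ell}(\s_{-j\ell},0_{j\ell})-v_{j\ell}(s_{i\ell},\mathbf{0}_{-i\ell})\ \le\ v_{j\ell}(\s_{-i\ell},0_{i\ell}),
\]
which I would establish by reorganizing both sides around which agents' signals are "on" and invoking monotonicity together with the submodularity assumption (Definition~\ref{cond:weak_submod}) on $v_{j\ell}$ to absorb the surplus of $v_{j\ell}(\s_{-j\ell},0_{j\ell})$ over $v_{j\ell}(\s_{-i\ell},0_{i\ell})$ into the allowance $v_{j\ell}(s_{i\ell},\mathbf{0}_{-i\ell})$.

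The two reductions through Lemma~\ref{lem:j-bound-i-diff} and the minimality of the truncated values are routine, and the case $k^{\star\star}\ne i$ is then essentially free. The hard part will be the residual inequality in the case $k^{\star\star}=i$: there the cross-agent comparisons supplied by $c$-SC and $\gamma$-heterogeneity no longer help, and one has to extract the bound purely from the monotonicity/submodularity structure of $v_{j\ell}$, comparing its value with $j$'s own signal zeroed against its value with another agent's signal zeroed.
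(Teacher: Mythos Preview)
Your case split on the minimizer $k^{\star\star}$ of $\tilde v_{j\ell}$ and your use of Lemma~\ref{lem:j-bound-i-diff} are exactly what the paper does, and the case $k^{\star\star}\ne i$ is handled the same way in both proofs.

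The gap is in the case $k^{\star\star}=i$. After taking $a=j$ you are left with the purely $v_{j\ell}$-inequality
\[
v_{j\ell}(\s_{-j\ell},0_{j\ell})-v_{j\ell}(s_{i\ell},\mathbf{0}_{-i\ell})\ \le\ v_{j\ell}(\s_{-i\ell},0_{i\ell}),
\]
and you propose to prove it using only monotonicity and Definition~\ref{cond:weak_submod} applied to $v_{j\ell}$. That does not work: Definition~\ref{cond:weak_submod} for $v_{j\ell}$ controls only the marginal effect of $s_{j\ell}$, whereas the inequality compares the effect of raising the signals in $[n]\setminus\{i,j\}$ at two different levels of $s_{i\ell}$. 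As a concrete obstruction, $v_{j\ell}(s_i,s_j,s_r)=s_j+s_is_r$ is monotone and satisfies Definition~\ref{cond:weak_submod} for $v_{j\ell}$ (the $s_j$-marginal is constant), yet with $s_i=s_r=10$, $s_j=1$ the left side is $100$ and the right side is $1$. So the residual inequality cannot be extracted from the properties of $v_{j\ell}$ alone; contrary to your diagnosis, the $k^{\star\star}=i$ case is \emph{not} a place where cross-agent information is dispensable.

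The paper's argument in this case is different from yours and avoids the problematic residual. It does not reuse your reduction with $a=j$; instead it starts from $\tilde v_{j\ell}(\s)=v_{j\ell}(0_{i\ell},\s_{-i\ell})\ge v_{j\ell}(0_{i\ell},\s_{-i\ell})-v_{j\ell}(\mathbf 0)$, applies Lemma~\ref{lem:j-bound-i-diff} to the increment $\mathbf 0\to(0_{i\ell},\s_{-i\ell})$ (whose $i$-th coordinate is zero) to pass to $v_{i\ell}(0_{i\ell},\s_{-i\ell})-v_{i\ell}(\mathbf 0)$, and then uses Definition~\ref{cond:weak_submod} for $v_{i\ell}$ (not $v_{j\ell}$) to conclude $v_{i\ell}(0_{i\ell},\s_{-i\ell})-v_{i\ell}(\mathbf 0)\ge v_{i\ell}(\s)-v_{i\ell}(s_{i\ell},\mathbf 0_{-i\ell})\ge \tilde v_{i\ell}(\s)-\hat v_{i\ell}$. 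In short, the right submodularity to invoke is that of $v_{i\ell}$, and you still need one application of Lemma~\ref{lem:j-bound-i-diff} in this case as well.
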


\begin{proof}
	Consider agent $j$ and let $k\neq j$ be some agent such that $\tilde{v}_{j\ell}(\s)=v_{j\ell}(\mathbf{0}_{k\ell},\s_{(-k)\ell})$. We consider two cases:
	
	{\bf Case~${k\neq i}$:} In this case, 
	\begin{flalign*}
		\tilde{v}_{j\ell}(\s) & \geq 	v_{j\ell}(\mathbf{0}_{k\ell},\s_{-k\ell})-v_{j\ell}(s_{i\ell},\mathbf{0}_{-i\ell})\\
		& \geq  \left(v_{i\ell}(\mathbf{0}_{k\ell},\s_{-k\ell})-v_{i\ell}(s_{i\ell},\mathbf{0}_{-i\ell})\right)/\max\{\gamma,c\} \\
		& \geq  \left(\tilde{v}_{i\ell}(\s)-v_{i\ell}(s_{i\ell},\mathbf{0}_{-i\ell})\right)/\max\{\gamma,c\},
	\end{flalign*}
	where the second inequality follows from Lemma~\ref{lem:j-bound-i-diff}, and the third inequality follows from the definition of truncated valuations.
	
	{\bf Case $k=i$:} In this case, 
	\begin{flalign*}
		\tilde{v}_{j\ell}(\s) & \geq 	v_{j\ell}(0_{i\ell},\s_{-i\ell})-v_{j\ell}(\mathbf{0})\\
		& \geq  \left(v_{i\ell}(0_{i\ell},\s_{-i\ell})-v_{i\ell}(\mathbf{0})\right)/\max\{\gamma,c\} \\
		& \geq  \left(v_{i\ell}(\s)-v_{i\ell}(s_{i\ell},\mathbf{0}_{-i\ell})\right)/\max\{\gamma,c\} \\
		& \geq  \left(\tilde{v}_{i\ell}(\s)-v_{i\ell}(s_{i\ell},\mathbf{0}_{-i\ell})\right)/\max\{\gamma,c\},
	\end{flalign*}
	where the second inequality follows from Lemma~\ref{lem:j-bound-i-diff}, and the third inequality follows from the submodularity in others' signals (Definition~\ref{cond:weak_submod}).\footnote{This claim is not true without this submodularity assumption, as demonstrated by the function $v_{j\ell}(s_i,s_j,s_k)=v_{i\ell}(s_i,s_j,s_k)=s_i\cdot s_j + s_i\cdot s_k$ with $s_i=s_j=s_k=1$. In this case, $\gamma=c=1$, and the LHS of the inequality in the lemma statement is 0, while the RHS is~1.}
\end{proof}

\begin{proof}[Proof of \cref{lem:bounding-other}]
	Consider the following deviation from agent $i$'s equilibrium strategy, where $i$ is among the first $m$ agents: Agent $i\in[m]$ goes all-in for item $i$. The all-in report is $\overline{a}_{i\ell}=1$ if $\ell=i$ and $0$ otherwise, and $\overline{b}_{ii}=s_{ii}$.	
	
	Our goal is to show $\max\{\gamma,c\}\cdot(\max\{\gamma,c+1\}+2)\cdot\eq(\sigma)\geq\mathsf{OTHER}$.  Recall that for item $\ell$, $j_\ell(\bids,\avec)$ is the agent to whom our mechanism allocates item $\ell$ under the bidding profile $(\bids,\avec)$.  We first notice that
	\begin{eqnarray}
	\eq &\geq& \sum_{i=1}^m\E_{\substack{\s\sim F\\(\bids,\avec)\sim \sigma(\s)}}\left[u_i((\bids,\avec);\s)\right]\nonumber\\
	&\geq & \sum_{i=1}^m\E_{\substack{\s\sim F\\(\bids,\avec)\sim \sigma(\s)}}\left[u_i\left((\overline{b}_i,\overline{a}_i,\bids_{-i},\avec_{-i}); \s\right)\right]\nonumber\\
	&\geq &\sum_{i=1}^m\E_{\substack{\s\sim F\\(\bids,\avec)\sim \sigma(\s)\\j_i=j_i(\bids,\avec)}}\left[\tilde{v}_{ii}(\s)-\max\{\gamma,c+1\}\cdot v_{j_ii}(b_{j_ii},\s_{-j_ii})\cdot \I_{i\neq j_i}-v_{j_ii}(\s)\cdot \I_{i= j_i}\right].\label{eq:eq-greater-other}
	\end{eqnarray}
	In the above, the first inequality holds because the welfare is greater than the utility, the second holds because of the equilibrium hypothesis, and the third follows from lower bound (\ref{lb:other}) in \cref{lem:multiple:bounded-utility}.  
	
	
	Consider the first term in Eq.~\eqref{eq:eq-greater-other}:
	\begin{eqnarray}
	\E_{\s}\left[\sum_{i=1}^m\tilde{v}_{ii}(\s)\right]&\geq& \E_{\s}\left[\sum_{(i,\ell)\in\widetilde{m}(\s)}\tilde{v}_{\ell\ell}(\s)\right]\nonumber\\&\geq& \E_{\s}\left[\sum_{(i,\ell)\in\widetilde{m}(\s)}\tilde{v}_{i\ell}(\s)-v_{i\ell}(s_{i\ell},\mathbf{0}_{-i\ell})\right]/\max\{\gamma,c\}\nonumber\\ &=& \mathsf{OTHER}/\max\{\gamma,c\},\label{eq:welfare-other}
	\end{eqnarray}
	where the second inequality follows from \cref{lem:arbitrary-covers-i}.
	
	As for the second term in Eq.~\eqref{eq:eq-greater-other}:
	\begin{eqnarray}
	& &\sum_{i=1}^m\E_{\substack{\s\sim F\\(\bids,\avec)\sim \sigma(\s)\\j_i=j_i(\bids,\avec)}}\left[\max\{\gamma,c+1\}\cdot v_{j_ii}(b_{j_ii},\s_{-j_ii})\cdot \I_{i\neq j_i}+v_{j_ii}(\s)\cdot \I_{i= j_i}\right]\nonumber
	\\&\leq &\max\{\gamma,c+1\}\E_{\substack{\s\sim F\\(\bids,\avec)\sim \sigma(\s)\\j_i=j_i(\bids,\avec)}}\left[\sum_{i=1}^mv_{j_ii}(b_{j_ii},\s_{-j_ii})\right] + \E_{\substack{\s\sim F\\(\bids,\avec)\sim \sigma(\s)\\j_i=j_i(\bids,\avec)}}\left[\sum_{i=1}^m v_{ii}(\s)\cdot \I_{i= j_i}\right]\nonumber\\
	& = & \max\{\gamma,c+1\}\E_{\substack{\s\sim F\\(\bids,\avec)\sim \sigma(\s)}}\left[\sum_i\sum_{\ell\in X_i(\bids,\avec)}v_{i\ell}(b_{i\ell},\s_{-i\ell})\right] + \E_{\substack{\s\sim F\\(\bids,\avec)\sim \sigma(\s)}}\left[\sum_{i=1}^m v_{ii}(\s)\cdot \I_{i\in X_i(\bids,\avec)}\right]\nonumber\\
	& \leq & \max\{\gamma,c+1\}\E_{\substack{\s\sim F\\(\bids,\avec)\sim \sigma(\s)}}\left[\sum_iv_i(X_i(\bids,\avec);\s)\right] + \E_{\substack{\s\sim F\\(\bids,\avec)\sim \sigma(\s)}}\left[\sum_{i=1}^m v_{i}(X_i(\bids,\avec);\s)\right]\nonumber\\
	& = & (\max\{\gamma,c+1\}+1)\eq(\sigma),\label{eq:error-term-other}
	\end{eqnarray}
	where the first inequality is due to the definition of $j_i(\bids,\avec)$ and the second inequality follows from the no-overbidding assumption.
	Combining Equations~\eqref{eq:eq-greater-other}-\eqref{eq:error-term-other} yields the desired bound on $\mathsf{OTHER}$.
\end{proof}



%

\section{Multiple Items: A Negative Result}
\label{sec:multiple-neg}

In this section, we show that for the case of multiple items, there is a separation between the case of few and many items; namely, when there are much more items then bidders, one cannot hope for a bound independent of $n$, even in the very simple case of (a) \textit{Common values} (which implies $\gamma=c=1$); (b) \textit{Symmetric valuations:} the valuation functions describing each of the items are identical, and each agent/item signal is drawn i.i.d.\footnote{Moreover, the valuations satisfy the diminishing marginal influence, Definition~\ref{cond:weak_submod}.}; (c) $\widetilde{\opt}/\opt\longrightarrow 1$ as $n$ grows large.

In our setting, each agent $i$ has a signal $s_{i\ell}$ for each item $\ell$, which is distributed according to $$s_{i\ell}=\begin{cases}
1\quad \text{w.p. } 1/n\\
0\quad \text{w.p. } 1-1/n
\end{cases},$$ independently (and identically) of other items. There are $n^2$ items, and the valuation function for each agent $i$ and item $\ell$ is $v_{i\ell}(\s)=1+\sum_j s_{j\ell}.$ 

In Theorem \ref{thm:multi-neg} we establish a gap between the optimal welfare and the welfare in equilibrium that is a function of $n$. Before presenting the result, we provide a high-level description of it along with some intuition.
\begin{itemize}
	\item Consider an arbitrary set of $n$ items. There are roughly $n$ high signals distributed on items in this set roughly uniformly. One can use a balls-and-bins type analysis to show that the highest-valued item in this set has value roughly $\ln n/\ln\ln n,$ up to constant factors (See Claim~\ref{clm:opt-lb}). Therefore, if one partitions the $n^2$ items arbitrarily into $n$ sets of size~$n$, and assigns the highest-valued item in each set to a different agent (this assignment can be arbitrary, due to common value), the obtained welfare is roughly $n\ln n/\ln\ln n.$
	\item Next, we bound the number of items an agent bids on by applying a no-overbidding assumption. Assume an agent bids on a set of items of size~$k$. Since the value of each item is at least~1, the sum of the agent's bids is at least~$k$. On the other hand, the value of any set of items is upper bounded by the value of the grand bundle, which is roughly~$\ln n/\ln\ln n$ (this is roughly the value of the highest-valued item in the grand bundle, according to a balls-and-bins type analysis) (see Claim~\ref{clm:expected-max-ub}). Therefore, the no-overbidding assumption implies that the expected number of items an agent bids on is $O(\ln n/\ln\ln n)$ (See Equation~\eqref{eq:bid-set-ub}).
	\item Finally, since an agent has a very vague idea about the valuable items (she only knows the $+1$ her signal may contribute to some of the items), the agent's bid is basically `a shot in the dark'. Using a balls-and-bins type analysis again, we show that the expected value an agent derives from the items she bids on is $O(\ln\ln n /\ln\ln\ln n)$ (see Claim~\ref{clm:expected-max-log2n-ub}), implying that $\eq = O(n\ln\ln n /\ln\ln\ln n)$, which yields a lower bound of $\tilde{\Omega}(\ln n).$\footnote{Here, $\tilde{\Omega}$ hides away terms that are $o(\ln n)$.}
\end{itemize}  


The main theorem in this section is the following. 

\begin{theorem} 
	\label{thm:multi-neg}
	There exists an instance with $n$ unit-demand agents and $n^2$ items, with $\gamma=c=1$, and $\widetilde{\opt}\approx\opt$, where under a no-overbidding assumption, for every item-bidding mechanism and every equilibrium $\sigma$,
	$\opt/\eq(\sigma)=\Omega(\ln n\ln\ln\ln n/\ln\ln^2 n)$.
\end{theorem}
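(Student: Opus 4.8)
The plan is to follow the three-part outline sketched in the bullet points before the theorem statement, establishing the required auxiliary facts via balls-and-bins arguments and then combining them. The key quantities to control are: (a) $\opt$, which I claim is $\Theta(n\ln n/\ln\ln n)$; (b) the expected number of items any single agent places a positive bid on in any no-overbidding equilibrium, which I claim is $O(\ln n/\ln\ln n)$; and (c) the expected value an agent derives in equilibrium, which I claim is $O(n\ln\ln n/\ln\ln\ln n)$ in aggregate. Dividing (a) by (c) yields the stated $\Omega(\ln n\ln\ln\ln n/\ln\ln^2 n)$ bound.

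First I would prove the lower bound on $\opt$. Partition the $n^2$ items arbitrarily into $n$ blocks of $n$ items each. Within a fixed block, each item $\ell$ has value $v_{i\ell}(\s)=1+\sum_j s_{j\ell}$, and $\sum_j s_{j\ell}$ is a sum of $n$ i.i.d.\ Bernoulli($1/n$) variables, hence roughly Poisson($1$); the maximum of $n$ such (nearly) independent Poissons is $\Theta(\ln n/\ln\ln n)$ with constant probability, by the standard balls-and-bins tail estimate (this is Claim~\ref{clm:opt-lb} in the paper). Since values are common, we may assign the highest-valued item of block $k$ to agent $k$; summing over the $n$ blocks gives $\opt=\Omega(n\ln n/\ln\ln n)$. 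For the matching upper bound on $\opt$ (needed so the ratio is clean, and also for $\widetilde\opt\approx\opt$), note any item's value is at most $1+\max_\ell\sum_j s_{j\ell}$ over all $n^2$ items, which is $O(\ln n/\ln\ln n)$ w.h.p.\ (Claim~\ref{clm:expected-max-ub}); since a matching uses $n$ items, $\opt=O(n\ln n/\ln\ln n)$. The claim $\widetilde\opt\approx\opt$ follows because zeroing out one signal changes each $v_{i\ell}$ by at most $1$, negligible compared to the typical item value in the optimal matching — though I'd need to verify this carefully since most items have value exactly $1$; the point is that the items used in $\widetilde m$ can still be the high-value ones and truncation costs at most a $1$ per item.

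Next, the equilibrium upper bound. Fix any equilibrium $\sigma$ and agent $i$. By the multi-item no-overbidding assumption, the expected sum of $i$'s bids on the items in her allocated set is at most her expected value for that set, which is at most $\E[\max_\ell v_{i\ell}]=O(\ln n/\ln\ln n)$. But more is needed: I want to bound the number of items $i$ \emph{bids positively on}, not just wins. Here I'd use that each item has value at least $1$, so a bid reflecting any participation is bounded below — combined with no-overbidding restricting the \emph{total} bid mass — to conclude the expected size of $i$'s bid-set is $O(\ln n/\ln\ln n)$; this is Equation~\eqref{eq:bid-set-ub} in the paper's outline. (The precise formulation of no-overbidding for item-bidding mechanisms that are not necessarily second-price is the delicate point — one must phrase it so that it caps the number of auctions an agent can meaningfully enter.) Then, conditioned on bidding on a set of size $k$, agent $i$'s value is $\max$ over those $k$ items of $1+\sum_j s_{j\ell}$; since $i$ has essentially no information distinguishing items (all she knows is her own signals, which affect values by only $+1$), the relevant set behaves like $k$ arbitrary items, and $\E[\max$ of $k$ Poissons$]=O(\ln k/\ln\ln k)$. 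With $k=O(\ln n/\ln\ln n)$ in expectation, a concavity/Jensen argument gives $\E[\text{value of }i]=O(\ln\ln n/\ln\ln\ln n)$ (Claim~\ref{clm:expected-max-log2n-ub}). Summing over the $n$ agents, $\eq(\sigma)=O(n\ln\ln n/\ln\ln\ln n)$.

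The main obstacle I anticipate is the third step — rigorously arguing that an agent cannot do better than "a shot in the dark." An agent's own signals $s_{i\ell}$ \emph{are} correlated with the item values (they contribute the $+1$), so one must argue this correlation is too weak to help: knowing $s_{i\ell}=1$ raises $\E[v_{i\ell}]$ only from $2$ to $2$ (wait — it raises the conditional value, but by $O(1)$), which is dwarfed by the $\Theta(\ln n/\ln\ln n)$ spread among items. Formalizing that the posterior on "which items are valuable" remains essentially uniform regardless of the agent's information, and pushing this through the $\max$ over the bid-set while accounting for the equilibrium constraint (the adversary/other agents could in principle correlate their play with realizations), is where the real work lies. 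A secondary subtlety is making the balls-and-bins concentration uniform enough to survive the union bound over $n$ blocks and the conditioning on an agent's bid-set being a particular (random, strategy-dependent) set; I'd handle this by bounding $\E[\max_{\ell\in B}v_{i\ell}]$ for the \emph{worst} set $B$ of the relevant size, which decouples the strategic choice from the randomness.
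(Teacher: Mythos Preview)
Your proposal is correct and follows essentially the same three-part structure as the paper's proof: lower-bound $\opt$ via blocks of $n$ items, upper-bound the size of each agent's bid-set via no-overbidding and the fact that every item has value $\ge 1$, and then upper-bound the expected maximum over a small bid-set. Your closing remark---bounding $\E[\max_{\ell\in B}v_{i\ell}\mid s_i]$ for the \emph{worst} set $B$ of the relevant size so as to decouple the strategic choice of $B$ from the randomness in $\s_{-i}$---is exactly the right resolution of the ``shot in the dark'' worry, and is precisely how the paper handles it (Claim~\ref{clm:expected-max-log2n-ub} is stated for an arbitrary set and any realization of $s_i$).

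The one technical difference worth flagging is in the last step. You propose passing from $\E[|S_i|]=O(\ln n/\ln\ln n)$ to $\E[\text{value}]=O(\ln\ln n/\ln\ln\ln n)$ via Jensen applied to the concave map $k\mapsto \ln k/\ln\ln k$. The paper instead fixes a threshold $\ln^2 n$, uses Markov's inequality to show $\Pr[|S_i|>\ln^2 n\mid s_i]=O(1/(\ln n\ln\ln n))$, and then applies Claim~\ref{clm:expected-max-log2n-ub} on the good event and the global bound of Claim~\ref{clm:expected-max-ub} on the bad event. Your Jensen route is slightly slicker in spirit but would require you to establish a pointwise estimate of the form $\E[\max_{\ell\in B}v_{i\ell}\mid s_i]\le C\ln|B|/\ln\ln|B|+O(1)$ uniformly in $|B|$ and then check concavity carefully near small arguments; the paper's threshold-and-Markov route sidesteps this by needing the bid-set bound only at the single scale $\ln^2 n$. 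Either approach closes the argument.
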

\vspace{-2pt}

Before proving the theorem, we provide some useful claims. 
We first establish a lower bound on $\opt$.

\begin{claim}
	 \label{clm:opt-lb}
	$\opt=\Omega(n\ln n/\ln\ln n)$.
\end{claim}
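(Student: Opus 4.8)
The plan is to exhibit a feasible matching whose expected welfare is $\Omega(n\ln n/\ln\ln n)$. Partition the $n^2$ items arbitrarily into $n$ groups $G_1,\dots,G_n$, each of size exactly $n$. Since all agents have a common value for each item ($v_{i\ell}(\s)=1+\sum_j s_{j\ell}$ does not depend on $i$), assign to agent $k$ the highest-valued item in group $G_k$; this is a valid matching of items to unit-demand agents. Thus $\opt(\s)\ge\sum_{k=1}^n\max_{\ell\in G_k}v_\ell(\s)$ where I write $v_\ell(\s)=1+\sum_j s_{j\ell}$ for the common value. Taking expectations, $\opt\ge n\cdot\E\big[\max_{\ell\in G_1}v_\ell(\s)\big]$, so it suffices to show $\E\big[\max_{\ell\in G_1}v_\ell(\s)\big]=\Omega(\ln n/\ln\ln n)$.

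Fix the group $G_1$ of $n$ items. For each item $\ell\in G_1$, the value $v_\ell=1+\sum_j s_{j\ell}$ where $\sum_j s_{j\ell}$ is a sum of $n$ i.i.d.\ Bernoulli$(1/n)$ variables, i.e.\ approximately Poisson$(1)$. The $n$ random variables $\{v_\ell\}_{\ell\in G_1}$ are mutually independent (signals are drawn i.i.d.\ across all agent/item pairs). The key step is a standard balls-and-bins/extreme-value lower bound: for a single $\ell$, $\Pr[\sum_j s_{j\ell}\ge t]\ge \binom{n}{t}(1/n)^t(1-1/n)^{n-t}\ge \frac{1}{t^t}\cdot\Theta(1)$ by keeping only the probability that exactly $t$ specific signals are $1$ (using $\binom{n}{t}\ge (n/t)^t$ and $(1/n)^t$ cancelling the $n^t$). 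Choosing $t=t_n:=\lceil c\ln n/\ln\ln n\rceil$ for a suitable small constant $c$, one checks $t^t=e^{t\ln t}\le e^{(1-\delta)\ln n}=n^{1-\delta}$, so $\Pr[v_\ell\ge t_n]\ge n^{-(1-\delta)}=\omega(1/n)$. Then
\[
\Pr\Big[\max_{\ell\in G_1}v_\ell< t_n\Big]=\big(1-\Pr[v_\ell\ge t_n]\big)^{n}\le \exp\big(-n\cdot n^{-(1-\delta)}\big)=\exp(-n^{\delta})\to 0,
\]
so with probability $\ge 1/2$ (for $n$ large) we have $\max_{\ell\in G_1}v_\ell\ge t_n$, giving $\E[\max_{\ell\in G_1}v_\ell]\ge \frac12 t_n=\Omega(\ln n/\ln\ln n)$. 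Multiplying by $n$ yields $\opt\ge n\cdot\Omega(\ln n/\ln\ln n)=\Omega(n\ln n/\ln\ln n)$.

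The main obstacle is making the elementary inequalities in the concentration step clean: one needs the estimate $\binom{n}{t}(1/n)^t(1-1/n)^{n-t}=\Omega(1/t^t)$ to hold for $t$ as large as $\Theta(\ln n/\ln\ln n)$ (so the $(1-1/n)^{n-t}$ factor is still $\Theta(1)$ and $\binom{n}{t}\ge (n/t)^t$ is tight enough), and then to choose the constant $c$ in $t_n$ so that $t_n^{t_n}=n^{o(1)}\cdot n^{1-\delta}$ for some fixed $\delta>0$ — i.e.\ solving $t\ln t\le(1-\delta)\ln n$ with $t=c\ln n/\ln\ln n$, where $\ln t=\ln\ln n-\ln\ln\ln n+O(1)\le \ln\ln n$, so $t\ln t\le c\ln n$; taking $c<1$ works. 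All of this is routine; no genuinely hard step is anticipated.
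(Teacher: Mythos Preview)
Your proposal is correct and follows essentially the same approach as the paper: partition the $n^2$ items into $n$ groups of size $n$, lower-bound the probability that a single item's signal sum reaches $t=\Theta(\ln n/\ln\ln n)$ by $\Omega(1/t^t)$ via the binomial estimate, and then conclude that with high probability each group contains such an item. The only minor difference is in the concentration step---you use the direct product bound $(1-p)^n\le e^{-np}$ exploiting independence across items, whereas the paper applies Chebyshev's inequality to the count of high-value items; your route is arguably the cleaner of the two.
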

	
\begin{proof}
	We partition the items into $n$ sets of $n$ items. It suffices to show that in each of the sets, $\E_\s[\max_{\ell\in [n]}v_{i\ell}(\s)]=\Omega(\log n/\log\log n)$, since we can allocate the best items in each set to a different bidder. 
	
	The probability that an item's value is at least $k+1$ is 
	\begin{eqnarray}
	\Pr_{\s}\left[ v_{i\ell}(\s)\geq k+1\right] &=& \Pr_\s\left[\sum_{i\in [n]}s_{i\ell}\geq k\right]\nonumber\\
	&\geq&  \Pr_\s\left[\sum_{i\in [n]}s_{i\ell}= k\right]\nonumber\\
	& = & \binom{n}{k}\left(\frac{1}{n}\right)^k\left(1-\frac{1}{n}\right)^k\nonumber\\
	&\geq& \left(\frac{n}{k}\right)^k\frac{1}{n^k}\frac{1}{e}\nonumber\\
	& = &\frac{1}{ek^k}.
	\end{eqnarray}
	
	Setting $k=\frac{\ln n}{3\ln\ln n}$, we have $k^k\leq \ln n^{\frac{\ln n}{3\ln\ln n}}=e^{\frac{\ln n}{3}}=n^{1/3}.$ Therefore, $$\Pr_{\s}\left[ v_{i\ell}(\s)\geq \frac{\ln n}{3\ln\ln n}+1\right]\geq \frac{1}{en^{1/3}}.$$
	Let $X_{\ell}$ be the indicator for the event that item $\ell$ has value greater than $\frac{\ln n}{3\ln\ln n}$, and $X$ the number of such items in a set of $n$ items. We know that $$\mu=\E_{\s}\left[X\right] \geq n\cdot \frac{1}{en^{1/3}}=n^{2/3}/e.$$
	By Chebychev inequality, we have that 
	\begin{eqnarray*}
	\Pr_\s\left[\max_{\ell\in [n]}X_\ell \leq \frac{\ln n}{3\ln\ln n}+1\right] & \leq & \Pr_{\s}\left[\lvert X-\mu\rvert\geq \mu\right] \\
	&\leq&  \frac{\Var(X)}{\mu^2}\\
	& \leq&  e^2\cdot \frac{\sum_{\ell=1}^n(\Var(X_\ell))+\sum_{\ell\neq j}\Cov(X_\ell,X_j)}{n^{4/3}}. 
	\end{eqnarray*}
	For each $\ell$, $\Var(X_\ell)\leq \E[X_\ell]^2= \frac{1}{e^2n^{2/3}},$ and since $X_\ell$ and $X_j$ are independent for each $\ell\neq j$, we also have that $\Cov(X_\ell,X_j)=0$ for each $\ell\neq j$. Therefore, we get $\Pr_\s\left[\max_{\ell\in [n]}X_\ell \leq \frac{\ln n}{3\ln\ln n}+1\right] \leq 1/n,$ 
	which implies that $\E_\s[\max_{\ell\in [n]}v_{i\ell}(\s)]=\Omega(\ln n/\ln\ln n)$. This concludes the proof of the claim.	
\end{proof}

The following claim is used to bound the expected value of the maximal item, and thus the number of items an agent bids on.
\begin{claim}
	\label{clm:expected-max-ub}
	For any $i$, and for any realization of $s_i=(s_{i1},\ldots,s_{im})$,
$$\E_{\s_{-i}}\left[\max_\ell v_{i\ell}(\s)\ \Big\vert\ s_i\right]\leq 4\ln n/\ln\ln n+3.$$
\end{claim}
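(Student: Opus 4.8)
The plan is to condition on $s_i$ and reduce the estimate to a standard balls-and-bins tail bound. Fix the realization $s_i=(s_{i1},\dots,s_{im})$ with $m=n^2$, and for each item $\ell$ write $v_{i\ell}(\s)=1+s_{i\ell}+Y_\ell$ where $Y_\ell:=\sum_{j\ne i}s_{j\ell}$; conditioned on $s_i$, the variables $\{Y_\ell\}_{\ell\in[m]}$ are i.i.d., each distributed as $\mathrm{Bin}(n-1,1/n)$. Since $s_{i\ell}\le 1$, we have the pointwise bound $\max_\ell v_{i\ell}(\s)\le 2+\max_\ell Y_\ell$, so it suffices to prove $\E[\max_\ell Y_\ell]\le 4\ln n/\ln\ln n+1$.

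For the tail of a single $Y_\ell$, a union bound over the $k$-subsets of $\{j\ne i\}$ gives $\Pr[Y_\ell\ge k]\le\binom{n-1}{k}(1/n)^k\le (n-1)^k/(k!\,n^k)\le 1/k!$. A further union bound over the $m=n^2$ items yields $\Pr[\max_\ell Y_\ell\ge k]\le n^2/k!$. Since $\max_\ell Y_\ell$ is a non-negative integer-valued random variable, $\E[\max_\ell Y_\ell]=\sum_{k\ge 1}\Pr[\max_\ell Y_\ell\ge k]\le K+\sum_{k>K}n^2/k!$ for every integer $K\ge 0$.

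It then remains to choose $K$. I would take $K=\lfloor 4\ln n/\ln\ln n\rfloor$, bound $\sum_{k>K}n^2/k!\le 2n^2/(K+1)!$ (a geometric comparison, valid once $K\ge 1$), and argue $(K+1)!\ge n^3$ for all sufficiently large $n$ via $j!\ge(j/e)^j$: with $K+1\ge 4\ln n/\ln\ln n$, one has $\ln((K+1)/e)=\ln 4-1+\ln\ln n-\ln\ln\ln n\ge\tfrac34\ln\ln n$ for large $n$ (because $\ln\ln\ln n=o(\ln\ln n)$), so $(K+1)!\ge((K+1)/e)^{K+1}\ge e^{(4\ln n/\ln\ln n)\cdot\frac34\ln\ln n}=n^3$. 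Hence $\sum_{k>K}n^2/k!\le 2n^2/n^3<1$, giving $\E[\max_\ell Y_\ell]\le K+1\le 4\ln n/\ln\ln n+1$, and therefore $\E_{\s_{-i}}[\max_\ell v_{i\ell}(\s)\mid s_i]\le 2+K+1\le 4\ln n/\ln\ln n+3$.

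The \textbf{main obstacle} — in fact the only non-routine point — is the constant bookkeeping in the final step: the threshold $K$ must be small enough (at most $4\ln n/\ln\ln n+1$, so that the additive constant ends up being $3$ rather than larger) yet large enough that the factorial tail $\sum_{k>K}n^2/k!$ is at most $1$. The slack in the Stirling-type estimate $j!\ge (j/e)^j$, which hinges on $\ln\ln\ln n$ being negligible against $\ln\ln n$, is exactly what makes both requirements simultaneously satisfiable; this slack only materializes once $n$ exceeds a fixed threshold, so the statement is to be read (as is standard for such balls-and-bins arguments, cf.\ Claim~\ref{clm:opt-lb}) for all sufficiently large $n$.
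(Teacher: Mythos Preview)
Your proof is correct and follows essentially the same approach as the paper: reduce to $Y_\ell=\sum_{j\ne i}s_{j\ell}$, apply the balls-and-bins tail bound $\Pr[Y_\ell\ge k]\le(e/k)^k$ (equivalently $\le 1/k!$), union-bound over the $n^2$ items, and set the threshold at $4\ln n/\ln\ln n$. The only minor difference is in the last step: the paper bounds $\Pr[\max_\ell v_{i\ell}\ge k^*+2]\le 1/n$ at a single threshold and then uses the deterministic cap $\max_\ell v_{i\ell}\le n+1$ on the tail event, whereas you use the tail-sum identity $\E[\max_\ell Y_\ell]=\sum_{k\ge1}\Pr[\max_\ell Y_\ell\ge k]$ and control the tail via factorial decay --- both are standard and yield the same constant.
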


\begin{proof}
	Consider a bidder $i$. We have that 
	\begin{eqnarray}
	\Pr_{\s_{-i}}\left[\max_\ell v_{i\ell}(\s)\geq k+2\ \Big\vert\ s_i\right] & = & \Pr_{\s_{-i}}\left[\max_\ell\{ 1+\sum_{j\in[n]}s_{j\ell}\}\geq k+2\ \Big\vert\ s_i\right]\nonumber\\
	&\leq  &\Pr_{\s_{-i}}\left[\max_\ell\{\sum_{j\neq i}s_{j\ell}\}\geq k\right]\nonumber\\
	& =  &\Pr_{\s_{-i}}\left[\exists \ell\ :\ \sum_{j\neq i}s_{j\ell}\geq k\right]\nonumber\\
	&\leq&  \Pr_{\s}\left[\exists \ell\ :\ \sum_{j\in [n]}s_{j\ell}\geq k\right]\nonumber\\
	&\leq & n^2\cdot \Pr_{\s}\left[ \sum_{j\in [n]}s_{j\ell}\geq k\right].\label{eq:highmax-bound}
	\end{eqnarray}
	
	The first inequality follows because the signals are sampled independently, and the best case is where $s_i$ is the all ones vector, and the last inequality is by applying a union bound. We can use a balls and bins type analysis in order to bound $\Pr_{\s}\left[ \sum_{j\in [n]}s_{j\ell}\geq k\right]$. Namely, If we consider any $k$ agents, the probability that these $k$ agents have a high signal for item $\ell$ is $\left(\frac{1}{n}\right)^k$. Using a union bound, we get that the probability there exist $k$ agents with a high signal for item $\ell$ is at most $\binom{n}{k}\left(\frac{1}{n}\right)^k$. Therefore, 
	\begin{eqnarray*}
		\Pr_{\s}\left[ \sum_{j\in [n]}s_{j\ell}\geq k\right] &\leq &\binom{n}{k}\left(\frac{1}{n}\right)^k\\
		&\leq &\left(\frac{en}{k}\right)^k\left(\frac{1}{n}\right)^k\\
		& =  &\left(\frac{e}{k}\right)^k,
	\end{eqnarray*}
	where the second inequality follows Stirling's approximation. Setting $k=4\ln n/\ln\ln n$, we have that 
	\begin{eqnarray*}
		\Pr_{\s}\left[ \sum_{j\in [n]}s_{j\ell}\geq k\right] &\leq &\left(\frac{e\ln\ln n}{4\ln n}\right)^\frac{4\ln n}{\ln\ln n}\\
		& \leq &\exp\left(\frac{4\ln n}{\ln\ln n}(\ln\ln\ln n-\ln\ln n)\right)\\
		& =  &\exp\left(-4\ln n + \frac{4\ln n\ln\ln\ln n}{\ln\ln n}\right),
	\end{eqnarray*}
	and for $n$ large enough,
	$$\Pr_{\s}\left[ \sum_{j\in [n]}s_{j\ell}\ \geq\ k\right]\ \leq\ \exp\left(-3\ln n\right) = \frac{1}{n^3}.$$
	Therefore, by Equation~\eqref{eq:highmax-bound},
	\begin{eqnarray}
	\Pr_{\s_{-i}}\left[\max_\ell v_{i\ell}(\s)\ \geq\ \frac{4\ln n}{\ln\ln n}+2\ \Big\vert\ s_i\right]\ \leq\  1/n.
	\end{eqnarray}
	We get that 
	\begin{eqnarray*}
		\E_{\s_{-i}}\left[\max_\ell v_{i\ell}(\s)\ \Big\vert\ s_i\right]& \leq &\Pr_{\s_{-i}}\left[\max_\ell v_{i\ell}(\s)< \frac{4\ln n}{\ln\ln n}+2\ \Big\vert\ s_i\right]\cdot \left(\frac{4\ln n}{\ln\ln n}+2\right) \\
		&  &\ +\ \Pr_{\s_{-i}}\left[\max_\ell v_{i\ell}(\s)\geq \frac{4\ln n}{\ln\ln n}+2\ \Big\vert\ s_i\right]\cdot (n+1) \\ 
		&\leq  &(1-1/n)\cdot \left(\frac{4\ln n}{\ln\ln n}+2\right) + (n+1)/n\\
		& \leq  &\frac{4\ln n}{\ln\ln n}+3.
	\end{eqnarray*}
	
\end{proof}

We also bound the expected value an agent might derive by bidding on a set of $\ln ^2n$ items. 
\begin{claim}
	\label{clm:expected-max-log2n-ub}
	If agent $i$ bids on a set of at most $\ln ^2n$ items, her expected value for that set is at most $\frac{4\ln\ln n }{\ln\ln\ln n}+4$ for any realization of $s_i$.
\end{claim}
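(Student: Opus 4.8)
The plan is to adapt the balls-and-bins tail-probability argument already used for \cref{clm:expected-max-ub}, but with a twist: since the set $T$ of items agent $i$ bids on now contains only $|T|\le\ln^2 n$ items rather than all $n^2$, the crude ``with the remaining probability the value is at most $n+1$'' step is too lossy (it would force a threshold of order $\ln n/\ln\ln n$). Instead I would bound the expectation by summing the tail probabilities $\sum_{k\ge 1}\Pr[\cdot\ge k]$ directly.

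First I would reduce to a maximum over signal sums. Because agent $i$ is unit-demand, her value for the set $T$ equals $\max_{\ell\in T}v_{i\ell}(\s)=\max_{\ell\in T}\bigl(1+\sum_j s_{j\ell}\bigr)$, which is at most $2+M$, where $M:=\max_{\ell\in T}\sum_{j\ne i}s_{j\ell}$, after peeling off the constant $1$ and the term $s_{i\ell}\le 1$. Crucially $M$ does not involve $s_i$, so it suffices to bound $\E_{\s_{-i}}[M]$ for a fixed set $T$ with $|T|\le\ln^2 n$, uniformly over realizations of $s_i$ (the case of a randomized bidding rule follows by averaging over $i$'s own randomness). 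For the tail of $M$ I would invoke the same union bounds as in \cref{clm:expected-max-ub}: a union bound over the $\binom{n}{k}$ choices of $k$ agents holding a high signal on a given item, followed by a union bound over the $\le\ln^2 n$ items of $T$, yields
\[
\Pr_{\s_{-i}}[M\ge k]\ \le\ \ln^2 n\cdot\binom{n}{k}\Bigl(\tfrac1n\Bigr)^k\ \le\ \ln^2 n\cdot\Bigl(\tfrac ek\Bigr)^k
\]
by Stirling's approximation.

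Then I would set $k^\ast:=\lceil 4\ln\ln n/\ln\ln\ln n\rceil$, bound the first $k^\ast$ terms of $\E_{\s_{-i}}[M]=\sum_{k\ge 1}\Pr_{\s_{-i}}[M\ge k]$ by $1$ each, and control the tail $\sum_{k>k^\ast}\ln^2 n\,(e/k)^k$. Writing $(e/k)^k=e^{k(1-\ln k)}$, the ratio of consecutive terms is below $\tfrac12$ for every $k\ge k^\ast$, so the tail sum is at most $\ln^2 n\cdot (e/k^\ast)^{k^\ast}$. For $n$ large enough one has $\ln k^\ast\ge\tfrac34\ln\ln\ln n$ (using $\ln\ln\ln\ln n=o(\ln\ln\ln n)$) and $k^\ast\le\ln\ln n$, hence $k^\ast(1-\ln k^\ast)\le k^\ast-3\ln\ln n\le-2\ln\ln n$, so $(e/k^\ast)^{k^\ast}\le(\ln n)^{-2}$ and the tail sum is at most $1$. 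Combining, $\E_{\s_{-i}}[M]\le k^\ast+1\le 4\ln\ln n/\ln\ln\ln n+2$, and therefore the expected value agent $i$ derives from $T$ is at most $2+\E_{\s_{-i}}[M]\le 4\ln\ln n/\ln\ln\ln n+4$ for every realization of $s_i$, as claimed.

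The main obstacle is the constant-chasing in the last step: the additive slack in the statement is only ``$+4$'', essentially nothing beyond the $2+M$ reduction, so the estimate $(e/k^\ast)^{k^\ast}\le(\ln n)^{-2}$ must be tight enough that multiplying by the $\ln^2 n$ union-bound factor still leaves the full tail bounded by $1$; one must also keep track of the ``for $n$ sufficiently large'' thresholds implicit in $\ln\ln\ln\ln n=o(\ln\ln\ln n)$ and $k^\ast=o(\ln\ln n)$. Everything else is the same balls-and-bins reasoning already carried out for \cref{clm:opt-lb} and \cref{clm:expected-max-ub}.
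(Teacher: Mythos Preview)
Your proof is correct. Both arguments begin identically---union-bound over the at most $\ln^2 n$ items and over $\binom{n}{k}$ subsets of agents to get $\Pr[M\ge k]\le \ln^2 n\cdot(e/k)^k$, and both pick the threshold $k^\ast\approx 4\ln\ln n/\ln\ln\ln n$---but they differ in how the tail bound is converted to an expectation bound. The paper uses a \emph{three-tier} decomposition of the range of $\max_\ell v_{i\ell}$: below $k^\ast+2$ (probability $\le 1$), between $k^\ast+2$ and $4\ln n/\ln\ln n+2$ (probability $\le 1/\ln n$ from the new tail bound), and above $4\ln n/\ln\ln n+2$ (probability $\le 1/n$, imported from \cref{clm:expected-max-ub}); each tier contributes an additive constant. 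You instead use the layer-cake identity $\E[M]=\sum_{k\ge 1}\Pr[M\ge k]$, bound the first $k^\ast$ summands by $1$, and control the remaining terms as a geometric series (indeed $(e/(k{+}1))^{k+1}/(e/k)^k\le 1/k\le 1/2$), so the tail is at most $\ln^2 n\cdot(e/k^\ast)^{k^\ast}\le 1$. Your route is self-contained---it does not reuse \cref{clm:expected-max-ub}---at the cost of a slightly more delicate asymptotic estimate on $(e/k^\ast)^{k^\ast}$; the paper's route is more modular, recycling the earlier claim for the far tail. Both land exactly on the ``$+4$''.
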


\begin{proof}
	We first bound the probability that the agent sees an item with value greater than $\frac{4\ln\ln n }{\ln\ln\ln n}+2$. Notice that an agent chooses a set of items to bid on without knowing anything about the realization of the signals of other bidders. That is, if $S_i$ is that set of items $i$ bids on,
	\begin{eqnarray*}
		\Pr_{\s_{-i}}\left[\max_{\ell\in S_i} v_{i\ell}(\s) \geq k+2 \Big| s_i\right] &\leq & \Pr_{\s_{-i}}\left[\max_{\ell\in S_i} \sum_{j\neq i}s_{j\ell} \geq k\right]\\
		&\leq & |S_i|\Pr_{\s}\left[\sum_j s_{j\ell}\geq k\right]\\
		& \leq & \ln^2n\cdot\Pr_{\s}\left[\sum_j s_{j\ell}\geq k\right],
	\end{eqnarray*}
	where the derivations are similar to the ones in Claim~\ref{clm:expected-max-ub}.
	
	Using $k=\frac{4\ln\ln n }{\ln\ln\ln n},$ a similar progression to the one in Claim~\ref{clm:expected-max-ub} shows that 
	$$\Pr_{\s}\left[\sum_j s_{j\ell}\ \geq\ \frac{4\ln\ln n }{\ln\ln\ln n}\right]\  \leq\ \ln^{-3} n,$$
	which yields $$\Pr_{\s_{-i}}\left[\max_{\ell\in S_i} v_{i\ell}(\s)\ \geq\ \frac{4\ln\ln n }{\ln\ln\ln n}+2 \Big| s_i\right]\ \leq\ \frac{1}{\ln n}.$$
	We conclude by noting that 
	\begin{eqnarray*}
		\E_{\s_{-i}}\left[\max_{\ell\in S_i} v_{i\ell}(\s) \right] &\leq& \Pr_{\s_{-i}} \left[\max_{\ell\in S_i} v_{i\ell}(\s) \leq \frac{4\ln\ln n }{\ln\ln\ln n}+2 \Big| s_i\right]\cdot  \left(\frac{4\ln\ln n }{\ln\ln\ln n}+2\right) \\
		& &\ + \Pr_{\s_{-i}} \left[\frac{4\ln\ln n }{\ln\ln\ln n}+2 \leq \max_{\ell\in S_i} v_{i\ell}(\s) \leq \frac{4\ln n }{\ln\ln n}+2 \Big| s_i\right]\cdot  \left(\frac{4\ln n }{\ln\ln n}+2\right)\\
		& &\ + \Pr_{\s_{-i}}\left[\max_{\ell\in S_i} v_{i\ell}(\s)\geq \frac{4\ln n }{\ln\ln n}+2 \Big| s_i\right]\cdot  \left(n+1\right)\\
		&\leq & \left(\frac{4\ln\ln n }{\ln\ln\ln n}+2\right) + \frac{1}{\ln n}\left(\frac{4\ln n }{\ln\ln n}+2\right) + \frac{1}{n}(n+1)\\
		&\leq & \frac{4\ln\ln n }{\ln\ln\ln n} +2 +4/\ln\ln n+2/\ln n + 1 + 1/n\\
		&\leq &\frac{4\ln\ln n }{\ln\ln\ln n}+4,
	\end{eqnarray*} 
	which completes the proof.
\end{proof}

With these claims in hand, we are ready to establish the proof of Theorem~\ref{thm:multi-neg}.


\begin{proof}[Proof of Theorem~\ref{thm:multi-neg}]
	Let $S_i$ be the random set of items agent $i$ bids on. We can bound the expected size of $S_i$ for every $s_i$ by
	\begin{flalign}
	\E[\lvert S_i\rvert \mid s_i]\ \leq\ \E_{\substack{\s_{-i}\sim\F_{-i}\\(b_i,a_i)\sim \sigma_i(s_i)}}\left[\sum_{\ell\ :\ a_{i\ell}=1}v_{i\ell}(b_{i\ell},\s_{-i})\right]\ \leq\ \E_{\s_{-i}\sim\F_{-i}}\left[v_{i}(S_i;\s)\right]\ \leq\ \frac{4\ln n}{\ln\ln n}+3.\label{eq:bid-set-ub}
	\end{flalign}
	The first inequality in \eqref{eq:bid-set-ub} follows by the fact that the value is at least 1 for every item and agent, the second inequality follows by no-overbidding, and the third inequality follows by Claim~\ref{clm:expected-max-ub}.
	
	By Markov inequality, and by Equation~\eqref{eq:bid-set-ub}, we have that for every $s_i$, 
	\begin{flalign*}
		\Pr\left[\lvert S_i\rvert> \ln^2n \mid s_i\right] < \frac{\frac{4\ln n}{\ln\ln n}+3}{\ln^2n}=O(1/\ln n\ln\ln n).
	\end{flalign*}
	We now get that for every $s_i$

	\begin{eqnarray*}
		\E_{\s_{-i}, S_i}\left[v_i(S_i;\s) \mid s_i\right]&\leq &\Pr\left[|S_i| \leq \ln^2 n \mid s_i\right]\left(\frac{4\ln\ln n }{\ln\ln\ln n}+4\right)\ +\ \Pr\left[|S_i| > \ln^2 n \mid s_i\right]\cdot \left(\frac{4\ln n}{\ln\ln n}+3\right)\\
		& = & O\left(\frac{\ln\ln n}{\ln\ln\ln n}\right)+ O\left(\frac{1}{\ln n\ln\ln n}\frac{\ln n}{\ln\ln n}\right) \\ & = &  O\left(\frac{\ln\ln n}{\ln\ln\ln n}\right),
	\end{eqnarray*}
	where the first inequality follows by Claims~\ref{clm:expected-max-ub} and~\ref{clm:expected-max-log2n-ub}.
	
	Therefore, the expected value of the set of items an agent $i$ bids on is $\E_{\s, S_i}\left[v_i(S_i;\s)\right]= O\left(\frac{\ln\ln n}{\ln\ln\ln n}\right)$. Since this is an upper bound on the value of items the agent wins, every agent gets a value of $O\left(\frac{\ln\ln n}{\ln\ln\ln n}\right)$ in an equilibrium, thus $\eq(\sigma)= O(n\ln\ln n/\ln\ln\ln n)$. Since $\opt= \Omega\left(n\ln n/\ln\ln n\right)$ (by Claim~\ref{clm:opt-lb}), we conclude that $\opt/\eq = \Omega(\ln n\ln\ln\ln n/\ln\ln^2 n)$.
	
	This concludes the proof Theorem~\ref{thm:multi-neg}.
\end{proof}

\paragraph{Acknowledgements.} We deeply thank David Parkes for helpful discussions.
\bibliographystyle{plainnat}
\bibliography{abb,interdep-bib}
\newpage

\appendix

\section{Appendix for Section \ref{sec:intro}}
\label{appx:intro}
\subsection{No Pure Equilibrium Example}

\begin{proposition}
	\label{pro:no-pure-equil}
	There exists a single-item, $2$-agent setting with $c$-SC interdependent valuations such that every auction that allocates the item to the highest-valued bidder, and charges at most her reported value, admits no pure Nash equilibrium under the no overbidding assumption.
\end{proposition}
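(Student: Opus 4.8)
The plan is to exhibit a single two--bidder instance in which, for \emph{every} auction in the stated class, no bid profile can be a pure Nash equilibrium, because the bidder who is \emph{not} awarded the item always has a strictly profitable deviation whose success is independent of the auction's payment rule and tie--breaking. Take $S_1=S_2=[0,1]$, realized signal profile $\s=(1,1)$, and set $v_1(t_1,t_2)=1+2(t_1+t_2)+\tfrac12\bigl(\tfrac14-|t_1-t_2|\bigr)$ and $v_2(t_1,t_2)=1+2(t_1+t_2)-\tfrac12\bigl(\tfrac14-|t_1-t_2|\bigr)$, so that $v_1(b_1,b_2)-v_2(b_1,b_2)=\tfrac14-|b_1-b_2|$ for every bid profile $(b_1,b_2)$; equivalently, agent~$1$ is the higher--valued bidder at $(b_1,b_2)$ exactly when $|b_1-b_2|\le\tfrac14$. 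I would first dispatch the routine checks: both $v_i$ are continuous, everywhere positive, strictly increasing in their own coordinate and (strictly, hence weakly) increasing in the other --- the one--sided partials lie in $\{3/2,5/2\}$ --- and the profile is $c$--SC with $c=2$: writing $v_i=w\pm g/2$ with $w=1+2(t_1+t_2)$, $g=\tfrac14-|t_1-t_2|$, and using the reverse triangle inequality $\bigl||t_1-t_2|-|t_1+\delta-t_2|\bigr|\le\delta$, the condition $c\bigl(2\delta+\tfrac12\Delta\bigr)\ge 2\delta-\tfrac12\Delta$ for all $|\Delta|\le\delta$ reduces to $c\ge 5/3$.

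The heart of the proof is an ``escape'' property, established by giving explicit bids: (i) for every $b\in[0,1]$ there is $b_1'\in[0,1)$ with $|b_1'-b|<\tfrac14$, hence $v_1(b_1',b)>v_2(b_1',b)$ --- take $b_1'=b$ if $b<1$ and $b_1'=\tfrac78$ if $b=1$; and (ii) for every $b\in[0,1]$ there is $b_2'\in[0,1)$ with $|b-b_2'|>\tfrac14$, hence $v_2(b,b_2')>v_1(b,b_2')$ --- take $b_2'=\tfrac34$ if $b<\tfrac12$ and $b_2'=0$ if $b\ge\tfrac12$. In either case the deviating agent bids \emph{strictly below} her signal~$1$, so by monotonicity (strict in her own coordinate, weak in the other) her value at the new profile is strictly less than her value $v_i(\s)$ at the true profile.

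Now fix any auction in the class and suppose, for contradiction, that $(b_1^*,b_2^*)$ with $b_1^*,b_2^*\le 1$ is a pure NE. Since both valuations are positive the auction allocates the item to the higher--valued bidder (breaking a possible tie however it likes); let $j$ be the other bidder, whose utility is at most $0$ (she receives nothing and pays a nonnegative amount). By the escape property bidder~$j$ has a bid $b_j'\le s_j$ making her the \emph{strict} maximizer at the profile obtained by replacing only her bid; the auction must then award her the item and charge her at most her reported value $v_j(b_j',b_{-j}^*)$, which by the preceding paragraph is strictly smaller than $v_j(\s)$. Hence her utility after the deviation is at least $v_j(\s)-v_j(b_j',b_{-j}^*)>0$, a strict improvement over $\le 0$ --- contradicting the equilibrium hypothesis. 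Since $(b_1^*,b_2^*)$ was arbitrary, the auction has no pure NE, and this holds for every auction in the class.

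The main obstacle, and the reason the valuations look this way, is a tension between two requirements. The deviation argument needs that \emph{both} bidders can always make themselves the strict maximizer against \emph{any} bid of the opponent; this forces $v_1-v_2$ to be non--monotone in each report (were it monotone, the ``natural winner'' could retreat to the extreme bid $s_i$ where the opponent can never overtake her, producing an equilibrium --- e.g.\ for the generalized Vickrey auction, whose winner's payment is then pinned down, or for the trivial zero--payment auction). Yet $c$--SC and monotonicity of the individual valuations cap how non--monotone $v_1-v_2$ can be. Choosing $v_1-v_2=\tfrac14-|t_1-t_2|$ --- ``agent~$1$ wins iff the reports are within $\tfrac14$, a window smaller than half the signal range'' --- threads this needle: agent~$1$ can always win by matching the opponent, agent~$2$ can always win by jumping to a far corner, and $\tfrac14<\tfrac12$ guarantees a corner is always strictly far enough. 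The complementary subtlety is robustness to the auction's tie--breaking and payment rule, which is exactly why the deviations are engineered to be strict wins at bids strictly below one's own signal.
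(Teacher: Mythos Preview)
Your proof is correct and follows essentially the same strategy as the paper: construct a two-bidder instance where the sign of $v_1-v_2$ can be flipped by either bidder unilaterally, then argue that whoever loses at a putative equilibrium has a deviation making her the strict winner at a reported value strictly below $v_j(\s)$, hence yielding positive utility. The only substantive difference is the concrete construction: the paper takes $v_1-v_2=2\sin(s_1+s_2)$ on $[0,3\pi]^2$ (the winner alternates as the \emph{sum} of reports crosses multiples of $\pi$, giving $c=3$), whereas you take $v_1-v_2=\tfrac14-|s_1-s_2|$ on $[0,1]^2$ (the winner depends on whether the \emph{difference} of reports is small or large, giving $c=5/3$). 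Your piecewise-linear choice is arguably more elementary and yields a slightly smaller $c$; the sinusoidal one has the minor convenience that the valuations are smooth and depend only on $s_1+s_2$. The proof structure is identical in both cases.
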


\begin{corollary}
	There exists a single-item 2-agent setting with $c$-SC interdependent valuations such that neither the second-price auction nor the generalized Vickrey auction admit a pure Nash equilibrium under the no overbidding assumption.
\end{corollary}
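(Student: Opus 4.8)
The plan is to exhibit one explicit two-agent single-item instance and argue directly that no no-overbidding bid profile can be a pure Nash equilibrium, by producing, for each such profile, a profitable unilateral deviation (to some bid, which need not itself be no-overbidding). I would fix a large constant $M$ (say $M=10$) and a small $\epsilon\in(0,1)$, take signal spaces $S_1=[0,\tfrac{1}{\epsilon}+1]$ and $S_2=[0,1]$, the true signal profile $\mathbf{s}=(1,1)$, and the valuations
$$v_1(s_1,s_2)=Ms_2+\epsilon s_1,\qquad v_2(s_1,s_2)=1+\epsilon s_2 .$$
These are $c$-SC with $c=M/\epsilon$ (the only binding case is a change of $s_2$, where $v_1$ moves by $M\delta$ while $v_2$ moves by $\epsilon\delta$); moreover $v_2\ge 1>0$, so the item is always sold, and agent $1$ is the efficient winner since $v_1(\mathbf{s})=M+\epsilon>1+\epsilon=v_2(\mathbf{s})$.

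First I would record the allocation combinatorics. At a bid profile $(b_1,b_2)$ agent $1$ is the (weakly) higher-valued bidder exactly when $v_1(b_1,b_2)\ge v_2(b_1,b_2)$, i.e.\ $\epsilon b_1\ge 1-(M-\epsilon)b_2$, and agent $2$ is weakly higher-valued exactly when $b_1\le\tau(b_2):=\big(1-(M-\epsilon)b_2\big)/\epsilon$. Two facts follow: (a) if $b_1\le 1$ and $b_2=0$ then agent $2$ is \emph{strictly} higher-valued, because $\epsilon b_1\le\epsilon<1$; and (b) whenever agent $2$ is weakly higher-valued at $(b_1,b_2)$ we have $0\le b_1\le\tau(b_2)\le 1/\epsilon$, and for every $\delta>0$ agent $1$ becomes strictly higher-valued at $(\tau(b_2)+\delta,b_2)$, with $\tau(b_2)+\delta\in S_1$ once $\delta<1$.

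Then I would fix an arbitrary no-overbidding profile $\mathbf{b}=(b_1,b_2)$ (so $b_1\le 1$, $b_2\le 1$) together with whichever winner the auction's tie-breaking rule declares, and find a profitable deviation in each case. If agent $1$ is declared the winner, then agent $2$'s utility is $0$; deviating to $b_2'=0$ makes agent $2$ strictly higher-valued by (a), hence the winner, and since the auction charges her at most her reported value $v_2(b_1,0)=1<1+\epsilon=v_2(\mathbf{s})$, her utility becomes at least $\epsilon>0$. If agent $2$ is declared the winner, then agent $1$'s utility is $0$; by (b) she can deviate to $b_1'=\tau(b_2)+\delta$ and become the winner, paying at most her reported value $v_1(b_1',b_2)=Mb_2+\epsilon\tau(b_2)+\epsilon\delta=1+\epsilon b_2+\epsilon\delta\le 1+\epsilon(1+\delta)$, so her utility becomes at least $v_1(\mathbf{s})-1-\epsilon(1+\delta)=M-1-\epsilon\delta>0$ for small $\delta$. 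Since every no-overbidding profile admits a profitable deviation, none is a pure Nash equilibrium; the corollary is immediate because the second-price auction and the generalized Vickrey auction both allocate to the highest-valued bidder and charge the winner at most her reported value.

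The delicate point — and really the content of the example — is the rescue deviation used when agent $1$ is losing: it has her bid $b_1'=\tau(b_2)+\delta$, which can lie above her true signal $s_1=1$. So the no-overbidding assumption is being imposed on the \emph{equilibrium} strategies but not on deviations; if deviations were also required to be no-overbidding this instance would in fact have a pure equilibrium (e.g.\ agent $2$ winning at a tiny bid, with agent $1$ unable to overtake her within budget), and "nice" instances such as the resale model of Example~\ref{example:running} always have pure equilibria — so the non-existence genuinely exploits the $c$-SC asymmetry that lets agent $1$'s value collapse when agent $2$ underbids. A secondary thing to be careful about is uniformity over the auction's unspecified tie-breaking rule, which is why the two deviations above are phrased for an arbitrary declared winner, thereby also covering tie profiles (where $b_1=\tau(b_2)$).
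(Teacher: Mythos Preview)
Your proof is correct under the paper's (standard) reading of ``PNE under no-overbidding'': NOB restricts which profiles qualify as equilibria, but the equilibrium condition still quantifies over \emph{all} deviations $b_i$. Given that, your linear instance and the two deviations you exhibit work exactly as stated; the only small subtlety is that in your example agent~2's winning region is \emph{decreasing} in her own bid, so the GVA critical bid for her is $0$ and the payment equals $v_2(b_1,0)=1$, which you correctly upper-bound by ``at most her reported value''.

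The paper, however, goes a different route. It uses the sinusoidal pair $v_1=\sin(s_1+s_2)+2(s_1+s_2)$, $v_2=\sin(s_1+s_2+\pi)+2(s_1+s_2)$ on $[0,3\pi]^2$ with true signal $(2\pi,2\pi)$: because the two valuation curves cross infinitely often, whichever agent currently loses can always find a \emph{no-overbidding} deviation $b'_i\le s_i$ that flips the ordering and yields strictly positive utility. So the paper's construction establishes a strictly stronger statement---non-existence of pure equilibria even when deviations are themselves NOB-constrained---whereas your argument, as you explicitly flag, collapses in that stricter regime (agent~2 winning at a tiny $b_2$ would then be stable). What your approach buys is simplicity: linear valuations and an elementary threshold calculation, at the price of leaning on the precise formalisation of NOB. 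What the paper's approach buys is robustness to that formalisation, and a smaller $c$ (their example is $3$-SC, yours needs $c=M/\epsilon$), at the price of a more exotic instance.
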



\begin{proof}[Proof of \cref{pro:no-pure-equil}]
	Let ${\mathcal M} = (x,p)$ be a single-item auction that allocates the item to the bidder with the highest reported value, and charges at most the reported value.
	Consider a 2-bidder setting, with signal spaces ${\mathcal S}_1 = {\mathcal S}_2 = [0,3\pi]$ and the following valuations (depicted in Figure \ref{fig:noPNEvaluation} as a function of $s_1+s_2$):
	$v_1(\bs) = \sin(s_1 + s_2) + 2(s_1+s_2)$ and 
	$v_2(\bs) = \sin(s_1 + s_2 +\pi) + 2(s_1+s_2)$.
	
	One can easily verify that these valuations are monotone in both signals, and c-SC for $c=3$.
	We claim that for the case where the signal profile is $\bs = (2\pi, 2\pi)$, there exists no pure Nash equilibrium satisfying no overbidding.
	Falsely assume that there exists $\bids \leq \bs$ such that $\bids$ is a pure Nash equilibrium.
	We show that for every bid $b_i$ of the winning agent, the losing agent has a beneficial deviation.
	Suppose $\bids$ is such that agent $1$ wins the item. By no overbidding, $b_1 \leq s_1 =2\pi$. We claim that agent 2 has a deviation that grants her strictly positive utility. We first show that there exists $b'_2 \leq s_2 =2\pi$ such that agent 2 wins the item. Indeed, there exists $k\in\{1,2\}$ such that for $b'_2$ satisfying $b_1 + b'_2 = 2\pi k - \pi/2$, it holds that $v_2(b_1,b'_2) = 2(2\pi k - \pi/2)+1 > 2(2\pi k - \pi/2) - 1 = v_1(b_1,b'_2)$. By the allocation rule, agent $2$ wins the item.
	Furthermore, by the assumption on the payment, $p_2(b_1,b'_2) \leq v_2(b_1,b'_2)$. Therefore, 
	$u_2((b_1,b'_2);\s) = v_2(\bs) - p_2(b_1,b'_2) \geq v_2(\bs) - v_2(b_1,b'_2) = 8\pi - (2(2\pi k - \pi/2)+1) \geq \pi - 1 >0$. 
	Thus, agent 2 obtains positive utility.
	
	Analogously, if $\bids$ is such that bidder $2$ wins the item, bidder 1 has a deviation $b'_1 \leq s_1$ that grants her positive utility (i.e., there exists $k\in\{0,1\}$ such that for $b'_1$ satisfying $b'_1 + b_2 = 2\pi k + \pi/2$, bidder 1 has strictly positive utility). 
	This concludes the proof.	
	\begin{figure}[H]	
		\centering
		\includegraphics[width=0.6\linewidth]{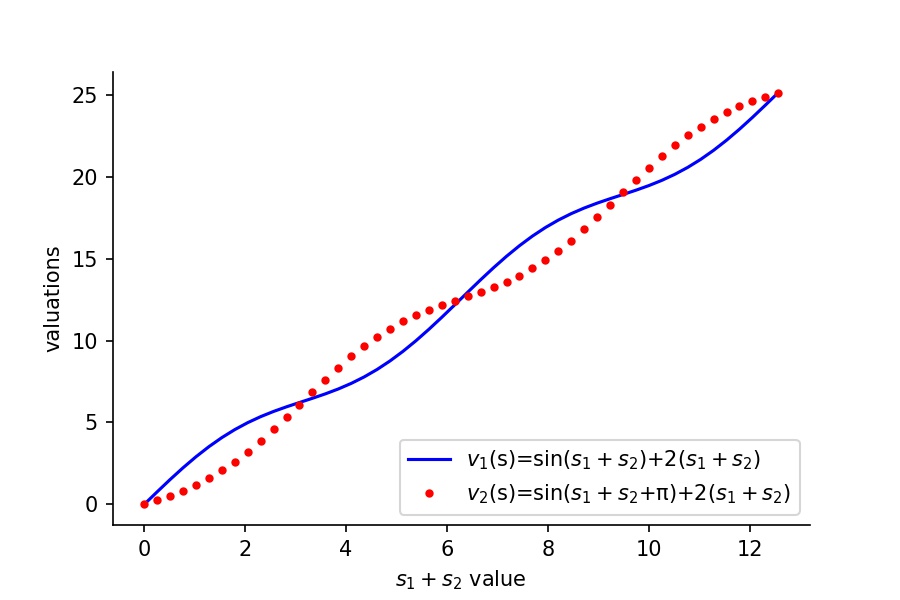}
		\caption{Valuations of agents $1$ and $2$ as functions of $s_1+s_2$. For every bid of one agent, the other  has a winning bid.}
		\label{fig:noPNEvaluation}
	\end{figure}
\end{proof}

\subsection{Appendix for ``Challenges'' Subsection}
\label{appx:challenges}

\begin{proposition*}[\bf Formal statement of Proposition~\ref{clm:GVA:POA}]
	There exists a single-item, $n$-bidder setting, satisfying single-crossing, such that the $\epoa$ of every auction that allocates the item to the highest-valued bidder is $\Omega(n)$ (even under no-overbidding).
\end{proposition*}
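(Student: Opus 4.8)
The plan is to take the concrete instance sketched after the informal statement and upgrade the hand-wave into a genuine ex-post equilibrium. Fix a tiny $\epsilon>0$, take $n$ bidders with signal spaces $S_1=S_2=\{1\}$ and $S_j=[0,1]$ for $j\ge 3$, and valuations $v_1(\s)=\sum_{i\in[n]}s_i+\epsilon$, $v_2(\s)=2(s_2+\epsilon)$, and $v_j(\s)=s_j$ for $j\ge 3$. First I would check admissibility: each $v_i$ is weakly increasing in every coordinate and strictly increasing in its own, and single-crossing holds because raising $s_i$ by $\delta$ changes $v_1$ and $v_i$ (for $i\ge3$) by $\delta$, changes $v_2$ by $2\delta$ when $i=2$, and changes everyone else by $0$, so the own-signal marginal always weakly dominates. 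These valuations are additive, hence also submodular over signals, matching the footnote.

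Next I would pin down the bad realization and the equilibrium. Fix $\s=\mathbf 1$; then $v_1(\s)=n+\epsilon$ is the unique maximum, so $\opt(\s)=n+\epsilon$. For the equilibrium $\sigma$, let every $j\ge 3$ report $0$ regardless of her signal, let bidder $1$ report $1$, and let bidder $2$ report a bid $b_2^{\dagger}$ maximizing her utility among all bids that win against $\bids_{-2}=(1,0,\dots,0)$; such winning bids exist since $b_2=1$ already beats everyone ($v_2=2+2\epsilon>2+\epsilon=v_1$), the set of winning bids is $[1-\epsilon,1]$, and for GVA, 2PA and 1PA the maximizer is $b_2^{\dagger}=1-\epsilon$, taken with an arbitrarily small upward perturbation so that the $v_1=v_2$ tie is broken in bidder $2$'s favor (for a general auction in the class I would take any near-maximizer, which is harmless for the bound). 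Under $\sigma$ at $\s=\mathbf 1$ the realized bid profile is $(1,b_2^{\dagger},0,\dots,0)$, so $v_2>v_1$ and bidder $2$ wins, realizing value $v_2(\s)=2+2\epsilon$; hence $\eq(\sigma,\s)=2+2\epsilon$ and $\opt(\s)/\eq(\sigma,\s)=(n+\epsilon)/(2+2\epsilon)=\Omega(n)$, and $\sigma$ obviously satisfies no-overbidding.

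The substantive step is verifying that $\sigma$ is an ex-post equilibrium, i.e. every bidder best-responds for every realization of the others' signals. The key structural observation is that under $\sigma$ bidder $2$ always faces $\bids_{-2}=(1,0,\dots,0)$ and every $j\ge3$ always faces a profile containing bidder $1$'s bid $1$, so the bid profile seen by any potential deviator is realization-independent (and bidders $1,2$ have realization-independent values, while bidder $1$'s value, though realization-dependent, is irrelevant below). For bidder $1$: under no-overbidding her report is at most $1$, giving $v_1\le 1+b_2^{\dagger}+\epsilon<2b_2^{\dagger}+2\epsilon=v_2$ (using $b_2^{\dagger}>1-\epsilon$), so she never wins and has utility $0$ for every report. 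For $j\ge3$: any report $b_j\le s_j\le1$ gives $v_j=b_j\le1<v_1$, so she never wins and has utility $0$, so reporting $0$ is weakly optimal. For bidder $2$: by construction $b_2^{\dagger}$ is utility-optimal among winning bids, it wins, and her payment is at most her reported value $v_2(\bids)=2(b_2^{\dagger}+\epsilon)\le 2+2\epsilon=v_2(\s)$, so her equilibrium utility is nonnegative and hence at least that of any losing report. Combining these shows $\sigma$ is a deterministic EPE under NOB, and the welfare ratio at $\s=\mathbf 1$ is $\Omega(n)$, which yields $\epoa=\Omega(n)$.

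The main obstacle I anticipate is the quantifier ``every auction that allocates to the highest-valued bidder'': the best-response check for bidder $2$ — she must not prefer deviating to a different winning bid or to losing — is exactly where the payment rule enters, so the argument needs the mild normalization that losers pay nothing and a winner pays at most her own reported value (true for GVA, 2PA, 1PA, and the natural reading of ``auction''), together with a routine argument that the measure-zero ties at $v_1=v_2$ and any non-attainment of bidder $2$'s utility supremum can be absorbed into the $\epsilon$ and the perturbation without affecting the $\Omega(n)$ bound. Checking single-crossing and computing $\opt(\mathbf 1)$ are routine by comparison.
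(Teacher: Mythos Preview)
Your construction and the overall equilibrium idea are the same as the paper's: same instance, same realization $\s=\mathbf{1}$, same strategy for bidders $j\ge 3$, and the same observation that $v_1(\bids')>v_j(\bids')$ for every $\bids'$ makes those bidders indifferent. The paper's proof diverges from yours only in how it handles bidders $1$ and $2$: it simply sets $\sigma_1(1)=\sigma_2(1)=1$ and dispenses with any best-response analysis for them by invoking that $S_1=S_2=\{1\}$ are singletons, so neither bidder can deviate at all. With $\bids=(1,1,0,\dots,0)$ one gets $v_1(\bids)=2+\epsilon<2+2\epsilon=v_2(\bids)$, bidder $2$ wins, and the ratio $(n+\epsilon)/(2+2\epsilon)$ follows immediately.

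This shortcut completely sidesteps what you correctly identify as the ``main obstacle'': you never need to reason about the payment rule, define $b_2^\dagger$, worry about non-attained suprema, or perturb around the tie at $1-\epsilon$. Your variant with $b_2^\dagger$ is fine for GVA and 2PA, but for 1PA bidder $2$ has no exact best response (she wants to bid arbitrarily close to $1-\epsilon$ from above), so strictly speaking $\sigma$ is not an EPE there, only an $\epsilon$-EPE; the paper's singleton-signal-space device avoids this entirely and delivers a genuine EPE for \emph{every} auction in the class, independent of the payment rule.
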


\begin{proof}[Proof of \cref{clm:GVA:POA}]
	Consider a single-item setting with $n$ bidders, signal spaces $S_1 = \{1\}$, $S_2 = \{1\}$ and $S_i = [0, 1]$ for every $i \geq 3$, and the following valuation profile:
	\begin{equation}
	v_1 = \sum_{i\in [n]} s_i +\epsilon,~~~
	v_2 = 2(s_2+\epsilon),~~~\text{and}~~~
	\forall i \geq 3: v_i = s_i,\label{eq:bad-example1}
	\end{equation}
	where $\epsilon>0$ is arbitrarily small.
	Observe that the valuation profile satisfies single-crossing and submodularity over signals~\citep{EdenFFGK19}.
	We argue that the bidding strategy profile $\sigma$ where $\sigma_1(s_1) =1$, $\sigma_2(s_2)=1$ and $\sigma_i(s_i)=0$ for every $i \geq3$ is an ex-post equilibrium (EPE):
	Given the reported signals we have $v_1(\sigma(\s))=2+\epsilon$, $v_2(\sigma(\s))=2+2\epsilon$, and $v_i(\sigma(\s))=0$ for every $i\geq 3$. Thus bidder $2$ wins the item. Bidders $1,2$ cannot deviate since their signal spaces are singletons, and no bidder $i\geq 3$ can ever win by deviating since $v_1(\bids')>v_i(\bids')$ for every bid profile $\bids'$. 
	If the true signal profile is $\bs=(1,\dots,1)$ then the optimal welfare is $n + \epsilon$, while the welfare achieved at EPE $\sigma$ is $2+2\epsilon$, completing the proof.  
\end{proof}

\begin{proposition*}[\bf Formal statement of Proposition~\ref{pro:POA-EPIC}]
	There exists a single-item, $n$-bidder setting, satisfying single-crossing, such that the $\epoa$ of every deterministic ex-post IC-IR mechanism is $\Omega(\sqrt{n})$ (even under no-overbidding).
\end{proposition*}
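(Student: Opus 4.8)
The plan is to use the valuation profile indicated in the introduction: signal spaces $S_i=[0,1]$ for all $i$, and
\[
v_1(\bs)=\sum_{i\in[n]}s_i+\epsilon,\qquad v_2(\bs)=\sqrt n\,s_2,\qquad v_i(\bs)=s_i\ \ (i\ge 3),
\]
for an arbitrarily small $\epsilon>0$. Single-crossing holds directly: raising $s_1$ affects only $v_1$; raising $s_2$ by $\delta$ changes $v_2$ by $\sqrt n\,\delta\ge\delta$ (the change in $v_1$) and leaves $v_i$, $i\ge3$, unchanged; and raising $s_i$ ($i\ge3$) by $\delta$ changes $v_i$ and $v_1$ each by $\delta$. (The profile is also submodular over signals, so the footnote's claim follows too.) I will use two facts from the preliminaries: an ex-post IC-IR deterministic mechanism has a monotone allocation rule with critical-bid payments, so truthful reporting is an ex-post equilibrium; and because bidder $2$'s value $\sqrt n\,s_2$ depends on no one else's signal, the monotone-allocation/critical-bid structure makes truthful reporting a best response for bidder $2$ against \emph{any} fixed profile of the other bids.

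The heart of the argument is a dichotomy tuned by the coefficient $\sqrt n$. First I would dispose of the case in which the mechanism ever hands the item to some bidder $j\ge 3$ in a configuration reachable under the strategies below: then there is a signal profile at which $j$ wins with value at most $1$ (or even $0$), while $\opt$ is $\Omega(\sqrt n)$ — either because $v_2=\sqrt n$, or because the other signals make $v_1$ large — so the ratio is $\Omega(\sqrt n)$; this case is closed by a short sub-case analysis on who wins. Otherwise no bidder $j\ge 3$ can profitably win in those configurations, and then the strategy profile $\sigma^0$ in which bidders $3,\dots,n$ bid $0$ and bidders $1,2$ bid truthfully is an ex-post equilibrium: bidders $\ge3$ cannot gain by deviating, bidder $2$ is best-responding by the single-parameter fact above, and bidder $1$ is best-responding because whenever she wins truthfully her critical-bid payment $b_1^*+s_2+\epsilon$ is at most her true value $s_1+s_2+\sum_{j\ge3}s_j+\epsilon$ (as $b_1^*\le s_1$), while whenever she loses truthfully no-overbidding prevents her from winning at all.

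Now look at the outcome of the mechanism on the bid profile $(0,1,0,\dots,0)$ produced by $\sigma^0$ when bidder $1$'s signal is $0$ and bidder $2$'s signal is $1$; by the case hypothesis the winner is not a bidder $\ge3$. If bidder $1$ wins there, then under truthful reporting (an EPE) at the signal profile $(0,1,0,\dots,0)$ she wins with value $1+\epsilon$, whereas $\opt=v_2=\sqrt n$, so the ratio is $\Omega(\sqrt n)$. If bidder $2$ wins there (or the item is unallocated), then under $\sigma^0$ at the signal profile $(0,1,1,\dots,1)$ the bids are again $(0,1,0,\dots,0)$, so bidder $2$ wins with value $\sqrt n$ (or welfare is $0$), whereas $\opt=v_1=n-1+\epsilon$, again giving ratio $\Omega(\sqrt n)$. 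In all cases $\epoa$ of the mechanism is $\Omega(\sqrt n)$, and by the equilibrium hierarchy (Observation~\ref{obs:equil-hierarchy}) so are its Nash and Bayesian PoA. The main obstacle is the careful treatment of a completely general (possibly non-symmetric) monotone allocation rule: pinning down exactly when a bidder $j\ge 3$ can profitably deviate from bidding $0$ (the delicate point is the boundary critical bid $b_j^*=1$), so that the two halves of the dichotomy genuinely partition all mechanisms, and verifying in the first half that $\opt$ can be made $\Omega(\sqrt n)$ regardless of how small bidder $2$'s signal is forced to be in the configuration where $j$ wins.
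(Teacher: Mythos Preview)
Your approach is the paper's: same valuation profile, same shadow strategy $\sigma^0$ (bidders $1,2$ truthful, others bid $0$), same two-case split on who wins at a fixed bid profile. The paper pivots on $(1,1,0,\dots,0)$ and then uses the signal $(1,\dots,1)$; you pivot on $(0,1,0,\dots,0)$ and use $(0,1,1,\dots,1)$ --- cosmetic differences only.

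Where you are \emph{more} careful than the paper is exactly the point you flag as the main obstacle. The paper disposes of the question ``why can bidders $i\ge3$ not profitably deviate from bidding $0$?'' with the single line ``since $v_1(\bids')>v_i(\bids')$ for every $\bids'$''. That justification is valid for mechanisms that allocate to a highest-valued bidder, but it is not a valid argument for an \emph{arbitrary} monotone allocation rule --- a deterministic ex-post IC-IR mechanism may perfectly well allocate to a bidder $j\ge 3$ at some bid profile regardless of the value comparison. So your instinct to add a dichotomy (either some $j\ge 3$ can win in a reachable configuration, or not) is the right one.

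That said, your Case~1 is genuinely not closed, and the sketch ``either because $v_2=\sqrt n$, or because the other signals make $v_1$ large'' does not work as written. Suppose the mechanism allocates to $j\ge 3$ at some $(s_1,s_2,0,\dots,b_j,\dots,0)$ with $s_2$ very small. The truthful EPE at that signal profile has welfare $b_j\le 1$, but $\opt=\max\{s_1+s_2+b_j+\epsilon,\ \sqrt n\,s_2,\ b_j\}=O(1)$, not $\Omega(\sqrt n)$. You cannot then ``make $v_1$ large'' by raising the true signals $s_3,\dots,s_n$: doing so changes the signal profile, and you would need an EPE that still maps that new signal profile to a bid profile where $j$ wins --- and the only candidate, $\sigma^0$, is precisely the strategy that fails to be an EPE in Case~1. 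One route that does work: having found such a configuration, replace $s_2$ by $1$ and look at the truthful EPE at $(s_1,1,0,\dots,b_j,\dots,0)$. If bidder~$2$ still loses there, welfare is at most $v_1\le 3+\epsilon$ while $\opt\ge v_2=\sqrt n$, and you are done; otherwise bidder~$2$'s threshold as a function of $b_2$ lies strictly inside $(s_2,1)$, and you can iterate or argue directly on that threshold. Either way, Case~1 needs more than a ``short sub-case analysis''.
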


\begin{proof}
	Consider a single-item setting with $n$ bidders, signal spaces $S_i=[0,1]$ for every $i$, and the following valuation profile:
	\begin{equation}
	v_1 = \sum_{i\in [n]} s_i +\epsilon,~~~
	v_2 = \sqrt{n}s_2,~~~\text{and}~~~
	\forall i \geq 3: v_i = s_i,\label{eq:bad-example2}
	\end{equation}
	where $\epsilon>0$ is arbitrarily small.
	Observe that the valuation profile satisfies single-crossing and submodularity over signals~\citep{EdenFFGK19}.
	We argue that the bidding strategy profile $\sigma$ where
	$\sigma_1(s_1) = s_1$, $\sigma_2(s_2) = s_2$ and $\sigma_i(s_i) = 0$ for every $i\geq 3$ is an EPE under no-overbidding: 
	No bidder $i\geq 3$ can ever win by deviating since $v_1(\bids')>v_i(\bids')$ for every bid profile $\bids'$.
	Bidders $i=1,2$ who are truthful cannot increase their bids (no overbidding).
	
	A winning bidder cannot gain by decreasing her bid. She either loses the item, in which case, her utility is 0, while her utility before the deviation is non-negative by no-overbidding and ex-post IR property; or still wins the item, in which case her payment is the same by the payment rule of ex-post IC-IR mechanisms, and so is her utility. 
	
	A losing bidder cannot win the item by decreasing her bid since the valuation profile satisfies single-crossing and due to the characterization of ex-post IC-IR mechanisms as monotone. Thus $\sigma$ is an EPE as claimed.
	
	Suppose the true signal profile is ${\bf s^1} = (1,1,0,\dots,0)$. The only nonzero values are then $v_1(\s^1)=2+\epsilon,v_2(\s^1)=\sqrt{n}$. If all bidders report truthfully, this is an ex-post equilibrium since the mechanism is ex-post IC. If the item is allocated to any bidder but 2, then the proposition statement follows. Assume therefore that given bid profile $\s^1$ the mechanism allocates to bidder~$2$. 
	
	Suppose now the true signal profile is $\s^2 = (1,\dots, 1)$. Since $\sigma$ is an EPE, if agents bid according to $\sigma$, they bid $\sigma(\s^2)=\s^1$, and so bidder 2 is the winner. The obtained welfare is $v_2(\s^2)=\sqrt{n}$, whereas the optimal welfare is $v_1(\s^2)=n+\epsilon$. This concludes the proof.
\end{proof}

\section{Appendix for Section \ref{sec:prelim}}

\subsection{Appendix for ``PoA Background'' Subsection}

\begin{proof}[Proof of Observation~\ref{obs:equil-hierarchy}]
	First assume $\sigma$ is an $\epe$. By definition, for every $i$ and $\bs'_{-i}$, we have that $\sigma_i(s_i)$ maximizes $u_i(\sigma(s_i,\s'_{-i});(s_i,\s'_{-i}))$. In particular this holds for $\s_{-i}$. Therefore, $\sigma$ is a $\pne$ with respect to $\s$. Now assume that $\sigma$ is a $\pne$ (not necessarily an $\epe$). 
	Consider a point-mass distribution $F$ with probability 1 for signal profile $\bs$. 
	Observe that $\sigma$ is a $\bne$ for this Bayesian setting.
\end{proof}

\label{appx:poa}

\section{Appendix for Section~\ref{sec:single} (Single Item)}
\label{appx:single}

\subsection{Improved PoA Bound for Generalized Vickrey Auction}

\begin{proposition}[Improved bound]
	\label{prop:GVA:gamma-bound-sc:gamma-BPOA}
	Consider a single-item setting with $\gamma$-heterogeneous, single crossing, continuous valuations.
	The $\bpoa$ of the generalized Vickrey auction under no-overbidding is at most~$\gamma$.
\end{proposition}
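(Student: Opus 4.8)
The plan is to refine the proof of Theorem~\ref{prop:upper-bound:gamma-bound+c-sc:GVA} in the single-crossing case $c=1$, where $\max\{\gamma,c\}=\gamma$. Inspecting that proof, the smoothness-type inequality \eqref{eq:single_item_smoothness} in fact holds for \emph{every} bidder, not only a welfare-maximizing one: for every signal profile $\s$, every NOB bid profile $\bids\le\s$, and every bidder $i$,
\[
u_i\big((s_i,\bids_{-i});\s\big)\ \ge\ \max\{0,\ v_i(\s)-\gamma\, v_{w(\bids)}(\s)\},
\]
the nonnegativity coming from the fact that a GVA winner's critical-bid payment never exceeds her own value under NOB (the restriction to the optimal bidder was used in the theorem only at the summation step). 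I would take this, together with Lemma~\ref{lem:j-bound-i-diff} and its corollary, as the main workhorse.

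First I would settle the pure and ex-post cases directly. Fix $\s$ and let $i^\star\in\argmax_j v_j(\s)$ and $w=w(\sigma(\s))$. If $w=i^\star$ then $\eq(\sigma,\s)=\opt(\s)$; otherwise bidder $i^\star$'s equilibrium utility at $\s$ equals $0$, so the equilibrium property forbids her a profitable deviation to the truthful bid $s_{i^\star}$, which by the displayed inequality forces $v_{i^\star}(\s)-\gamma v_w(\s)\le 0$, i.e. $\gamma\,\eq(\sigma,\s)\ge\opt(\s)$ pointwise. Hence $\epoa\le\npoa\le\gamma$. The subtlety, and the reason the Bayesian bound needs more work, is that this pointwise argument is unavailable for a BNE — a bidder's equilibrium utility need not vanish at a single realization even when she is the welfare maximizer there — and merely summing the equilibrium inequalities over all bidders reproduces only the weaker factor $1+\gamma$, because $\sum_i u_i(\sigma(\s);\s)$ undercounts the realized welfare by the winner's payment while the error terms $\gamma v_{w(\sigma(\s))}(\s)$ aggregate to $\gamma\,\eq(\sigma)$.

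To recover the tight factor $\gamma$ for BNE I would run a sampling (``doppelg\"anger'') deviation channelled through \emph{privatized} valuations, in the spirit of the multi-item analysis of Section~\ref{sec:multiple-pos}: writing $\hat v_i(t):=v_i(t,\mathbf 0_{-i})$ for agent $i$'s privatized value (a function of her own signal alone, with $\hat v_i(s_i)\le v_i(\s)$), each agent $i$ with signal $s_i$ draws a fresh phantom profile $\bt_{-i}\sim F_{-i\mid s_i}$, bids $s_i$ truthfully if she attains $\max_j\hat v_j(t_j)$ in the sampled privatized market, and stays out otherwise. Since the winning indicator of the sampled market depends only on $(s_i,\bt_{-i})$ while the realized utility depends only on the true profile, and the two are independent given $s_i$, the expected gain of the deviation factors through the probability that $i$ wins the sampled market; the displayed inequality lower-bounds it by $i$'s privatized contribution minus a $\gamma$-scaled error term, and $\gamma$-heterogeneity — via the estimate $\gamma v_j(\s)\ge v_i(\s)-\hat v_i(s_i)$ for $j\ne i$, a specialization of Lemma~\ref{lem:j-bound-i-diff} — is what lets the privatized contributions, once summed over $i$, reassemble into at least $\opt/\gamma$ \emph{without} compounding a second factor of $\gamma$. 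Controlling the aggregate error by $\eq(\sigma)$ via no-overbidding and rearranging would then give $\gamma\,\eq(\sigma)\ge\opt$; and since every PNE is a BNE (Observation~\ref{obs:equil-hierarchy}) this also recovers the pure case. The main obstacle is exactly the difficulty flagged in Section~\ref{sec:challenges}: under interdependence an agent cannot reconstruct her true value from her own signal, so she can only simulate the privatized one, and the entire delicacy is in (i) translating between privatized and true values while paying only a single factor of $\gamma$, and (ii) tying the error term to the equilibrium bids through NOB without incurring an extra additive loss — doing either step naively lands one back at $1+\gamma$ rather than $\gamma$.
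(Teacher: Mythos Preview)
Your pointwise argument for PNE/EPE is fine, and in fact is exactly how the paper proceeds. The gap is your assertion that ``this pointwise argument is unavailable for a BNE.'' It \emph{is} available, and this is precisely the paper's key observation: under single-crossing, bidding one's true signal in GVA is \emph{weakly dominant} among NOB bids. Indeed, by SC the map $b_i\mapsto v_i(b_i,\bids_{-i})-v_j(b_i,\bids_{-i})$ is nondecreasing for every $j\ne i$, so increasing $b_i$ can only preserve or create winning; the critical-bid payment depends only on $\bids_{-i}$, and under NOB it never exceeds $v_i(\s)$. Hence for every $\bids_{-i}$ and every $b_i\le s_i$, $u_i((s_i,\bids_{-i});\s)\ge u_i((b_i,\bids_{-i});\s)$. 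Now if $\sigma$ is a BNE and $i_{\s}$ is the optimal bidder at $\s$, the deviation ``bid $s_{i_\s}$ when my signal is $s_{i_\s}$, otherwise follow $\sigma_{i_\s}$'' cannot strictly improve expected utility; since it weakly improves utility for \emph{every} realization of $(\s',\bids')$, it must give the same utility almost surely. In particular, whenever $i_\s\ne w(\bids)$ (so her equilibrium utility is zero), $u_{i_\s}((s_{i_\s},\bids_{-i_\s});\s)=0$ as well. Your own case analysis then yields $\gamma\,v_{w(\bids)}(\s)\ge v_{i_\s}(\s)$ pointwise, and taking expectations gives $\bpoa\le\gamma$.

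Your proposed alternative --- a doppelg\"anger deviation through privatized values --- is not only unnecessary but also not completed: you flag exactly the obstacle (avoiding the extra additive loss that reverts the bound to $1+\gamma$) and do not resolve it. In a single-item auction there is no ``stay out'' option, and the usual smoothness bookkeeping gives $\eq(\sigma)\ge \text{contribution}-\text{error}$, which after rearranging produces $1+\gamma$ or $2\gamma$, not $\gamma$. The point you are missing is that SC buys you something much stronger than a smoothness inequality: weak dominance of the truthful bid, which lets the pointwise comparison survive the passage to Bayes--Nash.
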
	

In order to prove \cref{prop:GVA:gamma-bound-sc:gamma-BPOA}, we first establish the following claim.

\begin{claim} \label{clm:gamma-bound:GVA:single:BNE} 
	Let $\sigma$ be a BNE of a single-item generalized Vickrey auction with single crossing, $\gamma$-heterogeneous, continuous valuations which satisfies no-overbidding condition. For every $\bs \sim \F$ and $\bids \sim \sigma(\bs)$, let $i_{\s}$ and $i_{\bids}$ be some players with the highest values under $\bs$ and $\bids$ respectively.
	Then $\gamma v_{i_{\bids}}(\bs) \geq v_{i_{\s}}(\bs)$.
\end{claim}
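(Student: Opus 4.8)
The plan is to prove the claim directly from the allocation rule of the generalized Vickrey auction and the no-overbidding hypothesis, by transporting one value inequality from the bid profile $\bids$ up to the true signal profile $\bs$ along a two-stage monotone path, invoking Lemma~\ref{lem:j-bound-i-diff} and Corollary~\ref{col:j-bound-i}. Observe first that single-crossing is the $c=1$ case of $c$-SC, so throughout $\max\{\gamma,c\}=\gamma$. The starting point is that $\bids\sim\sigma(\bs)$ satisfies no-overbidding, hence $\bids\le\bs$ coordinate-wise, and that the auction allocates to the highest-valued bidder at $\bids$, hence $v_{i_\bids}(\bids)\ge v_{i_\s}(\bids)$. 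Writing $i:=i_\s$ and $w:=i_\bids$, the goal $\gamma v_w(\bs)\ge v_i(\bs)$ is immediate when $w=i$ (as $\gamma\ge 1$), so I would treat $w\ne i$.

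For the first stage I would move from $\bids$ to the intermediate profile $\bs^{(1)}:=(b_i,\bs_{-i})$ that agrees with $\bs$ in every coordinate except coordinate $i$, which it keeps equal to the bid $b_i=\bids_i$. By no-overbidding, $\bids\le\bs^{(1)}\le\bs$, and the increment $\bs^{(1)}-\bids$ is coordinate-wise nonnegative and vanishes in coordinate $i$; so Lemma~\ref{lem:j-bound-i-diff}, applied with its two agents being $i$ and $w$, gives $v_w(\bs^{(1)})-v_w(\bids)\ge\bigl(v_i(\bs^{(1)})-v_i(\bids)\bigr)/\gamma$. Combining this with $v_w(\bids)\ge v_i(\bids)\ge 0$ and rearranging (the only arithmetic, using $1-1/\gamma\ge 0$) should yield $v_w(\bs^{(1)})\ge v_i(\bs^{(1)})/\gamma$.

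The second stage moves from $\bs^{(1)}$ to $\bs$ by increasing only coordinate $i$ (from $b_i$ up to $s_i$, legitimate by no-overbidding), so the increment is coordinate-wise nonnegative and supported on coordinate $i$ alone. Here I would invoke Corollary~\ref{col:j-bound-i} with $d=\gamma$ (permissible since $d\ge\max\{\gamma,c\}=\gamma$) and its two agents again $i$ and $w$: the ratio bound $v_w(\bs^{(1)})\ge v_i(\bs^{(1)})/\gamma$ from the first stage propagates to $v_w(\bs)\ge v_i(\bs)/\gamma$, i.e.\ $\gamma v_{i_\bids}(\bs)\ge v_{i_\s}(\bs)$, as required.

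The main obstacle — and the reason for the two-stage structure — is that one cannot apply Corollary~\ref{col:j-bound-i} in one shot from $\bids$ to $\bs$: the increment $\bs-\bids$ is in general nonzero precisely in coordinate $i_\s$, but that is exactly the coordinate that must be held fixed for the corollary's hypothesis to carry over. Peeling coordinate $i_\s$ off into a separate final step is the key manoeuvre, and the point to check carefully is that the factor-$\gamma$ slack between $v_w$ and $v_i$ is genuinely preserved through the \emph{first} stage (where all other coordinates move), not just through the second. Everything else is routine bookkeeping about which coordinates are increased and verifying the non-negativity conditions of the two cited results. Note also that neither the equilibrium property nor the payment rule is used — the claim is really a statement about the allocation rule together with $\gamma$-heterogeneity, $c$-SC and no-overbidding — and the bound is essentially tight, as witnessed by Case~2 of Theorem~\ref{pro:gamma-bound-csc:POA-close-gamma-c}.
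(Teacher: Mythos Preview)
Your second stage misapplies Corollary~\ref{col:j-bound-i}. That corollary preserves the ratio $v_j\ge v_i/d$ only when the increment $\delta$ vanishes in coordinate~$i$, i.e.\ in the coordinate of the agent appearing in the denominator. In your stage~2 you move from $\bs^{(1)}=(b_{i_\s},\bs_{-i_\s})$ to $\bs$ by increasing precisely coordinate~$i_\s$, so the hypothesis $\delta_{i_\s}=0$ fails and the corollary does not apply. Swapping the roles of $i_\s$ and $w$ does not help either: with $\delta_w=0$ the corollary would only give $v_{i_\s}(\bs)\ge v_w(\bs)/\gamma$, which is the wrong direction. More fundamentally, when only coordinate~$i_\s$ increases, single-crossing tells you that $v_{i_\s}$ grows \emph{faster} than $v_w$, and $\gamma$-heterogeneity says nothing (it compares two agents $j,j'\ne i_\s$); so there is no mechanism by which the ratio $v_w\ge v_{i_\s}/\gamma$ can be propagated along this path.

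Your closing remark that ``neither the equilibrium property nor the payment rule is used'' is in fact false, and a two-bidder example shows it. Take $v_1=s_1+s_2$, $v_2=\epsilon+\epsilon s_1+s_2$ for small $\epsilon>0$; these are continuous, single-crossing, and (vacuously, with $n=2$) $1$-heterogeneous. For $\bs=(1,0)$ and the no-overbidding profile $\bids=(0,0)$ one has $i_\bids=2$ (since $v_2(\bids)=\epsilon>0=v_1(\bids)$) and $i_\s=1$, yet $\gamma v_{i_\bids}(\bs)=2\epsilon<1=v_{i_\s}(\bs)$. So the inequality fails for this $(\bs,\bids)$; what rules it out is precisely that $\bids$ is not an equilibrium bid. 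The paper's proof uses the BNE hypothesis to force $u_{i_\s}(s_{i_\s},\bids_{-i_\s})=0$, and then exploits the critical-bid payment rule and continuity to locate a profile where $v_{i_\s}$ is tied with some other $v_j$, allowing Corollary~\ref{col:j-bound-i} to be applied with the zero coordinate on the correct side. Your first stage is fine, but the missing ingredient is exactly the equilibrium-and-payment argument that handles the case where $i_\s$ would win at $(s_{i_\s},\bids_{-i_\s})$.
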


\begin{proof}
	For some $\s\sim \F$, let $\bids$ be some bid profile in the support of $\sigma(\s)$.
	If $i_\s = i_\bids$ the claim is trivial, thus it is sufficient to consider the case where $i_{\s}$ is not allocated in $\bids$.
	Assume $i_{\s}$ is not allocated in $\bids$, hence, $u_{i_\bs}(\bids) = 0$.
	
	First we show that $u_{i_{\s}}(s_{i_{\s}}, \bids_{-{i_{\s}}}) = 0$. Assume toward contradiction that this is not the case, we define a new strategy profile for $i_\bs$ $$\sigma'_{i_{\s}}(t_i)=\begin{cases}\sigma_{i_{\s}}(t_{i_{\s}})&\quad t_{i_{\s}}\neq s_{i_{\s}} \\ s_{i_{\s}} &\quad t_{i_{\s}}= s_{i_{\s}}\end{cases}.$$
	
	Note that for all signal profile $\s'$  agent's $i_s$ bid under the new strategy profile can only increase with respect to her bid under the old strategy profile, thus if agent $i_{\s}$ wins under $\sigma(\s')$, then by single-crossing, she remains the winner. 
	Moreover, by the critical payment rule, her payment remains the same, thus her utility remains the same.
	If agent $i_{\s}$ does not win under $\sigma(\s')$, then her utility is $0$. When altering her strategy to $\sigma'_{i_\s}$, then by no-overbidding, the allocation rule and the critical payment rule, her utility is non-negative, thus it can not decrease. Moreover, when $\s'=\s$, her utility strictly increases.
	Thus, $\Exp_{\s\sim\F}[u_{i_{\s}}(\sigma'_{i_{\s}}(\s), \sigma_{-{i_{\s}}}(\s))] > \Exp_{\s\sim\F}[u_{i_{\s}}(\sigma(\s))],$ in contradiction to $\sigma$ being a BNE.
	
	We proceed with our main proof.
	Consider two cases:
	\begin{enumerate}
		\item $i_{\s}$ does not win the item under $(s_{i_{\s}}, \bids_{-{i_{\s}}})$. There exists an agent $j\neq i_\s$ s.t. $v_j(s_{i_{\s}}, \bids_{-{i_{\s}}}) \geq  v_{i_{\s}}(s_{i_{\s}}, \bids_{-{i_{\s}}})$; $i_{\bids}$ has the highest value under $\bids$, so $v_{i_{\bids}}(\bids) \geq v_j(\bids)$ and from Corollary~\ref{col:j-bound-i} we get $\gamma v_{i_{\bids}}(s_{i_{\s}}, \bids_{-{i_{\s}}}) \geq v_j(s_{i_{\s}}, \bids_{-{i_{\s}}}) \geq v_{i_{\s}}(s_{i_{\s}}, \bids_{-{i_{\s}}})$, ($i_{\bids},j \neq i_{\s}$ and $b_{i_{\s}} \leq s_{i_{\s}}$) by applying Corollary~\ref{col:j-bound-i} again, we get $\gamma v_{i_{\bids}}(\s) \geq v_{i_{\s}}(\s)$.

		\item $i_{\s}$ wins the item under $(s_{i_{\s}}, \bids_{-{i_{\s}}})$. Since $i_{\s}$'s utility is~0, then $v_{i_{\s}}(\s)= p_{i_{\s}}(s_{i_{\s}}, \bids_{-{i_{\s}}}) = v_{i_{\s}}(b^{*}_{i_{\s}}, \bids_{-{i_{\s}}}),$ where $b^{*}_{i_{\s}}$ is her critical bid. Since the valuation functions are continuous, and $i_{\s}$ does not win under $\bids$, there exists an agent $j\neq i$ such that $v_j(b^{*}_{i_{\s}}, \bids_{-{i_{\s}}}) = v_{i_{\s}}(b^{*}_{i_{\s}}, \bids_{-{i_{\s}}})$. By the same arguments as above, $\gamma v_{i_{\bids}}(b^{*}_{i_{\s}}, \bids_{-{i_{\s}}}) \geq v_j(b^{*}_{i_{\s}}, \bids_{-{i_{\s}}}) = v_{i_{\s}}(\s).$ By the monotonicity of the valuation functions, we get $\gamma v_{i_{\bids}}(\s) \geq v_{i_{\s}}(\s)$.
	\end{enumerate}
	This concludes the proof.
\end{proof}

We are now ready to prove \cref{prop:GVA:gamma-bound-sc:gamma-BPOA}.

\begin{proof}[Proof of \cref{prop:GVA:gamma-bound-sc:gamma-BPOA}] 
	Let $\sigma$ be a BNE of a generalized Vickrey auction with the above properties. Denote $\F$ the distribution over the signals of the agents; for every signal profile ${\s} \sim \F$ and bid profile $\bids \sim \sigma({\s})$ denote $i_{\s}$ and $i_{\bids}$ the agents with the highest values under $\s$ and $\bids$ respectively.
	
	By \cref{clm:gamma-bound:GVA:single:BNE}, $\gamma v_{i_{\bids}}(\s) \geq v_{i_{\s}}(\s)$.
	It holds that:
	\begin{flalign}
	\opt = \Exp_{\s\sim\F}[OPT(\s)] & = \Exp_{\s\sim\F}[v_{i_{\s}}(\s)] = \Exp_{\s\sim\F}[\Exp_{\bids\sim\sigma(\s)}[v_{i_{\s}}(\s)]]\nonumber\\
	&\leq \Exp_{\s\sim\F}[\Exp_{\bids\sim\sigma(\s)}[\gamma v_{i_{\bids}}(\s)]] = \gamma \Exp_{\bs\sim\F}[\Exp_{\bids\sim\sigma(\s)}[v_{i_{\bids}}(\s)]] \label{eq:gamma-bound:single}\\
	&= \gamma \Exp_{\s\sim\F}[\SW(\sigma,\s)] = \gamma\eq,\nonumber
	\end{flalign}
\end{proof}

\subsection{PoA Bound for Second-Price Auction}
\label{appx:2PA}

We now proceed to prove the same general upper bound from Theorem~\ref{prop:upper-bound:gamma-bound+c-sc:GVA} for second-price for interdependence:

\begin{theorem}\label{thm:upper-bound:gamma-bound+c-sc:2PA}
	Consider a single-item setting with $\gamma$-heterogeneous, $c$-SC valuations.	
	The $\bpoa$ of the generalized second-price auction for interdependent values under no-overbidding is bounded by $1+\max\{\gamma,c\}$. 
\end{theorem}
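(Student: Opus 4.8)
The plan is to mirror the proof of \cref{prop:upper-bound:gamma-bound+c-sc:GVA} essentially verbatim. The only place that argument used a property of the mechanism was in establishing the per-agent ``smoothness''-type inequality \eqref{eq:single_item_smoothness}; everything downstream of it --- the chain \eqref{eq:11}--\eqref{eq:33} ending in $\eq(\sigma)\ge\opt-\max\{\gamma,c\}\,\eq(\sigma)$, hence $\bpoa\le 1+\max\{\gamma,c\}$ --- only invokes $\sum_i u_i\le\SW$, the BNE hypothesis, NOB, and the indicator decomposition, so it carries over unchanged. Thus it suffices to re-prove, for the generalized second-price auction, that for every signal profile $\bs$, every bid profile $\bids\le\bs$, every $i\in\argmax_j v_j(\bs)$, and the winner $w(\bids)$ at $\bids$,
\[
u_i((s_i,\bids_{-i});\bs)\ \ge\ v_i(\bs)-\max\{\gamma,c\}\,v_{w(\bids)}(\bs).
\]

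If $i$ does not win under $(s_i,\bids_{-i})$, the argument is identical to Case~1 in the proof of \cref{prop:upper-bound:gamma-bound+c-sc:GVA}: some $j\neq i$ satisfies $v_j(s_i,\bids_{-i})\ge v_i(s_i,\bids_{-i})$, and since $w(\bids)$ wins at $\bids$ we have $v_{w(\bids)}(\bids)\ge v_j(\bids)$; two applications of \cref{col:j-bound-i} with $d=\max\{\gamma,c\}$ --- first raising $b_i$ to $s_i$, then raising $\bids_{-i}$ to $\bs_{-i}$ (legitimate by NOB, and each increment is supported off coordinate $i$, as the corollary requires) --- give $\max\{\gamma,c\}\,v_{w(\bids)}(\bs)\ge v_i(\bs)$, so the displayed inequality holds with left-hand side $0$.

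The remaining case, where $i$ wins under $(s_i,\bids_{-i})$, is where the proof departs from GVA, and in fact becomes shorter: the second-price payment of a winning $i$ is directly the second-highest value $v_{j^*}(s_i,\bids_{-i})$ for some $j^*\in\argmax_{j\neq i}v_j(s_i,\bids_{-i})$, so there is no critical-bid/continuity step. If $w(\bids)=i$, then NOB gives $v_{j^*}(s_i,\bids_{-i})\le v_i(s_i,\bids_{-i})\le v_i(\bs)$, hence $u_i\ge 0\ge(1-\max\{\gamma,c\})v_i(\bs)=v_i(\bs)-\max\{\gamma,c\}\,v_{w(\bids)}(\bs)$. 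If $w(\bids)\neq i$, then $j^*,w(\bids)\neq i$ both appear in $\bids$, and from $v_{w(\bids)}(\bids)\ge v_{j^*}(\bids)$, \cref{col:j-bound-i} (raising $b_i$ to $s_i$) gives $\max\{\gamma,c\}\,v_{w(\bids)}(s_i,\bids_{-i})\ge v_{j^*}(s_i,\bids_{-i})$; combining with $v_{w(\bids)}(\bs)\ge v_{w(\bids)}(s_i,\bids_{-i})$ (monotonicity and $\bids_{-i}\le\bs_{-i}$) shows the payment is at most $\max\{\gamma,c\}\,v_{w(\bids)}(\bs)$, so $u_i((s_i,\bids_{-i});\bs)=v_i(\bs)-v_{j^*}(s_i,\bids_{-i})\ge v_i(\bs)-\max\{\gamma,c\}\,v_{w(\bids)}(\bs)$. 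Plugging this inequality into the (mechanism-agnostic) second half of the proof of \cref{prop:upper-bound:gamma-bound+c-sc:GVA} completes the argument. I do not anticipate a real obstacle; the only things to watch are using the precise form of the second-price payment when the deviation makes $i$ the top bidder, and keeping every increment fed into \cref{col:j-bound-i} supported off coordinate~$i$.
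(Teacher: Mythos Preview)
Your proposal is correct and follows essentially the same approach as the paper: both observe that only Case~2 (the payment computation when $i$ wins after deviating) needs to be redone for the second-price rule, and both bound the second-price payment $v_{j^*}(s_i,\bids_{-i})$ by $\max\{\gamma,c\}\,v_{w(\bids)}(\bs)$ via $v_{w(\bids)}(\bids)\ge v_{j^*}(\bids)$ and \cref{col:j-bound-i}. The paper handles Case~2 uniformly without splitting on whether $w(\bids)=i$, since the application of \cref{col:j-bound-i} (with $j^*\neq i$ playing the role of the frozen coordinate) goes through regardless; your extra subcase is harmless but unnecessary. One small wording slip: in Case~1 your parenthetical ``each increment is supported off coordinate~$i$'' is inaccurate for the first application (there you raise coordinate~$i$), but the corollary only requires the increment to vanish at the coordinate of the agent playing the corollary's ``$i$'' role, which is $j\neq i$, so the argument is still sound.
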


\begin{proof}
	Note, the allocation rule of the second-price auction and the generalized Vickrey auction is the same, and they differ only in the payment rule.
	Moreover, the only part in the proof of Theorem~\ref{prop:upper-bound:gamma-bound+c-sc:GVA} where we used the payment rule is in case 2, thus, it is suffice to prove case 2 with the second-price payment rule, and the proof will hold for this Theorem as well.
	
	\noindent {\bf Case 2:} Bidder $i$ wins the item under the bidding profile $(s_i,\bids_{-i})$. Let $j \in \argmax_{l\neq i} \{v_l(s_i,\bids_{-i})\}$. Bidder $i$'s utility is $u_i(\s;(s_i,\bids_{-i}))=v_i(\s)-v_j(s_i,\bids_{-i}).$
	
	Since $w(\bids)$ wins at $\bids$, it is known that $v_{w(\bids)}(\bids)\geq v_j(\bids)$, and by applying Corollary~\ref{col:j-bound-i}, we get $\max\{\gamma, c\} v_{w(\bids)}(s_i,\bids_{-i})\geq v_j(s_i,\bids_{-i}).$
	
	Therefore, $u_i(\s;(s_i,\bids_{-i})) \geq v_i(\s)-\max\{\gamma, c\} v_{w(\bids)}(s_i,\bids_{-i})\geq v_i(\s)- \max\{\gamma, c\} v_{w(\bids)}(\s),$ where the last inequality follows by the no-overbidding assumption and the monotonicity of the valuation functions.
	
	Thus, the proof of Theorem~\ref{prop:upper-bound:gamma-bound+c-sc:GVA} holds.
	This concludes our proof.
\end{proof}

We next show that Theorem~\ref{thm:upper-bound:gamma-bound+c-sc:2PA} is almost tight.

\begin{theorem}
	\label{pro:gamma-bound-csc:POA-close-gamma-c:2PA}
	There exist single-item settings satisfying $\gamma$-heterogeneity and $c$-SC  such that the $\npoa$ of the second-price for interdependence is arbitrarily close to $\max\{c,\gamma\}$, even under no-overbidding.\footnote{The theorem holds even under the additional assumption of Submodularity over Signals, for more information refer to \cite{EdenFFGK19}}
\end{theorem}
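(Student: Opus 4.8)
The plan is to reuse verbatim the two families of instances built in the proof of \cref{pro:gamma-bound-csc:POA-close-gamma-c} — the ``$\npoa$ close to $c$'' instance (with $v_1=\beta s_1+1$, $v_2=c\beta s_1+\epsilon s_2$, $v_3=c\beta s_1+\epsilon s_3$) and the ``$\npoa$ close to $\gamma$'' instance (with $v_1=\gamma\beta s_1$, $v_2=\beta s_1+s_2$, $v_3=\gamma\beta s_1+\epsilon s_3$) — together with the same candidate bid profile $(0,1,1)$ at the true signal profile $\mathbf s=\mathbf 1$. As already noted in \cref{thm:upper-bound:gamma-bound+c-sc:2PA}, the second-price auction for interdependence and the generalized Vickrey auction share the \emph{same allocation rule}; hence in each instance the winner under $(0,1,1)$ is the same as before (agent $1$, resp.\ agent $2$), and the equilibrium welfare $\beta+1$ and the optimal welfare ($c\beta+\epsilon$, resp.\ $\gamma\beta+\epsilon$) are exactly as computed there, giving ratios tending to $c$ (resp.\ $\gamma$) as $\beta\to\infty$ and $\epsilon\to 0$. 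So the only thing that needs re-verification is that $(0,1,1)$ remains a pure Nash equilibrium of the \emph{second-price} auction under no-overbidding.

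For the two losing agents nothing changes relative to the generalized Vickrey proof: each losing agent bids $s_i=1$, so no-overbidding forbids raising her bid, and — by the valuation structure spelled out in the proof of \cref{pro:gamma-bound-csc:POA-close-gamma-c} — lowering it cannot make her the winner (in each instance some other agent out-values her for every bid she could submit), so her utility stays $0$. For the winner the payment rule is now the second-highest value rather than the critical-bid value. Here I would observe: (i) at $(0,1,1)$ the winner's payment is $\Theta(\epsilon)$, so her utility is strictly positive; (ii) she bids $0$, the minimum of her signal space $[0,1]$, so she cannot lower her bid; and (iii) if she raises her bid, then by monotonicity of the interdependent valuations in the raised coordinate the second-highest value can only weakly increase, so as long as she remains the winner her utility weakly decreases, and if she ceases to be the winner her utility drops to $0\le$ her current utility. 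Hence the winner has no profitable deviation, and $(0,1,1)$ is an equilibrium of the second-price auction under no-overbidding.

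It then remains to combine the two cases. Given targets $c\ge1$ and $\gamma\ge1$: if $c\ge\gamma$ use the first instance, which is $c$-SC and has homogeneous influence (hence is $\gamma$-heterogeneous for any $\gamma\ge1$) and yields $\npoa\to c=\max\{c,\gamma\}$; if $\gamma\ge c$ use the second instance, which is single-crossing (hence $c$-SC) and $\gamma$-heterogeneous and yields $\npoa\to\gamma=\max\{c,\gamma\}$. Either way the second-price auction for interdependence has $\npoa$ arbitrarily close to $\max\{c,\gamma\}$ under no-overbidding. I would also note in passing that both valuation profiles are linear in the signals, hence modular — and in particular submodular — over signals, which handles the footnote's claim about Submodularity over Signals.

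I do not expect a real obstacle, since this is essentially an adaptation of an argument already in the paper; the one point that genuinely requires care is step (iii), the winner's deviation analysis. Under the critical-bid rule used for the generalized Vickrey auction the winner's payment is independent of her own bid, which made that direction immediate; under the second-price rule her own bid feeds back into the other agents' interdependent values and hence into the price. The argument goes through precisely because this feedback is monotone — raising one's own bid only raises the price — so the mild observation in (iii) is the crux of the verification.
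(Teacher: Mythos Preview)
Your approach is essentially the paper's own: reuse the two instances from \cref{pro:gamma-bound-csc:POA-close-gamma-c} and re-verify the equilibrium under the second-price payment rule. The paper's proof is in fact even terser than yours --- it simply notes that the only change from the GVA argument is in Case~1, where agent~1's payment becomes $\epsilon$ instead of $1$ (in Case~2 the second-price payment already equals the critical-bid payment $\epsilon$).

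One small inaccuracy to fix: your point~(ii) (``she bids $0$, the minimum of her signal space, so she cannot lower her bid'') applies only to Case~1, where the winner is agent~1. In Case~2 the winner is agent~2, who bids $1=s_2$; there you need the complementary observation. Agent~2 cannot raise her bid by no-overbidding, and if she lowers it, note that $s_2$ does not appear in $v_1$ or $v_3$, so the second-highest value --- and hence her payment --- stays at $\epsilon$ as long as she remains the winner, and drops her utility to $0$ once she loses. With that patch your verification is complete and matches the paper.
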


\begin{proof}
	The same arguments as in the proof of \cref{pro:gamma-bound-csc:POA-close-gamma-c} holds in this setting, the only change is in case 1, where the payment of agent~$1$ is $\epsilon$ instead of $1$.
\end{proof}

\section{Appendix for Section~\ref{sec:multiple-pos} (Multiple Items: A Positive Result)}
\label{appx:multiple-pos}
\subsection{Necessity of Participation Assumption}
\begin{proposition}
	There exists a multi-item, $n$ unit-demand bidders setting, satisfying single crossing and $\gamma$-homogeneity, such that the PoA of every auction that allocates each item separately to the highest valued bidder is $\Omega(m)$.
\end{proposition}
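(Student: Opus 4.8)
The plan is to prove this lower bound using a setting of \emph{private} values, which is the special case of IDV in which each $v_{i\ell}$ depends only on $s_{i\ell}$; such a profile trivially satisfies single-crossing and $1$-homogeneity, since every cross-signal difference $v_j(s_i+\delta,\snoi)-v_j(\s)$ equals $0$. Concretely, take $n=m$ unit-demand bidders and $m$ items, signal spaces $S_{i\ell}=[0,1]$, valuations $v_{i\ell}(\s_\ell)=s_{i\ell}$, and realized signal profile $\s=\mathbf{1}$, so that every bidder has value exactly $1$ for every item. Then $\opt(\s)=m$, attained by any perfect matching of bidders to items. (One could instead use the resale valuations of \cref{example:running} per item; private values already suffice to make the point, which is precisely that without participation the positive result fails.)

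First I would exhibit the bad equilibrium. Since participation is mandatory, every bidder submits a bid $b_{i\ell}$ for every item and is considered in every item's auction. Consider the deterministic profile $\sigma$ in which bidder $1$ reports $b_{1\ell}=1$ for all $\ell$, while every bidder $i\ge 2$ reports $b_{i\ell}=0$ for all $\ell$; this satisfies no-overbidding. Under $\sigma$, on each item $\ell$ bidder $1$ has the strictly highest value $v_{1\ell}(1)=1$ against $0$ for everyone else, so the mechanism awards every item to bidder $1$; the second-highest value on each item is $0$, so under the second-price / critical-bid payment rule bidder $1$ pays $0$ in total and obtains utility $v_1(\text{all items};\s)=\max_\ell v_{1\ell}(\s)=1$. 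Hence $\eq(\sigma,\s)=1$.

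Next I would verify that $\sigma$ is an ex-post equilibrium (hence, by \cref{obs:equil-hierarchy}, also a pure-Nash and a Bayes-Nash equilibrium). Bidder $1$ already obtains her maximal possible unit-demand value $1$ at zero payment, and in any outcome her utility is at most her value $\le 1$ minus a nonnegative payment, so no deviation helps her. For a bidder $i\ge 2$: to win some item $\ell$ she must report $b_{i\ell}\ge 1$, i.e.\ $b_{i\ell}=1$ under no-overbidding; then the second-highest value on $\ell$ (equivalently, her critical-bid value) is bidder $1$'s value $v_{1\ell}(1)=1$, so her utility from winning $\ell$ is $v_{i\ell}(\s)-1=0$, and winning several items still yields unit-demand value $1$ against total payment $\ge 1$. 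Thus no deviation strictly improves her utility, and $\sigma$ is an equilibrium regardless of the tie-breaking rule. Since $\opt(\s)/\eq(\sigma,\s)=m$, the $\epoa$ (and therefore $\npoa$ and $\bpoa$) of every such auction is $\Omega(m)$.

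The main obstacle is making the equilibrium argument robust across \emph{all} auctions in the stated class: the verification above relies on the winner paying the second-highest value on an item (equivalently, her critical bid) and on losers paying nothing, which covers the generalized second-price and generalized Vickrey auctions but not, e.g., auctions charging a strict fraction of the second price. I would therefore either state the class precisely as highest-value-wins auctions with the second-price / critical-bid payment rule (which is exactly the payment rule of the mechanism in \cref{alg:multiple-pos} whose reliance on participation we are justifying), or, to sidestep even the tie-breaking subtlety, perturb the example to $v_{i\ell}=(1-\epsilon)s_{i\ell}$ for $i\ge 2$, so that bidder $1$ strictly dominates every item under every bid profile; then $\opt(\s)\to m$ as $\epsilon\to 0$ while $\eq(\sigma,\s)=1$, yielding the same $\Omega(m)$ bound.
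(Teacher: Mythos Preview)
Your core argument is sound for the second-price/critical-bid rule, and you rightly flag that it does not extend to arbitrary payment rules. The trouble is that neither of your proposed fixes closes that gap. Restricting the class to second-price payments simply does not prove the proposition as stated. And the perturbation to $v_{i\ell}=(1-\epsilon)s_{i\ell}$ for $i\ge 2$ does \emph{not} make bidder~$1$ ``strictly dominate every item under every bid profile'': with private values, $v_{1\ell}(\bb_\ell)=b_{1\ell}$ and $v_{i\ell}(\bb_\ell)=(1-\epsilon)b_{i\ell}$, so if bidder~$1$ bids $0$ on an item she loses it. Concretely, under the interdependent \emph{first}-price rule (Definition~\ref{def:first-price-auction}), your profile is not an equilibrium: bidder~$1$ winning all $m$ items pays $m$ but has unit-demand value $1$, so she strictly prefers to drop all but one item; the resulting equilibrium may well be efficient. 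So private values cannot yield the uniform lower bound.

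The paper sidesteps this entirely by using genuinely interdependent valuations with a \emph{constant} gap: $v_{1\ell}(\bb)=\sum_j b_j+1$ and $v_{i\ell}(\bb)=\sum_j b_j+\tfrac12$ for $i\ge 2$, so $v_{1\ell}(\bb)-v_{i\ell}(\bb)=\tfrac12>0$ for every bid profile~$\bb$. The allocation rule is then \emph{forced}: every auction in the class gives all items to agent~$1$ regardless of bids, hence every outcome (and a fortiori every equilibrium, under any payment rule) has welfare equal to agent~$1$'s single-item value while $\opt$ is $\Omega(m)$ times larger. This bypasses any equilibrium or payment-rule analysis. The point is that such a bid-independent gap is impossible with private values, which is exactly why your construction cannot reach the full statement.
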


\begin{proof}
	Consider a setting with $n$ unit-demand agents and signal spaces $S_i = [0,1]$ for every agent $i$. For every item $\ell\in[m]$, let $v_{1\ell} = \sum_{j\in[n]}s_j + 1$ and let $v_{i\ell} = \sum_{j\in[n]}s_j + \frac{1}{2}$ for every $i \geq 2$.
	Any auction that allocates each item separately to the highest valued bidder, allocates all items to agent $1$, regardless of all bids. Furthermore, bidders~$\neq 1$ are indifferent of their bid, so they might as well bid their true signal. In this case, the PoA is $\Omega(m)$. 
\end{proof}

\weightedsumrevisited*

\begin{proof}[Proof]
Given a signal profile $\s$, denote the highest signal for item $\ell$ by $s_{(1)\ell}$, the second-highest signal by $s_{(2)\ell}$ and so on.
For every signal profile $\s$, consider a welfare-maximizing matching $\mu(\s)$; we can split its total value $\sum_{(i,\ell)\in \mu}v_{i\ell}(\s)$ into two components: $\sum_{(i,\ell)\in \mu}\tilde{v}_{i\ell}(\s)$ and $\sum_{(i,\ell)\in \mu}\beta\max_{j\ne i}\{s_{j\ell}\}$. The first component is upper-bounded by $\sum_{(i,\ell)\in \tildem(\s)}\tilde{v}_{i\ell}(\s)$ (where recall from \cref{def:truncated} that $\tildem(\s)$ is the maximum matching with respect to truncated values), so in expectation over $\s$ it is at most $\tildeopt$. 
It remains to bound the second component, $\sum_{(i,\ell)\in \mu}\beta\max_{j\ne i}\{s_{j\ell}\}$.

To this end, we will look at each item separately. For each item $\ell$, it holds that $\max_{j\ne i}\{s_{j\ell}\} \leq \max_{j}\{s_{j\ell}\}$.
Consider now a single item setting where each agent's value is her signal. In this setting, the above is exactly the maximal social welfare. Moreover, it is well known that when the values are drawn i.i.d from a MHR distribution, the optimal welfare is within a factor $e$ from the optimal revenue \cite[][Lemma~$3.10$]{DhangwatnotaiRY15}. Moreover, \citet{JinLQ19} show that in this setting, the second-price auction approximates the optimal revenue by a factor of $\frac{1}{1-(1-\frac{1}{e})^{n-1}}$. Note that the revenue from the second price auction is exactly $s_{(2)\ell}$.

Combining the above, we get $\beta\mathbb{E}_{\s}[\sum\limits_{(i,\ell)\in \mu}\max_{j\ne i}\{s_{j\ell}\}] \leq e(\frac{1}{1-(1-\frac{1}{e})^{n-1}})\beta\mathbb{E}_{\s}[\sum\limits_{\ell}s_{(2)\ell}]$.
As the number of agents is at least $2$ we get  $\mathbb{E}_{\s}[\sum\limits_{(i,\ell)\in \mu}\beta\max_{j\ne i}\{s_{j\ell}\}] \leq  e^2\beta \mathbb{E}_{\s}[\sum\limits_{\ell} s_{(2)\ell}]$.

Observe that for every signal profile $\s$, for every agent $i$ and item $\ell$, the truncated value $\tilde{v}_{i,\ell}(\s)$ is at least $\beta \sum_{j\ge 2} s_{(j)\ell}\ge \beta\cdot s_{(2)\ell}$. 
So the total value of $\tildem(\s)$ is at least that of the maximum matching with respect to edge weights $\beta\cdot s_{(2)\ell}$ for all edges adjacent to item $\ell$. For such edge weights, every perfect matching is maximum and has total weight $\beta\sum_{\ell}s_{(2)\ell}$. Taking expectation over $\s$ we get 
\begin{equation}
\tildeopt\ge \beta\mathbb{E}_{\s}[\sum_{\ell}s_{(2)\ell}].\label{eq:weighted-sum-re2}
\end{equation} 

Altogether we get
\begin{equation}
\opt = \sum_{(i,\ell)\in \mu}\tilde{v}_{i\ell}(\s) + \sum_{(i,\ell)\in \mu}\beta\max_{j\ne i}\{s_{j\ell}\} \le \tildeopt + e^2\beta\mathbb{E}_{\s}[\sum_{\ell}s_{(2)\ell}] \leq (1+e^2) \tildeopt .\label{eq:weighted-sum-re}
\end{equation}

As the number of agent increases, $\frac{1}{1-(1-\frac{1}{e})^{n-1}}$ is arbitrarily close to $1$, and the approximation is arbitrarily close to $(1+e)$.

\end{proof}



\multioptfartildeopt*

\begin{proof}[Proof of \cref{prop:multi-neg-for-2nd-price}]
	Consider the setting depicted in Example~\ref{ex:tildeopt-is-far}, where $F_1 = U[0,1]$, and the following strategy profile $(\bids,\avec)$: $b_{1\ell}(s_1)=\frac{1}{m}s_1$ for every $\ell$, $\avec_1(s_1)=(1,\ldots,1)$, and $b_{i\ell}(s_i)=0$ for every $\ell$, $\avec_i(s_i) = (0,\ldots,0)$ for every $i \geq 2$. That is, agent 1 is the only participant for every item. 
	
	We argue that $(\bids,\avec)$ is an ex-post equilibrium. 	
	First, note that the distribution of the signal of bidder 1 is continuous, thus there is no point mass at $0$.
	This implies $s_1 > 0$ with probability 1, and by bidder 1 strategy, $b_1 > 0$.
	Furthermore, the signal spaces of agents $i \geq 2$ are singletons, meaning they cannot bid differently than $0$.
	Keeping this in mind, for every item $\ell$, the value of bidder 1, as it seen by the mechanism, dominates the values of all other bidders (for every $\bids'$ where $b'_1>0$; $h_{1\ell}(\bids') \geq h_{1\ell}(b_{1}', {\bf 0}_{-1}) = v_{1\ell}(b_{1}', {\bf 0}_{-1}) =  v_{1\ell}(\bids') > v_{i\ell}(\bids') \geq h_{i\ell}(\bids')$).
	And by the allocation rule, if bidder 1 participates in the auction for an item, no other agent can win  by participating. 
	This follows that for every agent $i \geq 2$ there is no improving deviation.
	
	As for agent $1$, she is the only participant for every item, and by the payment rule, she pays $0$ for each item.
	Moreover, she wins all the items, thus, her utility is maximized.
	
	Finally, $\SW(\bids,\avec) = s_1$ while $\opt \geq m(1-\epsilon)s_1$.
	This concludes the proof.
\end{proof}

\section{Randomized Mechanisms}
\label{appx:future directions}


This paper focuses on simple deterministic mechanisms.
One may hope to get better bounds with respect to the broader (yet more complex) family of random mechanisms. 
This section shed some light on this topic. 


Proposition~\ref{prop:POA:RSNOSC} addresses the random-sampling Vickrey auction of \cite{EdenFFGK19} -- an ex-post IC-IR mechanism that randomly partitions the bidders into two sets $A,B$. The mechanism then allocates the item to the bidder in set~$B$ with highest ``zeroed-out'' value, where the zeroed-out value of bidder $i\in B$ is $v_i(\bs_A,s_i,{\bf 0}_{B\setminus\{i\}})$. Besides its simplicity, the big advantage of the random-sampling Vickrey auction over other auction formats in the context of interdependent values is that it achieves good welfare guarantees even with no SC assumption, as long as the SoS properties hold. To our knowledge this is the only known such mechanism to date. Unfortunately these guarantees no longer hold when we consider all equilibria and not just the truth-telling one. In fact it no longer holds even when we consider only no-overbidding equilibria, as the following proposition shows.

\begin{proposition}
	\label{prop:POA:RSNOSC}
	There exists a single-item $n$-agent setting with SoS, interdependent valuations (that do not satisfy SC), where the EP-PoA of the random-sampling Vickrey auction is $\Omega(2^n)$, even under no-overbidding.
\end{proposition}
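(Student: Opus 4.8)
The plan is to build an $n$-agent, single-item instance whose valuation profile is affine in the signals --- hence submodular (Definition~\ref{cond:weak_submod}), and in fact SoS in the sense of \cite{EdenFFGK19} since every marginal contribution is a constant, yet \emph{not} single-crossing --- together with a no-overbidding ex-post equilibrium of the random-sampling Vickrey auction whose expected welfare is an $\Omega(2^n)$ factor below the optimum. Concretely, take $S_i=[0,1]$ for all $i$, fix a tiny $\epsilon$ (say $\epsilon=1/n^2$) and constants $c_i=1-(i-1)/n$, so $c_1>c_2>\dots>c_n>0$ with all consecutive gaps equal to $1/n>\epsilon$, and set $v_i(\bs)=c_i+\epsilon s_i$ for every $i<n$ and $v_n(\bs)=c_n+\epsilon s_n+2^n s_1$. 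Every $v_i$ is affine and strictly increasing in $s_i$; the profile is not SC because raising $s_1$ by $\delta$ increases $v_1$ by $\epsilon\delta$ but increases $v_n$ by $2^n\delta\gg\epsilon\delta$; and $\opt(\mathbf 1)=v_n(\mathbf 1)\ge 2^n$.

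The second step is to show that the all-zero report profile $\sigma$, with $\sigma_i(\cdot)\equiv 0$ for every $i$, is an ex-post equilibrium satisfying no-overbidding ($0\le s_i$). Fix any signal profile and any partition $(A,B)$ drawn by the mechanism. With all \emph{reported} signals equal to $0$, each agent $i\in B$ has zeroed-out value $v_i(\mathbf 0_A,0,\mathbf 0_{B\setminus i})=c_i$, so the item is allocated to $w:=\min(B)$ (the agent of smallest index in $B$, hence largest $c_i$), who is charged her critical value $c_w$ and gets utility $v_w(\bs)-c_w\ge 0$ by monotonicity. The crux is that no unilateral deviation to any $b_i\in[0,s_i]$ is profitable: an agent in $A$ never wins; agent $w=\min(B)$ already wins and her payment is determined by the others' (zero) reports, so raising her own report changes nothing; and any other $i\in B$ raising her report to $b_i$ obtains zeroed-out value at most $c_i+\epsilon s_i<c_i+\epsilon<c_{i-1}\le c_{\min(B)}$ (using $\epsilon<1/n$), so she still loses. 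In particular agent $n$ cannot leverage the $2^n s_1$ term: when agent $1$ is in $A$ she reports $0$, and when agent $1$ is in $B$ her signal is zeroed out, so agent $n$'s zeroed-out value never exceeds $c_n+\epsilon$. Thus $\sigma$ is an EPE --- and this step crucially relies on the mechanism evaluating zeroed-out values on the \emph{reports} of $A$ rather than on true signals.

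Finally I would evaluate the welfare at the true profile $\bs=\mathbf 1$. Under $\sigma$ the winner is $\min(B)$, and since each agent lands in $B$ independently with probability $1/2$, $\Pr[\min(B)=k]=2^{-k}$ for $k\in[n]$ (the event $B=\emptyset$, of probability $2^{-n}$, contributes welfare $0$). Hence
\[
\eq(\sigma,\mathbf 1)=\sum_{k=1}^{n}2^{-k}\bigl(c_k+\epsilon\bigr)+2^{-n}\cdot 2^{n}\ \le\ (1+\epsilon)+1\ <\ 3,
\]
while $\opt(\mathbf 1)\ge 2^n$, so $\epoa\ge \opt(\mathbf 1)/\eq(\sigma,\mathbf 1)=\Omega(2^n)$.

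The main obstacle --- the step needing the most care --- is verifying that the all-zero profile is a genuine ex-post equilibrium and not merely a suggestive bad outcome: one must rule out \emph{every} deviation of \emph{every} agent for \emph{every} realization of the other signals, and in particular show that the interdependent agent $n$, whose true value is exponentially large, still cannot profitably grab the item. The reserve-like constants $c_i$ are exactly what makes this go through (they block cheap winning bids), and a secondary point is to keep the profile simultaneously SoS and non-SC (immediate for affine valuations) and to make the role of ``reports vs.\ true signals'' in the mechanism's zeroing-out explicit, since using true signals of $A$ would let the mechanism recover agent $n$'s high value and collapse the gap to a constant.
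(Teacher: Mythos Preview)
Your proof is correct and follows essentially the same approach as the paper's: both constructions single out one agent whose large value is driven by another agent's signal which, in the chosen equilibrium, is either reported as zero (when its owner lands in $A$) or zeroed out by the mechanism (when in $B$), so the valuable agent wins only when $B=\{\text{her}\}$, an event of probability $2^{-n}$. The paper's instance is the dual of yours---the valuable agent is index $1$ with $v_1=s_1+c\sum_{i\ge 2}s_i$, the other agents carry \emph{increasing} constants $2i$ so the winner is $\max(B)$, and only agents $i\ge 2$ bid $0$---but the mechanics are identical; one small precision worth noting is that the EPE definition requires ruling out every deviation $b_i\in S_i=[0,1]$, not merely $b_i\le s_i$, though your inequality $c_i+\epsilon b_i\le c_i+\epsilon<c_{i-1}$ already covers this since $b_i\le 1$.
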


\begin{proof}
	Consider a single-item setting with $n$ bidders, signal spaces $S_i=[0,1]$ for every $i$, and the following valuation profile: 
	\begin{equation}
	v_1 = s_1 + c \sum_{i=2}^n s_i,~~~\text{and}~~~
	\forall i\ge 2 : v_i = s_i + 2i,\label{eq:bad-example3}
	\end{equation}
	where $c$ is a sufficiently large scalar to be determined below. Observe that the valuation profile satisfies SoS. 
	We argue that the bidding strategy profile $\bids$ where $b_1(s_1) = s_1$ and for every $i\ge 2$, $b_i(s_i) = 0$  is an ex-post equilibrium.
	For every bid profile $\bb'$ and every random partition $A,B$ of the bidders chosen by the random-sampling Vickrey auction, bidder $i\ge 2$ never wins if $B$ contains a higher-indexed bidder, by the following chain of inequalities: 
	$$
	\forall i,j\in B,2\le i<j : v_i(\bb'_A,b'_i,{\bf 0}_{B-i})=b'_i+2i \le 1 + 2i < 2(i+1) \le  2j \le b'_j+2j = v_j(\bb'_A,b'_j,{\bf 0}_{B-j}).
	$$ 
	At bidding strategy profile $\bids_{-1}$, for every report $b'_1$ and every partition $A,B$, bidder $1$ wins if and only if she is the only bidder in $B$. Indeed, assuming $\text{bidder }1\in B$:
	$$
	\forall i\in B, i\ge 2: v_1({\bf 0}_A,b'_1,{\bf 0}_{B-1}) = s_1 < 2i \leq v_i({\bf 0}_A,{\bf 0}_{B}).
	$$
	Thus, whether bidder $1$ wins or not does not depend on her bid.
	We conclude that under $\bids$, the item is allocated to the highest-indexed bidder in set $B$, and no losing bidder can become a winner by deviating. Furthermore, in an ex-post IC-IR mechanism, the winner's payment according to the known characterization does not depend on her bid (see \cref{sec:challenges}). It follows that $\bids$ is an ex-post equilibrium. Also, $\bb$ clearly satisfies no-overbidding with respect to any true signal profile.
	
	Suppose now the true signal profile is $\s = (1,\dots,1)$. 
	The probability under $\bids$ that bidder~$1$ wins the item is the probability of choosing exactly bidder $1$ among all $n$ bidders to be in set $B$, i.e., $\frac{1}{2^n}$. Observe also that among all bidders $i\ge 2$, bidder $n$ has the highest value.
	The expected welfare under $\bids$ is thus at most $\frac{1}{2^n}v_1(\bs) + (1-\frac{1}{2^n})v_n(\bs) = \frac{1}{2^n}(1+c(n-1)) + (1-\frac{1}{2^n})(2n+1)$, whereas the optimal welfare obtained from always allocating the item to bidder $1$ is $1+c(n-1)$. For a sufficiently large $c$, say $c \geq 2\cdot 2^{n}$, we get $\epoa =\Omega(2^n)$, completing the proof. 
\end{proof}

In \cref{prop:POA:RSNOSC} we established a lower bound for a simple randomized auction that satisfies ex-post IC-IR. The next result (\cref{pro:BB-PNE-POA}) shows we \emph{cannot} expect a general lower bound to hold for all randomized ex-post IC-IR auctions (in contrast to deterministic such mechanisms -- recall \cref{pro:POA-EPIC}). 
Yet, our results might suggest that in order to obtain good PoA guarantees one must resort to fairly unnatural mechanisms like the one constructed in the next proposition. 


A randomized mechanism is called \emph{universally} ex-post IC-IR if it is a distribution over deterministic ex-post IC-IR mechanisms.
Informally, \cref{pro:BB-PNE-POA} shows that under the no-overbidding assumption, every $\alpha$-approximation universally ex-post IC-IR mechanism can be transformed into a randomized mechanism with $\npoa$ at most $ \frac{\alpha}{1-\epsilon}$, for arbitrarily small $\epsilon$. 
Since there exist universally ex-post IC-IR mechanisms with constant-approximation welfare guarantees for SoS valuations (such as in \cite{EdenFFGK19}),
\cref{pro:BB-PNE-POA} rules out a lower bound on the randomized mechanism's $\epoa$. The proposition makes use of a \emph{proportional allocation mechanism} defined as follows: given bid profile $\bb$, the item is allocated to bidder $i$ with probability $\frac{b_i}{\sum_jb_j + 1}$.
  
\begin{proposition}
	\label{pro:BB-PNE-POA}
	Let $\mathcal M$ be an $\alpha$-approximation welfare maximizing, universally ex-post IC-IR mechanism. For every $\epsilon>0$, the randomized mechanism $\mathcal M'$ which runs $\mathcal M$ with probability $1-\epsilon$ and runs the proportional allocation mechanism with probability~$\epsilon$ has $\npoa \le \frac{\alpha}{1-\epsilon}$ under no-overbidding.
\end{proposition}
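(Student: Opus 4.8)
The plan is to exploit the proportional-allocation component of $\mathcal M'$: under no-overbidding it makes bidding one's true signal the \emph{unique} best reply, so in every no-overbidding pure Nash equilibrium the reports coincide with the true signals, and hence the allocation is exactly the one $\mathcal M$ produces on truthful reports. That allocation has welfare at least $\opt/\alpha$ by hypothesis, and we lose only the factor $1-\epsilon$ because $\mathcal M$ is run only with that probability while the proportional component contributes nonnegative welfare.

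Concretely, I would fix a signal profile $\s$ and a pure Nash equilibrium $\bids$ of $\mathcal M'$ satisfying no-overbidding (so $\bids\le\s$), and fix a bidder $i$. Writing $\mathcal M=\sum_k q_k\mathcal M^k$ as a distribution over \emph{deterministic} ex-post IC-IR mechanisms, and using that the proportional-allocation mechanism charges no payments, the utility of $i$ as a function of her own report $b_i\in[0,s_i]$, with $\bnoi$ held fixed, is
$$u_i\big((b_i,\bnoi);\s\big)\;=\;(1-\epsilon)\sum_k q_k\,u_i^{\mathcal M^k}\big((b_i,\bnoi);\s\big)\;+\;\epsilon\cdot\frac{b_i}{b_i+\sum_{j\ne i}b_j+1}\,v_i(\s).$$
The last summand is strictly increasing in $b_i$ whenever $v_i(\s)>0$. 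For each deterministic component I would use the characterization of deterministic ex-post IC-IR mechanisms as monotone allocation rules in which a winner is charged her critical-bid value: letting $b_i^{*,k}$ be $i$'s critical bid in $\mathcal M^k$ given $\bnoi$, her utility there is $0$ on $[0,b_i^{*,k})$ and $v_i(\s)-v_i(b_i^{*,k},\bnoi)$ on $[b_i^{*,k},s_i]$; if $b_i^{*,k}>s_i$ it is identically $0$ on $[0,s_i]$. When $b_i^{*,k}\le s_i$, no-overbidding gives $\bnoi\le\snoi$ and monotonicity of $v_i$ yields $v_i(b_i^{*,k},\bnoi)\le v_i(s_i,\snoi)=v_i(\s)$, so the jump is to a nonnegative value. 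In all cases $u_i^{\mathcal M^k}((\cdot,\bnoi);\s)$ is weakly increasing on $[0,s_i]$, hence so is the convex combination, hence so is $u_i((\cdot,\bnoi);\s)$ — and it is \emph{strictly} increasing when $v_i(\s)>0$. Being at a no-overbidding equilibrium then forces $b_i=s_i$ for every bidder $i$ with $v_i(\s)>0$.

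It remains to convert this into a welfare bound. If $\opt(\s)=0$ there is nothing to prove, so assume $\opt(\s)>0$; then the welfare-maximizing bidder has positive value and bids her true signal. A bidder $i$ with $v_i(\s)=0$ may report some $b_i\le s_i$, but then monotonicity gives $v_i(b_i,\bnoi)\le v_i(\s)=0$, so her reported value is $0$; using monotonicity of the (ex-post IC-IR) allocation rule one checks that lowering such a report from $s_i$ can only divert the item to a weakly higher-valued bidder or leave the allocation unchanged, hence never decreases the true welfare produced by $\mathcal M$. Therefore $\mathcal M$'s welfare at $\bids$ is at least its welfare at the truthful profile $\s$, which is at least $\opt(\s)/\alpha$ by the $\alpha$-approximation guarantee. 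Since the proportional component contributes nonnegative welfare, $\eq(\bids,\s)\ge(1-\epsilon)\cdot\big(\text{$\mathcal M$'s welfare at }\bids\big)\ge\tfrac{1-\epsilon}{\alpha}\opt(\s)$, i.e.\ $\opt(\s)/\eq(\bids,\s)\le\alpha/(1-\epsilon)$; taking the supremum over $\s$ and over no-overbidding pure Nash equilibria yields $\npoa\le\alpha/(1-\epsilon)$.

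I expect the crux to be the monotonicity claim for $u_i^{\mathcal M^k}$ against an \emph{arbitrary}, non-truthful no-overbidding profile $\bnoi$ of the other bidders: ex-post incentive compatibility only directly controls a bidder's incentives when everyone else is truthful, so one genuinely has to use the explicit payment rule (the critical-bid value) together with the inequality $\bnoi\le\snoi$ to guarantee that a winner is never charged more than her true value — this is exactly what makes ``raising one's bid never hurts inside $\mathcal M$'' correct here. A secondary, purely technical point is the treatment of zero-value bidders, handled by the allocation-monotonicity remark above (or, if preferred, by a perturbation argument), which does not affect the final bound.
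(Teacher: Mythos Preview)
Your proof is correct and follows essentially the same route as the paper: the paper packages your monotonicity argument for $u_i^{\mathcal M^k}((\cdot,\bnoi);\s)$ as a separate claim that $\mathcal M$ is DSIC under no-overbidding, and then combines it with the strict monotonicity of the proportional-allocation term exactly as you do. Your worry about zero-value bidders is unnecessary in this model: since $v_i$ is strictly increasing in $s_i$ and nonnegative, $v_i(\s)=0$ forces $s_i=0$, and then no-overbidding already gives $b_i=0=s_i$, so every bidder is truthful and your allocation-monotonicity detour is not needed.
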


Before proving Proposition~\ref{pro:BB-PNE-POA}, we prove the following claim.

	\begin{claim}
		\label{lem:EPIC+NOB:DSIC}
		If a mechanism  $\mathcal M$ is universally ex-post IC-IR and satisfies the no-overbidding assumption, then  $\mathcal M$ is DSIC.  
	\end{claim}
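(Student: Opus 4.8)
The plan is to reduce to the deterministic case and then read off the conclusion from the characterization of deterministic ex-post IC-IR mechanisms recalled earlier in the paper. Since $\mathcal M$ is, by definition, a distribution over deterministic ex-post IC-IR mechanisms, and since a report that weakly dominates in every realization also weakly dominates in expectation, it suffices to establish the statement for a single deterministic ex-post IC-IR mechanism $M$: under the no-overbidding assumption, for every bidder $i$ and every no-overbidding partial bid profile $\bb_{-i}$ of the others, reporting her true signal $s_i$ is a best response.

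Fix such an $i$, the true signal profile $\bs=(s_i,\snoi)$, and $\bb_{-i}$ with $\bb_{-i}\le\snoi$. I would invoke the characterization recalled right after the definition of ex-post IC-IR: the allocation $x_i(\cdot,\bb_{-i})$ is monotone, so there is a critical bid $b_i^{\ast}=b_i^{\ast}(\bb_{-i})$ such that $i$ wins iff her report is at least $b_i^{\ast}$; when $i$ wins she pays $v_i(b_i^{\ast},\bb_{-i})$, a quantity independent of her own report, and when she loses she pays $0$. Then I would split on whether $s_i\ge b_i^{\ast}$.

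If $s_i\ge b_i^{\ast}$, bidding $s_i$ wins and yields utility $v_i(\bs)-v_i(b_i^{\ast},\bb_{-i})$, which is nonnegative since $b_i^{\ast}\le s_i$ and $\bb_{-i}\le\snoi$ imply $v_i(b_i^{\ast},\bb_{-i})\le v_i(\bs)$ by monotonicity of $v_i$. Any other no-overbidding report $b_i\le s_i$ either still clears $b_i^{\ast}$ (so $i$ wins and pays the same $v_i(b_i^{\ast},\bb_{-i})$, leaving her utility unchanged) or falls below $b_i^{\ast}$ (so $i$ loses, obtaining $0\le$ her truthful utility). If $s_i<b_i^{\ast}$, bidding $s_i$ — or, by monotonicity of the allocation, any $b_i\le s_i$ — loses and gives $0$, and no no-overbidding deviation can turn $i$ into a winner. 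In both cases reporting $s_i$ weakly dominates every no-overbidding report, which proves the claim for $M$ and hence, by the mixture argument, for $\mathcal M$.

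I expect the only subtle point to be the insistence that $i$'s own deviations also respect no-overbidding: if $i$ could overbid, then in the case $s_i<b_i^{\ast}$ she might bid above $b_i^{\ast}$, win, and pay $v_i(b_i^{\ast},\bb_{-i})$, and this need not exceed $v_i(s_i,\snoi)$ — the critical bid is larger than $s_i$, but the others' reports underestimate $\snoi$, so the two values are incomparable in general. Hence the no-overbidding restriction on $i$'s own reports is essential to the statement; granting it, the argument uses only weak monotonicity of the valuations and the standard critical-value / zero-loser-payment form of ex-post IC-IR payments.
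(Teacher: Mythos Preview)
Your proposal is correct and follows essentially the same approach as the paper: reduce to a deterministic ex-post IC-IR mechanism, invoke the monotone-allocation/critical-bid characterization, and do a small case analysis showing that under NOB truth-telling weakly dominates every report $b_i\le s_i$. The only cosmetic difference is that the paper splits cases according to whether $i$ wins at the deviation $\bb$ whereas you split according to whether $i$ wins at the truthful report $(s_i,\bb_{-i})$; the resulting inequalities are identical.
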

	
	\begin{proof}	
		Since every universally ex-post IC-IR mechanism is a probability distribution over deterministic ex-post IC-IR mechanisms, it is sufficient to prove the lemma for deterministic $\mathcal M$.
		Let $M=(x,p)$ be a deterministic ex-post IC-IR mechanism. Recall that $x$ is monotone and the payment rule $p$ is unique, and consist with the critical payment rule which determined only by the bids of the competing bidders (for more detail, see the characterization of ex-post IC-IR mechanisms in \cref{sec:challenges}).  
		Let $\bb$ be a bid profile. For every bidder $i$, denote $p_i(\bb_{-i})$ her payment according to $\bb$.
		
		Consider two cases:
		\begin{enumerate}
		\item $i$ wins under $\bb$, i.e., $x_i(\bb) = 1$. By monotonicity and no-overbidding, $x_i(s_i,\bb_{-i}) = 1$. It holds that $u_i((s_i, \bb_{-i});\bs) = v_i(\bs) - p_i(\vec b_{-i}) = u_i((b_i, \bb_{-i}); \bs)$.
		
		\item $i$ loses under $\bb$, i.e., $x_i(\bb) = 0$. If $x_i(s_i, \bb_{-i}) = 0$, then $u_i((s_i, \bb_{-i}); \bs) = 0 = u_i({\bb};\bs)$. Thus, it suffices to consider the scenario where $x_i(s_i, \bb_{-i}) = 1$.  By the monotonicity of $x$, the critical payment rule and by the monotonicity of the valuations we get $v_i(s_i, {\bb_{-i}}) \geq v_i(b_i^*, {\bb_{-i}}) = p_i(s_i, {\bb_{-i}})$.  Therefore ,$u_i((s_i, \bb_{-i}); \bs) = v_i(s_i, {\bb_{-i}}) - v_i(b_i^*, {\bb_{-i}}) \geq 0 = u_i(\bb; \bs)$.
		\end{enumerate}
		In both cases, $u_i((s_i, \bb_{-i}); \bs) \geq u_i(\bb;\bs)$.\\
	\end{proof}

\begin{proof}[Proof of \cref{pro:BB-PNE-POA}]
	Let $\mathcal M$ be a universally ex-post IC-IR mechanism. By \cref{lem:EPIC+NOB:DSIC}, $\mathcal M$, under no-overbidding, is in fact DSIC.
	We show that in $\mathcal M'$, truth-telling is the unique no-overbidding PNE. 
	For every bidder $i$, signal profile $\bs$ and bid profile $\bb_{-i}$, bidding $b_i=s_i$ is a best response under no-overbidding, whereas any bid $b_i < s_i$ is not a best response:
	\begin{eqnarray}
	{u_i^{\mathcal M'}((b_i,\bb_{-i});\bs)}
	&=& (1-\epsilon) {u_i^{\mathcal M}((b_i, {\bf b}_{-i});\bs)} + \epsilon\frac{b_i}{\sum_jb_j + 1} v_i(\bs) \nonumber\\
	& \leq & (1-\epsilon) {u_i^{\mathcal M}((s_i, {\bf b}_{-i});\bs)} + \epsilon\frac{b_i}{\sum_jb_j + 1} v_i(\bs) \label{eq:BB-PNE-POA:DSIC}\\
	& < & (1-\epsilon) {u_i^{\mathcal M}((s_i, {\bf b}_{-i});\bs)} + \epsilon\frac{s_i}{\sum_{j \neq i} b_j + s_i + 1} v_i(\bs) \ = \ {u_i^{\mathcal M'}((s_i,{\bf b}_{-i});\bs)},\nonumber 
	\end{eqnarray}
	where Inequality~\eqref{eq:BB-PNE-POA:DSIC} follows $\mathcal M$ is DSIC. 
	The truth-telling PNE yields at least $(1-\epsilon)/\alpha$ of the optimal welfare, and so $\npoa \le \frac{\alpha}{1-\epsilon}$.
\end{proof}



\end{document}